\newtheorem{thm}{Theorem}[section]
\newtheorem{cor}[thm]{Corollary}
\newtheorem{lem}[thm]{Lemma}
\newtheorem{prop}[thm]{Proposition}
\theoremstyle{definition}
\newtheorem{defn}[thm]{Definition}
\theoremstyle{remark}
\newtheorem{rem}[thm]{Remark}
\numberwithin{equation}{section}
\newtheorem*{thmI}{\textbf{\emph{Theorem I}}}
\newtheorem*{thmII}{\textbf{\emph{Theorem II}}}
\newtheorem*{thmIII}{\textbf{\emph{Theorem III}}}
\newcommand{\RR}{\mathbb R}
\newcommand{\GR}{\mathbb{R}}
\newcommand{\TT}{\mathcal{T}}
\newcommand{\OO}{\mathcal{O}}
\newcommand{\II}{\mathcal{I}}
\newcommand{\CC}{\mathrm{Z}}
\newcommand{\DD}{\mathrm{D}}
\newcommand{\sot}{\mathrm{SO}(3)}
\newcommand{\sud}{\mathrm{SU}(2)}
\newcommand{\ode}{\mathrm{O}(2)}
\newcommand{\otr}{\mathrm{O}(3)}
\newcommand{\sod}{\mathrm{SO}(2)}
\newcommand{\id}{\mathds{1}}
\newcommand{\xx}{\mathbf{x}}
\newcommand{\ve}{\mathbf{v}}
\newcommand{\yy}{\mathbf{y}}
\newcommand{\ii}{\mathbf{i}}
\newcommand{\jj}{\mathbf{j}}
\newcommand{\kk}{\mathbf{k}}
\newcommand{\QQ}{\mathbf{Q}}
\newcommand{\rr}{\mathbf{r}}
\newcommand{\GL}{\mathrm{GL}}     
\newcommand{\Ela}{\mathbb{E}\mathrm{la}}    
\newcommand{\beq}{\begin{equation}}
\newcommand{\eeq}{\end{equation}}
\newcommand{\ben}{\begin{equation*}}
\newcommand{\een}{\end{equation*}}
\newcommand{\ba}{\begin{eqnarray}}
\newcommand{\ea}{\end{eqnarray}}
\newcommand{\ban}{\begin{eqnarray*}}
\newcommand{\ean}{\end{eqnarray*}}
\newcommand{\Listerd}{$\bullet$}
\newenvironment{listerd}%
	{\begin{list}{\Listerd}{}}%
	{\end{list}}
\begin{document}

\title{Symmetry classes for even-order tensors}%

\author{M. Olive}
\address{LATP, CNRS \& Universit\'{e} de Provence, 39 Rue F. Joliot-Curie, 13453 Marseille Cedex 13, France}
\email{molive@cmi.univ-mrs.fr}

\author{N. Auffray}
\address{LMSME, Universit\'{e} Paris-Est, Laboratoire Mod\'{e}lisation et Simulation Multi Echelle,MSME UMR 8208 CNRS, 5 bd Descartes, 77454 Marne-la-Vall\'{e}e, France}
\email{Nicolas.auffray@univ-mlv.fr}

\subjclass[2010]{20C35,74B99, 15A72}%
\keywords{Anisotropy, Symmetry classes, Higher order elasticity}%

\date{\today}%
\begin{abstract}

The purpose of this article is to give a complete and general answer to the recurrent problem in continuum mechanics of the determination of the number and the type of symmetry classes of an even-order tensor space. This kind of investigation was initiated for the space of elasticity tensors. Since then, different authors solved this problem for other kinds of physics such as photoelectricity, piezoelectricity, flexoelectricity, and strain-gradient elasticity. All the aforementioned problems were treated by the same computational method. Although being effective, this method suffers the drawback not to provide general results. And, furthermore, its complexity increases with the tensorial order. In the present contribution, we provide general theorems that directly give the sought results for any even-order constitutive tensor. As an illustration of this method, and for the first time, the symmetry classes of all even-order tensors of Mindlin second strain-gradient elasticity are provided.
\end{abstract}

\maketitle

%

\section{Introduction}

\subsection{Physical motivation}

Within these last years, an increasing interest in generalized continuum theories \cite{For98,DSV09,LS11,DSM12} has been observed. These works based on the pioneering works of \cite{Min64,Min65,Tou62} propose extended kinematic formulation in order to take into account size effects within the continuum. The price to pay is the appearance in the constitutive relations of tensors of order greater than $4$. These higher order objects are difficult to handle and extracting physical meaningful information is not straightforward.
The aim of this paper is to provide general results concerning the type and number of anisotropic systems an even-order tensor can have.\\

\textbf{Such results have important applications, at least, for the modeling and the numerical implementation of non classical linear constitutive laws:}
\begin{description}
\item[Modeling]\textbf{ The stakes of modeling is given a material, and a set of physical variables of interest to construct the more general (linear, in the present context) constitute law that describes the behavior of that material. An example of such a method is provided by Thionnet and Martin \cite{TM06} where, given a set of variables $V$ and the material symmetry group $S$, they derive mechanical behavior laws using the Theory of Invariants and Continuum Thermodynamics.
In such perspective our results will say, without making any computation whether or not $S$ is contained in the set of symmetry classes of $\mathcal{L}(v,v')$ the space of linear applications from $v\in V$ to $v'\in V$.}

\item[Numerical implementation]\textbf{ To implement a new linear constitutive law in a finite-element code, one has to known the complete set of matrices needed to model the associated anisotropic behavior. In that perspective, our result in a precious guideline as it tells you the number of the sought matrices, and the way to construct them. Such a way to proceed is illustrated, for example, in the case of $3$D strain gradient elasticity in \cite{ALH12}.}
\end{description}

\subsubsection*{Constitutive tensors symmetry classes}

In the field of mechanics, constitutive laws are usually expressed in terms of tensorial relations between the gradient of primary variables and their fluxes \cite{GH11}. As it is well-known, this feature is not restricted to linear behaviors since tensorial relations appear in the tangential formulation of non-linear ones \cite{TB96}.
It is also known that a general tensorial relation can be divided into classes according to its symmetry properties. Such classes are referred  to as symmetry classes in the field of mechanics \cite{FV96}, or as isotropic classes (or strata) in the field of mathematical physics \cite{AS83,AKP12}.

In the case of second order tensors, the determination of symmetry classes is rather simple. According to a spectral analysis it can be concluded that any second-order symmetric tensor\footnote{Such a tensor is related to a symmetric matrix, which can be diagonalize in an orthogonal basis. The  stated result is related to this diagonalization.} can either be orthotropic ($[\DD_{2}]$), transverse isotropic ($[\ode]$), or isotropic ($[\sot]$). Such kind of tensors are commonly used to describe, e.g., heat conduction, electric permittivity.
For higher order tensors, the determination of the set of symmetry classes is more involved, and is mostly based on an approach introduced by Forte and Vianello \cite{FV96} in the case of elasticity. Let us briefly detail this case.\\

The vector space of elasticity tensors, which will be noted $\Ela$ throughout this paper, is the subspace of 4th-order tensors endowed with the following index symmetries:
\begin{description}
\item[Minor symmetries]: $E_{ijkl}=E_{jikl}=E_{jilk}$
\item [Major symmetry]: $E_{ijkl}=E_{klij}$
\end{description}
The symmetries will be denoted using the following notation: $E_{\underline{(ij)}\ \underline{(kl)}}$, where $(..)$ indicates the invariance under the in-parenthesis indices permutations, and  $\underline{..}\ \underline{..}$ the invariance with respect to the underlined blocks permutations.
Index symmetries encode the physics described by the mathematical operator. \textbf{On one hand, the minor symmetries stem from the fact that rigid body motions do not induce deformation (symmetry of $\varepsilon$), and that the material is not subjected to volumic couple (symmetry of $\sigma$). And, on the other hand, the major symmetry is the consequence of the existence of a free energy.}
An elasticity tensor, $\mathbf{E}$, can be viewed as a symmetric linear operator on $\mathbb{T}_{(ij)}$, the space of symmetric second order tensors.
According to Forte and Vianello \cite{FV96}, for the classical action of $\sot$, $\Ela$ is divided into the following $8$ symmetry classes:
\ben
[\Ela]=\{[\id],[\CC_2],[\DD_2], [\DD_3], [\DD_4], [\ode] ,[\OO],[\sot]\}
\een 
which correspond, respectively, to the following physical classes\footnote{These symmetry classes are subgroups of the spatial group of rotations  $\sot$. This is due to the fact that the elasticity tensor is even-order. To treat odd-order tensors, the full orthogonal group $\otr$ has to be considered.} : triclinic ($[\id]$), monoclinic ($[\CC_2]$), orthotropic ($[\DD_2]$), trigonal ($[\DD_3]$), tetragonal ($[\DD_4]$), transverse isotropic ($[\ode]$), cubic ($[\OO]$) and isotropic ($[\sot]$). The mathematical notations used  above will be detailed in \autoref{ss:OtrSub}. 
Besides this fundamental result, the interest of the Forte and Vianello paper was to provide a general method to determine the symmetry classes of any tensor space \cite{AKP12}. Since then, using this method, other results have been obtained:
\ben
\begin{tabular}{|c|c|c|c|c|}
  \hline
  Property&Tensor&Number of classes&Action&Studied in\\ \hline\hline
  Photoelelasticity & $T_{(ij)(kl)}$ & 12 &$\sot$ &  \cite{FV97} \\ \hline
  Piezoelectricity & $T_{(ij)k}$& 15&$\otr$&\cite{GW02}\\ \hline
  Flexoelectricity& $T_{(ij)kl}$& 12&$\sot$&\cite{LH11}\\ \hline
  A set of $6$-th order tensors&$T_{ijklmn}$&14 or 17&$\sot$&\cite{LAH+12} \\
  \hline
\end{tabular}
\een


\subsubsection*{The limitations of the Forte-Vianello approach}
The method introduced by Forte and Vianello is actually the more general one\footnote{Some other methods can be found in the literature, such as counting the symmetry planes\cite{CVC01}, or studying the $\sud$-action on $\Ela$ \cite{BBS04}, $\ldots$ but these methods are difficult to generalize to any kind of vector space.}. But, in the same time, it suffers from at least two limitations:
\begin{enumerate}
\item The computation of the harmonic decomposition;
\item The specificity of the study for each kind of tensor.
\end{enumerate}

In its original setting, the method requires the computation of the explicit harmonic decomposition of the studied tensor, i.e. its decomposition into the sum of its $\sot$-irreducible components, also known as harmonic tensors\footnote{Harmonic tensors are completely symmetric and traceless, they inherit this name because of a well-known isomorphism in $\RR^{3}$ between these tensors and harmonic polynomials \cite{Bac70}.}. Its explicit computation, which is generally based on an algorithm introduced by Spencer \cite{Spe70}, turns out to be intractable in practice as the tensorial order increases. But, this is not a real problem, since the only information needed is the number of different harmonic tensors of each order appearing in the decomposition, i.e. the isotypic decomposition. Based on arguments presented in \cite{JCB78}, there exists a direct procedure to obtain this isotypic decomposition from the tensor index symmetries \cite{Auf08}. Such an approach has been used in \cite{LAH+12} to obtain the symmetry classes of 6th-order tensors.\\

As each kind of tensor space requires a specific study, this specificity constitutes the  other limitation of the method. This remark has to be considered together with the observation that, for even-order tensors it seems that there exists, indeed, only two possibilities. Precisely, a tensor space has as many classes as:
\begin{itemize}
\item the full symmetric tensor space: e.g. $\Ela$ is divided into 8 classes as the full symmetric one \cite{FV96};
\item the generic tensor space\footnote{The $n$th-order generic tensor is a $n$th-order tensor with no index symmetry.}: any other 4th-order tensor space (photoelasticity \cite{FV97}, flexoelectricity \cite{LH11},...) is divided into $12$ classes such as the generic one.
\end{itemize}
The same observation can also be made for $2$nd- and $6$th-order tensors \cite{LAH+12}. Understanding what is the general rule behind this observation would be an important result in mechanics. Its practical implication is the direct determination of the number and the type of symmetry classes for any constitutive laws no matter their orders.
This result is of valuable importance to understand the feature of generalized continuum theories, in which higher order tensors are involved in constitutive laws.

\subsection{Organization of the paper}
In \autoref{s:MaiRes}, the mains results of this paper: \textbf{Theorems I, II} and \textbf{III} are stated. As an application, the symmetry classes of the even-order constitutive tensor spaces of Mindlin second strain-gradient elasticity are determined. Results concerning the  $6$th-order coupling tensor and the  $8$th-order second strain-gradient tensor are given for the first time. 
Obtaining the same results with the Forte-Vianello approach would have been much more difficult.
Other sections are dedicated to the construction of our proofs. In \autoref{s:GenFra}, the mathematical framework used to obtain our result is introduced. Thereafter, we study the symmetry classes of a couple of harmonic tensors, which is the main purpose of the tool named \textit{clips operator}. We then give the associated results for couples of $\sot$-closed subgroups (theorem \ref{thm:clipssot} and table \ref{global}). Thanks to these results, and with the help of some previous works on that topic done by Golubistky and al. \cite{GSI84}, we obtain in \autoref{lastone} some general results concerning symmetry classes for general even-order tensors. In \autoref{IsoConst} our main results are finally proved. The appendix is devoted to proofs and calculus of clips operations. 

\section{Main results}\label{s:MaiRes}

In this section, our main results are stated.
In the first subsection, the construction of \emph{Constitutive Tensor Spaces} (CTS in the following) is discussed. This construction allows us to formulate our main results in the next subsection. Finally, application of these results to Mindlin Second Strain-Gradient Elasticity (SSGE in the following) is considered. It worth noting that precise mathematical definitions of the symmetry classes are given in \autoref{s:GenFra}. 


\subsection{Construction of CTS}

Linear constitutive laws are linear applications between the gradient of primary physical quantities and theirs fluxes. Each of these physical quantities (see table~\ref{tab:CTSStructure}) are in fact related to subspaces\footnote{because of some symmetries} of tensors spaces: theses subspaces will be called \textit{State Tensor Spaces} (STS in the following). These STS will be the primitive notion from which the CTS will be constructed. \textbf{In the following, $\mathcal{L}(F,G)$ will indicate the vector space of linear application from $F$ to $G$.}



\begin{table}[h]
\begin{tabular}{|c|c|c|}
\hline
Physical notions & Mathematical object & Mathematical space \\ \hline \hline
Gradient & Tensor state $\mathbf{T}_1\in \otimes^p\RR^3$ & $\mathbb{T}_G$: tensor space with index symmetries \\ 
\hline
Fluxes of gradient & Tensor state $\mathbf{T}_2\in \otimes^q\RR^3$ & $\mathbb{T}_f$: tensor space with index symmetries \\
\hline
Linear constitutive law & $C\in \mathcal{L}(\mathbb{T}_G,\mathbb{T}_f)$ & $\mathbb{T}_C\subset \mathcal{L}(\mathbb{T}_G,\mathbb{T}_f)$ \\
\hline
\end{tabular}\caption{Physical and mathematical links}\label{tab:CTSStructure}
\end{table}

Let consider now two STS : $\mathbb{E}_{1}=\mathbb{T}_G$ and $\mathbb{E}_{2}=\mathbb{T}_f$, which are $p$th- and $q$th-order STS with possibly index symmetries. As a consequence, they belong to subspaces of $\otimes^{p}\RR^{3}$ and  $\otimes^{q}\RR^{3}$.
\textit{A constitutive tensor} $C$ is a linear application between $\mathbb{E}_{1}$ and $\mathbb{E}_{2}$, i.e. an element of the space $\mathcal{L}(\mathbb{E}_{1},\mathbb{E}_{2})$. This space is isomorphic, modulo the use of an euclidean metric, to $\mathbb{E}_{1}\otimes \mathbb{E}_{2}$. Physical properties leads to some index symmetries on $\mathbf{T}\in \mathbb{E}_1\otimes \mathbb{E}_2$; thus the vector space of such $\mathbf{T}$ is some vector subspace $\mathbb{T}_C$ of $\mathbb{E}_1\otimes \mathbb{E}_2$.
  
Now, each of the spaces $\mathbb{E}_1$, $\mathbb{E}_2$ and $\mathbb{E}_1\otimes \mathbb{E}_2$ has a natural $\otr$ action. In this paper, we are concerned with cases in which $p+q=2n$. In such a situation, it is known that the $\otr$-action on $\mathbb{E}_{1}\otimes \mathbb{E}_{2}$ reduces to the one of $\sot$ \cite{FV96}.
\noindent We therefore have
\begin{equation*}
\mathcal{L}(\mathbb{E}_{1},\mathbb{E}_{2})\simeq \mathbb{E}_1\otimes \mathbb{E}_2\subset \mathbb{T}^{p}\otimes\mathbb{T}^{q}=\mathbb{T}^{p+q=2n}
\end{equation*}
Examples of such constructions are provided in the following table:
\ben
\begin{tabular}{|c|c|c|c|c|}
  \hline
  Property&$\mathbb{E}_{1}$&$\mathbb{E}_{2}$&Tensor product for CTS& Number of classes \\ \hline\hline
  Elasticity & $\mathbb{T}_{(ij)}$&$\mathbb{T}_{(ij)}$ & Symmetric & 8 \\ \hline
  Photoelelasticity & $\mathbb{T}_{(ij)}$&$\mathbb{T}_{(ij)}$ & Standard & 12 \\ \hline
  Flexoelectricity& $\mathbb{T}_{(ij)k}$&$\mathbb{T}_{i}$ & Standard & 12 \\ \hline
  First-gradient elasticity& $\mathbb{T}_{\underline{(ij)k}}$ &$\mathbb{T}_{\underline{(ij)k}}$& Symmetric & 17 \\ \hline
\end{tabular}
\een
In the former table, two kinds of constitutive tensor spaces appeared whether they describe
\begin{itemize}
\item Coupled physics : such tensors encode the coupling between two different physics, such as photoelasticity and flexoelectricity;
\item Proper physics  : such tensors describe solely one physical phenomenon, such as classical and first-gradient elasticities.
\end{itemize}
On a mathematical side this implies :
\begin{itemize}
\item Coupled physics : the spaces $\mathbb{E}_{1}$ and $\mathbb{E}_{2}$ may differ, and when $\mathbb{E}_{1}=\mathbb{E}_{2}$ linear applications are not self-adjoint;
\item Proper physics  : we have $\mathbb{E}_{1}=\mathbb{E}_{2}$ and linear applications are self-adjoint\footnote{This is a consequence of the assumption of the existence of a free energy.}.
\end{itemize}
Therefore, the \textit{elasticity tensor} is a self-adjoint linear application between the vector space of deformation tensors and the vector space of stress tensors. These two spaces are modeled on $\mathbb{T}_{(ij)}$. The vector space of elasticity tensors is therefore completely determined by $\mathbb{T}_{(ij)}$ and the symmetric nature of the tensor product, i.e. $\Ela=\mathbb{T}_{(ij)}\otimes^{S}\mathbb{T}_{(kl)}$ where $\otimes^{S}$ denotes the symmetric tensor product.
On the side of coupling tensors, the flexoelectricity is a linear application between $\mathbb{E}_{1}=\mathbb{T}_{(ij)k}$, the space of deformation gradient, and $\mathbb{E}_{2}=\mathbb{T}_{l}$ the electric polarization, therefore $\mathbb{F}\mathrm{lex}=\mathbb{T}_{(ij)k}\otimes\mathbb{T}_{l}$.

\subsection{Symmetry classes of even order tensor spaces}\label{ss:OtrSub}

Let consider an even order constitutive tensor space $\mathbb{T}^{2n}$, it is known \cite{JCB78} that this space can be decomposed orthogonally\footnote{The related dot product is constructed by $2n$ products of the $\mathbb{R}^{3}$ canonical one.} into a full symmetric space and a complementary one which is isomorphic to a tensor space of order $2n-1$, i.e. :
\begin{equation*}
\mathbb{T}^{2n}=\mathbb{S}^{2n}\oplus \mathbb{C}^{2n-1}
\end{equation*}
Let us introduce :
\begin{description}
\item [$\mathbb{S}^{2n}$] the vector space of $2n$th-order completely symmetric tensors;
\item [$\mathbb{G}^{2n}$] the vector space of $2n$th-order tensors with no index symmetries\footnote{Formally this space is constructed as $\mathbb{G}^{2n}=\otimes^{2n}\mathbb{R}^3$.}.
\end{description}
The following observation is obvious :
\ben
\mathbb{S}^{2n}\subseteq\mathbb{T}^{2n}\subseteq\mathbb{G}^{2n}
\een
and therefore, if we note $\mathfrak{I}$ the operator which to a tensor space gives the set of its symmetry classes, we obtain:
\ben
\mathfrak{I}(\mathbb{S}^{2n})\subseteq\mathfrak{I}(\mathbb{T}^{2n})\subseteq\mathfrak{I}(\mathbb{G}^{2n})
\een
Symmetry group of even-order tensors are conjugate to $\sot$-closed subgroups \cite{ZB94, FV96}. The classification of $\sot$-closed subgroups is a classical result that can be found in many references \cite{GSI84,Ste94}. These subgroups are, up to conjugacy,: 
\begin{lem}\label{sotrois}
Every closed subgroup of $\sot$ is conjugate to precisely one group of the following list: 
\begin{equation*}
\{\id,\: \CC_n,\: \DD_n,\: \TT,\: \OO,\: \II,\: \sod,\: \ode,\: \sot\} 
\end{equation*}
\end{lem}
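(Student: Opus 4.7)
The plan is to split the classification according to the identity component $G^{0}$, which is a closed connected normal subgroup of $G$ of finite index (finite because a closed subgroup of the compact group $\sot$ is itself compact). I would first classify closed connected subgroups of $\sot$ at the Lie algebra level: $\mathfrak{so}(3)\simeq\RR^{3}$ endowed with the cross-product bracket admits no $2$-dimensional subalgebra, since the bracket of two linearly independent vectors is independent of both. Hence the only Lie subalgebras have dimension $0$, $1$, or $3$, and after exponentiation the only closed connected subgroups of $\sot$, up to conjugacy, are $\{\id\}$, $\sod$, and $\sot$ (all lines in $\RR^{3}$ being $\sot$-equivalent).

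Next I would treat the three cases for $G^{0}$. If $G^{0}=\sot$ then $G=\sot$ by connectedness. If $G^{0}=\sod$, realized as the stabilizer of an axis $\ell$, any $g\in G\setminus G^{0}$ must normalize $G^{0}$ and therefore preserve $\ell$ setwise; being an element of $\sot$, such a $g$ is forced to be a rotation by $\pi$ about an axis perpendicular to $\ell$. Consequently $G/G^{0}$ has order at most $2$, yielding $G\simeq\sod$ or $G\simeq\ode$.

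The main obstacle is the case $G^{0}=\{\id\}$, where $G$ is finite. Here I would invoke Klein's classical pole-counting argument: every $g\in G\setminus\{\id\}$ fixes exactly two antipodal points on $S^{2}$ (its poles), and $G$ acts on the finite set $P$ of all poles. Double-counting the pairs $(g,p)$ with $g\neq\id$ and $g\cdot p=p$ yields the Diophantine constraint
\begin{equation*}
2-\frac{2}{N}=\sum_{i=1}^{k}\Bigl(1-\frac{1}{n_{i}}\Bigr),
\end{equation*}
where $N=|G|$, $k$ is the number of $G$-orbits in $P$, and $n_{i}\geq 2$ is the stabilizer order on the $i$-th orbit. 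The finitely many positive integer solutions force $k\leq 3$ and yield exactly the families $\CC_{n}$ (from $k=2$ with $(n,n)$), $\DD_{n}$ (from $k=3$ with $(2,2,n)$), and the three exceptional triples $(2,3,3)$, $(2,3,4)$, $(2,3,5)$, corresponding to $\TT$, $\OO$, $\II$ respectively.

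Finally I would verify that in each case the embedding of the abstract group in $\sot$ is unique up to $\sot$-conjugacy — for a given combinatorial pole type one rotates a maximal-order axis onto a fixed axis and then matches the remaining orbits — and that the listed groups are pairwise non-conjugate, being distinguishable by their orders and by the orders of their maximal cyclic subgroups. The finite case is the delicate part; the connected and ``extension by reflection'' steps are short by comparison.
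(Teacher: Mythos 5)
Your proof is correct, but note that the paper itself does not prove Lemma~\ref{sotrois} at all: it is quoted as a classical result with references to Golubitsky--Stewart--Schaeffer and Sternberg \cite{GSI84,Ste94}. What you have written is essentially the standard textbook argument contained in those references: reduce to the identity component $G^{0}$ (using that $\mathfrak{so}(3)$ has no $2$-dimensional subalgebra, since $u\times v\perp\operatorname{span}\{u,v\}$, so the connected options are $\id$, $\sod$, $\sot$), handle the extension $G/G^{0}$ of order at most $2$ via the normalizer of $\sod$, and settle the finite case by Klein's pole-counting identity $2-\tfrac{2}{N}=\sum_{i}\bigl(1-\tfrac{1}{n_{i}}\bigr)$, whose solutions $(n,n)$, $(2,2,n)$, $(2,3,3)$, $(2,3,4)$, $(2,3,5)$ give $\CC_{n}$, $\DD_{n}$, $\TT$, $\OO$, $\II$. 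All of these steps are sound, including the observation that a coset representative outside $\sod$ must be a $\pi$-rotation about a perpendicular axis, forcing index $\leq 2$. The only place you are genuinely sketchy is the final step: the pole-counting argument a priori yields only the orbit data $(N;n_{1},\dots,n_{k})$, and one must still show that this combinatorial type determines the subgroup of $\sot$ up to conjugacy (e.g., for type $(2,3,4)$, that the group really is the full rotation group of an octahedron and that any two such are conjugate). Your one-line recipe --- rotate a maximal-order axis into standard position and match the remaining pole orbits --- is the right idea and is carried out in the cited sources, but as written it is an assertion rather than a proof; spelling it out for the three exceptional types would close the argument completely.
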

\noindent Among these groups, we can distinguish:
\begin{description}
\item[planar groups] : $\{\id,\: \CC_n,\: \DD_n,\: \sod,\: \ode\}$ which are $\ode$-closed subgroups;
\item [exceptional groups]: $\{\TT,\: \OO,\: \II,\: \sot\}$  which are the symmetry groups of platonician polyhedrons.
\end{description}
Let us detail first the set of planar subgroups. If we fix a base $(\ii;\jj;\kk)$ of $\GR^3$, and note $\QQ(\ve;\theta)\in \sot$ the rotation about $\ve\in \GR^3$ and of angle $\theta \in [0;2\pi[$ we have
\begin{listerd}
\item $\id$ the identity;
\item $\CC_n$ ($n\geq 2$) the cyclic group of order $n$, generated by the $n$-fold rotation $\QQ(\kk;\theta=\frac{2\pi}{n})$, is the symmetry group of a chiral polygon;
\item $\DD_n$ ($n\geq 2$) the dihedral group of order $2n$ generated by $\CC_n$ and $\QQ(\ii;\pi)$, is the symmetry group of a regular polygon;
\item $\sod$ the subgroup of rotations $\QQ(\kk;\theta)$ with $\theta \in [0;2\pi[$; 
\item $\ode$ the subgroup generated by $\sod$ and $\QQ(\ii;\pi)$.
\end{listerd}
Classes of exceptional subgroups are: $\TT$ the tetrahedral group of order $12$ which fixes a tetrahedron, $\OO$ the octahedral group of order $24$ which fixes an octahedron (or a cube), and $\II$ the subgroup of order $60$ which fixes an icosahedron (or a dodecahedron).

\noindent In \autoref{IsoConst}, the symmetry classes of $\mathbb{S}^{2n}$ and $\mathbb{G}^{2n}$ are obtained: 
\begin{lem}
The symmetry classes of $\mathbb{S}^{2n}$ are :  
\begin{eqnarray*}
\mathfrak{I}(\mathbb{S}^{2})&=&\lbrace [\DD_2],[\ode],[\sot]\rbrace\\
\mathfrak{I}(\mathbb{S}^{4})&=&\lbrace [\id],[\CC_2],[\DD_2],[\DD_3],[\DD_{4}],[\ode],[\OO],[\sot]\rbrace\\
n\geq3,\ \mathfrak{I}(\mathbb{S}^{2n})&=&\lbrace [\id],[\CC_2],\cdots,[\CC_{2(n-1)}],[\DD_2],\cdots,[\DD_{2n}],[\ode],[\TT],[\OO],[\II],[\sot]\rbrace
\end{eqnarray*}
\end{lem}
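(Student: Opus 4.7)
The approach I would take is to exploit the $\sot$-isotypic (harmonic) decomposition
\[
\mathbb{S}^{2n}\;\cong\;\mathbb{H}^{0}\oplus \mathbb{H}^{2}\oplus \mathbb{H}^{4}\oplus\cdots\oplus \mathbb{H}^{2n},
\]
which is a classical consequence of repeated trace extraction and involves only even-order harmonic spaces $\mathbb{H}^{2k}$ of dimension $4k+1$. By the general principle encoded in theorem \ref{thm:clipssot}, the set $\mathfrak{I}(\mathbb{S}^{2n})$ is then the iterated clips of the sets $\mathfrak{I}(\mathbb{H}^{2k})$, $k=0,\dots,n$. The proof thus splits into two substeps: determining $\mathfrak{I}(\mathbb{H}^{2k})$ for each $k$, and assembling the pieces via the clips table \ref{global}.

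For the first substep I would invoke the classical Ihrig--Golubitsky classification of isotropy classes of $\sot$ on a single harmonic module. The input I need is: $\mathfrak{I}(\mathbb{H}^{0})=\{[\sot]\}$, $\mathfrak{I}(\mathbb{H}^{2})=\{[\DD_{2}],[\ode],[\sot]\}$, $\mathfrak{I}(\mathbb{H}^{4})$ equal to the eight Forte--Vianello classes, and for $k\geq 3$ a set $\mathfrak{I}(\mathbb{H}^{2k})$ that contains $[\id]$, $[\ode]$, the three exceptional classes $[\TT],[\OO],[\II]$, the dihedral classes $[\DD_{j}]$ for $2\leq j\leq 2k$, and the cyclic classes $[\CC_{j}]$ for $2\leq j\leq k$. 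Dihedral witnesses are provided by elements of the form $\mathrm{Re}(x+iy)^{2k}$ and by mixtures with zonal harmonics; the cyclic range requires the $\CC_{j}$-fixed subspace of $\mathbb{H}^{2k}$ to have dimension at least $5$, which (after quotienting by the Weyl group $\ode/\CC_{j}$) forces the bound $j\leq k$.

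The second substep is the clips assembly. For $n=1,2$ the computation is immediate and matches the stated list: in particular, for $n=2$ the Forte--Vianello classes are reproduced and no exceptional class $[\TT]$ or $[\II]$ can be introduced because none appears in the inputs. For $n\geq 3$, the exceptional classes $[\TT],[\OO],[\II]$ enter through $\mathbb{H}^{6}$ and persist in all higher components, so the non-cyclic/dihedral part of the list stabilizes. The top dihedral class $[\DD_{2n}]$ comes directly from $\mathbb{H}^{2n}$, while the top cyclic class $[\CC_{2(n-1)}]$ is produced by a clips between two dihedral classes with unaligned axes, whose result can be read off from table \ref{global}.

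The main obstacle is the bookkeeping in this last step: one has to check simultaneously that every class in the stated list is realized by some element of $\mathbb{S}^{2n}$, and, conversely, that no clips between two classes in $\bigcup_{k}\mathfrak{I}(\mathbb{H}^{2k})$ produces a group outside the list. Both points follow by inspection of table \ref{global}, but the uniformity in $n\geq 3$ — in particular the precise upper bounds $[\CC_{2(n-1)}]$ for cyclic classes and $[\DD_{2n}]$ for dihedral classes, rather than $[\CC_{2n}]$ or $[\DD_{2n+2}]$ — requires a careful combinatorial argument relating the parity and divisibility constraints of the harmonic degrees to the Weyl-group action on the $\CC_{j}$-fixed subspaces.
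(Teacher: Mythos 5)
Your proposal is correct and follows essentially the same route as the paper: the even-harmonic decomposition $\mathbb{S}^{2n}=\bigoplus_{k=0}^{n}\mathbb{H}^{2k}$, the classification of theorem \ref{isotropyirreducible} for each $\mathfrak{I}(\mathbb{H}^{2k})$, and the clips assembly of corollary \ref{deuxisospairs} applied to the two leading spaces $\mathbb{H}^{2(n-1)}\oplus\mathbb{H}^{2n}$, with the top cyclic class $[\CC_{2(n-1)}]$ arising from $[\DD_{2(n-1)}]\circledcirc[\DD_{2(n-1)}]$ and the lower components unable to complete the cyclic sequence, exactly as in the paper's proof. One harmless slip: $\mathfrak{I}(\mathbb{H}^{2k})$ does \emph{not} contain all of $[\TT],[\OO],[\II]$ for every $k\geq 3$ (e.g.\ $[\TT],[\II]\notin\mathfrak{I}^{8}$ and $[\II]\notin\mathfrak{I}^{14}$ by theorem \ref{isotropyirreducible}), but your assembly only needs them in $\mathfrak{I}^{6}$, from which they persist under clips with $[\sot]$ (proposition \ref{neutreclips}) --- which is precisely how the paper argues the case $n\geq 3$.
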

\begin{lem}
The symmetry classes of $\mathbb{G}^{2n}$ are : 
\begin{eqnarray*}
\mathfrak{I}(\mathbb{G}^{2})&=&\lbrace [\id],[\CC_2],[\DD_2],[\sod],[\ode],[\sot]\rbrace\\
\mathfrak{I}(\mathbb{G}^{4})&=&\lbrace [\id],[\CC_2],\cdots,[\CC_{4}],[\DD_2],\cdots,[\DD_{4}],[\sod],[\ode],[\TT],[\OO],[\sot]\rbrace\\
n\geq3,\ \mathfrak{I}(\mathbb{G}^{2n})&=&\lbrace [\id],[\CC_2],\cdots,[\CC_{2n}],[\DD_2],\cdots,[\DD_{2n}],[\sod],[\ode],[\TT],[\OO],[\II],[\sot]\rbrace
\end{eqnarray*}
\end{lem}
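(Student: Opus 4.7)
The plan is to combine the $\sot$-isotypic decomposition of $\mathbb{G}^{2n}=\otimes^{2n}\mathbb{R}^{3}$ with the general symmetry-class results proved earlier in this paper (Theorems~I--III and the clips tables for $\sot$-closed subgroups). First I would establish the harmonic decomposition. Writing $\mathbb{H}^{k}$ for the space of $k$th-order harmonic tensors, iterating the Clebsch--Gordan rule $\mathbb{H}^{p}\otimes\mathbb{H}^{q}=\bigoplus_{k=|p-q|}^{p+q}\mathbb{H}^{k}$ starting from $\mathbb{R}^{3}=\mathbb{H}^{1}$, an easy induction on $n$ shows that every $\mathbb{H}^{k}$ with $0\leq k\leq 2n$ appears in $\mathbb{G}^{2n}$ with strictly positive multiplicity. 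Only this positivity will matter for what follows; the precise multiplicities are irrelevant because the symmetry classes of a direct sum depend only on which irreducible components are present.

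Next I would invoke Theorems~I--III, which say that the symmetry classes of a direct sum of harmonic spaces form the union of the classes of each summand together with the classes produced by applying the clips operation to pairs of summands. The isotropy types of a single $\mathbb{H}^{k}$ are classical results of Ihrig--Golubitsky and Golubitsky--Stewart--Schaeffer: $\mathbb{H}^{0}$ gives $\{[\sot]\}$; $\mathbb{H}^{1}$ contributes $[\sod]$; $\mathbb{H}^{2}$ adds $[\DD_{2}]$ and $[\ode]$; $\mathbb{H}^{3}$ brings in $[\TT]$ together with $[\CC_{3}],[\DD_{3}]$; $\mathbb{H}^{4}$ introduces $[\OO]$ together with $[\CC_{4}],[\DD_{4}]$; and for every $k\geq 6$ the icosahedral class $[\II]$ becomes available together with $[\CC_{k}]$ and $[\DD_{k}]$.

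It then remains to case-split on $n$. For $n=1$ only $\mathbb{H}^{0},\mathbb{H}^{1},\mathbb{H}^{2}$ occur, and applying the clips table to $[\sod]$ paired with $[\ode]$ and with $[\DD_{2}]$ produces the missing classes $[\id]$ and $[\CC_{2}]$, yielding the six-class list for $\mathfrak{I}(\mathbb{G}^{2})$. For $n=2$, $\mathbb{H}^{3}$ and $\mathbb{H}^{4}$ also enter but $k$ does not reach $6$, so $[\II]$ is absent; the remaining classes together with their clip closures match the announced list for $\mathfrak{I}(\mathbb{G}^{4})$. For $n\geq 3$, $\mathbb{H}^{6}$ is present so $[\II]$ joins the list, and the cyclic/dihedral series extend up to order $2n$ via $\mathbb{H}^{2n}$. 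The main obstacle is the bookkeeping for clip closure: one must check, using the tables computed in the appendix, that clipping never produces a class outside the list already recorded --- in particular that clips among the exceptional classes $[\TT],[\OO],[\II]$ and between exceptional and planar classes always land in the trivial, cyclic or dihedral classes already present --- so that no spurious class enters $\mathfrak{I}(\mathbb{G}^{2n})$.
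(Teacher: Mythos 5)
Your overall strategy --- isotypic decomposition of $\mathbb{G}^{2n}=\otimes^{2n}\mathbb{H}^{1}$, the known classes of each irreducible $\mathbb{H}^{k}$, then clips closure --- is the same machinery the paper uses, and your observation that every $\mathbb{H}^{k}$ with $0\leq k\leq 2n$ occurs with positive multiplicity is correct. But there is a genuine gap: you misquote the classification of irreducible classes at exactly the delicate point. By theorem~\ref{isotropyirreducible}(b), for $k$ \emph{even} the class $[\CC_{m}]$ occurs in $\mathfrak{I}^{k}$ only for $m\leq k/2$; hence $[\CC_{4}]\notin\mathfrak{I}^{4}$ and, in general, $[\CC_{2n}]\notin\mathfrak{I}^{2n}$. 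Your claims that ``$\mathbb{H}^{4}$ introduces $[\OO]$ together with $[\CC_{4}],[\DD_{4}]$'' and that the cyclic series ``extends up to order $2n$ via $\mathbb{H}^{2n}$'' are therefore false, and your proof as written never produces $[\CC_{2n}]$ (nor $[\CC_{4}]$ when $n=2$). Your fallback principle that ``the precise multiplicities are irrelevant'' does not rescue this: multiplicity one versus multiplicity at least two matters (compare $\mathfrak{I}^{k}$ with $\mathfrak{I}(k,2)$ in corollary~\ref{multiirred}), and $\mathbb{H}^{2n}$ occurs in $\mathbb{G}^{2n}$ with multiplicity exactly one, so the self-clip $[\DD_{2n}]\circledcirc[\DD_{2n}]\ni[\CC_{2n}]$ is unavailable; moreover no clip of $[\DD_{2n}]$ with cyclic, dihedral or exceptional classes of lower order reaches $[\CC_{2n}]$, since the table only yields groups of order dividing or bounded by $\gcd$'s. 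The only route is the one the paper's proof of Theorem~I isolates: an odd harmonic component supplies $[\sod]$ (theorem~\ref{isotropyirreducible}(g)), and the table gives $[\DD_{2n}]\circledcirc[\sod]\ni[\CC_{2n}]$. This is not bookkeeping: it is the entire mechanism separating $\mathfrak{I}(\mathbb{G}^{2n})$ from $\mathfrak{I}(\mathbb{S}^{2n})$, which is even-harmonic and for that very reason misses $[\sod]$, $[\CC_{2n-1}]$ and $[\CC_{2n}]$. Your closing remark treats clips only as a potential source of spurious classes (the upper bound, which indeed follows from the table together with the Hermann-theorem bound $\mathcal{C}(2n)$), whereas here clips are indispensable for the lower bound.

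Two smaller points. First, $[\II]$ is not available for every $k\geq 6$: theorem~\ref{isotropyirreducible}(f) excludes $k=7,8,9,11$ among others; you only use $k=6$, which is legitimate, but the blanket statement should be corrected. Second, the direct-sum principle you invoke is not ``Theorems~I--III'' --- those are the paper's final results, and citing Theorem~I here would be circular since its proof relies on $\mathfrak{I}(\mathbb{G}^{2n})=\mathcal{C}(2n)$ --- but lemma~\ref{sommeinter} and corollary~\ref{correduction}. Once you replace the faulty inputs by the correct classification and insert the $[\sod]$-clip step, you essentially recover the paper's own argument, which runs through $\mathbb{G}^{2n}=\mathbb{S}^{2n}\oplus\mathbb{C}^{2n-1}$, lemma~\ref{symTens}, and the observation that $\mathbb{G}^{2n}$ is not even-harmonic; with that repair your case split $n=1$, $n=2$, $n\geq 3$ goes through.
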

\noindent With the following cardinal properties:
\ben
\begin{tabular}{|c||c|c|c|}
  \hline
  $n$&1&2&$\geq$ 3\\ \hline
  $\#\mathfrak{I}(\mathbb{S}^{2n})$ & 3 & 8 & $2(2n+1)$  \\ \hline
  $\#\mathfrak{I}(\mathbb{G}^{2n})$ & 6& 12& $4n+5$\\ \hline
\end{tabular}
\een
The symmetry classes of $\mathbb{T}^{2n}$ are clarified by the following theorem :
\begin{thmI}
Let $\mathbb{T}^{2n}$ be a tensor space then either $\mathfrak{I}(\mathbb{T}^{2n})=\mathfrak{I}(\mathbb{S}^{2n})$ or
$\mathfrak{I}(\mathbb{T}^{2n})=\mathfrak{I}(\mathbb{G}^{2n})$.
\end{thmI}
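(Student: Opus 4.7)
The plan is to exploit the canonical orthogonal decomposition
$$
\mathbb{T}^{2n}=\mathbb{S}^{2n}\oplus \mathbb{C}^{2n-1}
$$
recalled at the start of Section~2.2, in which $\mathbb{C}^{2n-1}$ is isomorphic to a tensor space of odd order $2n-1$ and whose $\sot$-isotypic decomposition therefore involves only odd harmonics. If $\mathbb{C}^{2n-1}=0$ then $\mathbb{T}^{2n}=\mathbb{S}^{2n}$ and the first alternative $\mathfrak{I}(\mathbb{T}^{2n})=\mathfrak{I}(\mathbb{S}^{2n})$ is immediate. In the remaining case $\mathbb{C}^{2n-1}\neq 0$, I aim to establish the second alternative $\mathfrak{I}(\mathbb{T}^{2n})=\mathfrak{I}(\mathbb{G}^{2n})$, starting from the chain of inclusions $\mathfrak{I}(\mathbb{S}^{2n})\subseteq \mathfrak{I}(\mathbb{T}^{2n})\subseteq \mathfrak{I}(\mathbb{G}^{2n})$ already forced by $\mathbb{S}^{2n}\subseteq \mathbb{T}^{2n}\subseteq \mathbb{G}^{2n}$.

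Comparing the two preceding lemmas, the set $\mathfrak{I}(\mathbb{G}^{2n})\setminus \mathfrak{I}(\mathbb{S}^{2n})$ consists of exactly the three classes $\{[\CC_{2n-1}],[\CC_{2n}],[\sod]\}$, with the conventions $[\CC_1]=[\id]$ covering $n=1$. It therefore suffices to exhibit a representative of each of these three symmetry classes inside $\mathbb{T}^{2n}$. The key ingredient is that $\mathbb{S}^{2n}$ and $\mathbb{C}^{2n-1}$ are $\sot$-invariant and mutually orthogonal, which forces
$$
\mathrm{Stab}(T_S+T_C)=\mathrm{Stab}(T_S)\cap \mathrm{Stab}(T_C)
$$
for any $T_S\in \mathbb{S}^{2n}$ and $T_C\in \mathbb{C}^{2n-1}$. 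The three missing classes will be realized by intersecting carefully prescribed stabilizers, i.e.\ by direct appeal to the clips operator.

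The class $[\sod]$ is produced by taking $T_S=0$ and $T_C$ equal to the zonal element of any nonzero odd harmonic summand $\mathbb{H}^{2k+1}\subseteq \mathbb{C}^{2n-1}$: such a zonal polynomial is fixed by every rotation around its axis but, being antisymmetric under $z\mapsto -z$, by no order-two rotation around a perpendicular axis, so $\mathrm{Stab}(T_C)=\sod$. For the two cyclic classes, the preceding lemma provides $[\DD_{2n}],[\DD_{2n-1}]\in \mathfrak{I}(\mathbb{S}^{2n})$ whenever $n\geq 2$; pick representatives $T_S$ with principal axis $\kk$ and a tensor $T_C$ of type $[\sod]$ rotated so that its axis coincides with $\kk$. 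Since $\DD_m$ in its standard embedding along $\kk$ intersects $\sod_{\kk}$ in exactly its cyclic subgroup $\CC_m$, the sum $T_S+T_C$ has stabilizer class $[\CC_m]$ for $m\in\{2n-1,2n\}$, giving $[\CC_{2n-1}],[\CC_{2n}]\in \mathfrak{I}(\mathbb{T}^{2n})$. The case $n=1$ only requires a cosmetic adjustment: take $T_S$ of type $[\DD_2]$ and $T_C$ of type $[\sod]$ whose axis is either aligned with one axis of $\DD_2$ (producing $[\CC_2]$) or placed generically (producing $[\id]$).

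The main technical obstacle is the ability to prescribe independently the axes of $T_S$ and $T_C$, together with knowing that the relevant types $[\DD_{2n}]$, $[\DD_{2n-1}]$ are indeed realized inside $\mathbb{S}^{2n}$; both points are consequences of the general clips results and tables built in the earlier sections (the clips theorem for $\sot$-closed subgroups and the lemma describing $\mathfrak{I}(\mathbb{S}^{2n})$). Once those preparatory results are in hand, the dichotomy reduces to the bookkeeping above.
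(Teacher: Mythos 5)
Your proof rests on a false premise about the complement: $\mathbb{C}^{2n-1}$ is isomorphic to a tensor space of \emph{order} $2n-1$, but its $\sot$-isotypic decomposition is by no means purely odd. As the paper records in \autoref{IsoConst}, one has $\mathbb{C}^{2n-1}=\bigoplus_{k=0}^{2n-1}\alpha'_k\mathbb{H}^k$ with $\alpha'_k=\alpha_k$ for $k$ odd and $\alpha'_k=\alpha_k-1$ for $k$ even, so the complement generically contains even harmonics. Consequently your dichotomy ``$\mathbb{C}^{2n-1}=0$ versus $\mathbb{C}^{2n-1}\neq 0$'' is the wrong one, and the case $\mathbb{C}^{2n-1}\neq 0$ does \emph{not} supply the nonzero odd summand $\mathbb{H}^{2k+1}$ on which your entire construction depends: without it there is no zonal element of stabilizer $\sod$ (even zonal harmonics are $\ode$-invariant), and the realizations of $[\sod]$, $[\CC_{2n-1}]$, $[\CC_{2n}]$ all collapse. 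The space of elasticity tensors is a concrete counterexample: $\Ela=2\mathbb{H}^0\oplus 2\mathbb{H}^2\oplus\mathbb{H}^4$, so $\mathbb{C}^3=\mathbb{H}^0\oplus\mathbb{H}^2\neq 0$, yet $\mathfrak{I}(\Ela)=\mathfrak{I}(\mathbb{S}^4)$ ($8$ classes, not $12$); your argument would ``prove'' that elasticity has $12$ classes. The correct splitting variable is the even-harmonic (EH) property ($\alpha_{2p+1}=0$ for all $p$), which is how the paper argues, and the half of the proof you omit entirely is precisely the delicate one: showing that an EH space with $\mathbb{C}^{2n-1}\neq 0$ acquires \emph{no} classes beyond $\mathfrak{I}(\mathbb{S}^{2n})$. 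The paper does this by observing that the even complement has leading harmonic order at most $2(n-1)$, reducing via corollaries \ref{multiirred} and \ref{deuxisospairs} to $\mathfrak{I}(\mathbb{S}^{2n})\circledcirc\mathfrak{I}^{2(n-1)}$, and noting that $\mathfrak{I}^{2(n-1)}$ does not contain $[\sod]$, so the missing cyclic classes and $[\sod]$ cannot be generated.

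A secondary gap: your identification $\mathfrak{I}(\mathbb{G}^{2n})\setminus\mathfrak{I}(\mathbb{S}^{2n})=\{[\CC_{2n-1}],[\CC_{2n}],[\sod]\}$ is correct for $n\geq 3$ (and for $n=1$ with your conventions), but fails at $n=2$, where the difference also contains $[\TT]$ — and $[\TT]$ is never produced by clipping anything with $[\sod]$ (see the $[\sod]$ column of table \ref{global}), so it must come from the classes of an odd irreducible itself, e.g.\ $[\TT]\in\mathfrak{I}^3$. So even in the genuinely non-EH regime your bookkeeping is incomplete at fourth order. The local mechanics you use are sound and match the paper's machinery — the stabilizer of $T_S+T_C$ is $\Sigma_{T_S}\cap\Sigma_{T_C}$ exactly as in lemma \ref{sommeinter}, odd zonal harmonics have stabilizer $\sod$, and $\DD_m\cap\sod=\CC_m$ for aligned axes — but the global structure of the argument proves a false statement and must be reorganized around the EH dichotomy.
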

\noindent In other terms,the number and the type of classes are the same as
\begin{itemize}
\item either $\mathbb{S}^{2n}$ the space of $2n$-order completely symmetric tensors. In this case, the number of classes is minimal;
\item or $\mathbb{G}^{2n}$ the space of $2n$-order generic tensors.  In this case the number of classes is maximal.
\end{itemize}
\noindent In fact, as specified by the following theorems, in most situations the number of class is indeed maximal: 

$\bullet$ For coupling tensors:
\begin{thmII}\label{th:SymCou}
Let consider $\mathbb{T}^{2p}$  the space of coupling tensors between two physics described respectively by two tensors vector spaces $\mathbb{E}_{1}$ and $\mathbb{E}_{2}$. If these tensor spaces are of orders greater or equal to $1$, then $\mathfrak{I}(\mathbb{T}^{2p})=\mathfrak{I}(\mathbb{G}^{2p})$.
\end{thmII}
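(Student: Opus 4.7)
The plan is to combine Theorem I with an explicit construction: I would exhibit a non-zero element of $\mathbb{T}^{2p}=\mathbb{E}_1\otimes\mathbb{E}_2$ (which for a coupling tensor carries no extra block symmetry, as set up in the paper) whose isotropy class is $[\sod]$, and this alone settles the matter. Indeed, inspection of the two preceding lemmas shows that $[\sod]\in\mathfrak{I}(\mathbb{G}^{2p})\setminus\mathfrak{I}(\mathbb{S}^{2p})$ for every $p\geq 1$, so producing such a witness rules out the branch $\mathfrak{I}(\mathbb{T}^{2p})=\mathfrak{I}(\mathbb{S}^{2p})$ in the dichotomy of Theorem I, forcing $\mathfrak{I}(\mathbb{T}^{2p})=\mathfrak{I}(\mathbb{G}^{2p})$.

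For the construction I would pass through the harmonic decomposition. Each $\mathbb{E}_i$ is a non-trivial $\sot$-stable subspace of $\otimes^{p_i}\RR^3$ with $p_i\geq 1$, so its isotypic decomposition contains at least one harmonic summand $\mathbb{H}^{k_i}$ with $k_i\geq 1$. The resulting $\sot$-equivariant embedding
\[
\mathbb{H}^{k_1}\otimes\mathbb{H}^{k_2}\hookrightarrow\mathbb{E}_1\otimes\mathbb{E}_2=\mathbb{T}^{2p}
\]
together with the Clebsch--Gordan rule
\[
\mathbb{H}^{k_1}\otimes\mathbb{H}^{k_2}=\bigoplus_{m=|k_1-k_2|}^{k_1+k_2}\mathbb{H}^{m}
\]
exhibits a range of at least $2\min(k_1,k_2)+1\geq 3$ consecutive integers for $m$, so at least one odd value of $m$ must occur. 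Inside that odd-order $\mathbb{H}^m$ I would pick the zonal harmonic $Y^{m}_{0}$ (proportional to the Legendre polynomial $P_{m}(\cos\theta)$ taken about $\kk$): it is fixed by $\sod$, while every $Q\in\sot\setminus\sod$ either tilts its polar axis---so that its image is linearly independent of $Y^{m}_{0}$---or reverses it, in which case $Y^{m}_{0}\mapsto(-1)^{m}Y^{m}_{0}=-Y^{m}_{0}$ by the parity of $P_m$ for odd $m$. Hence the isotropy group of $Y^{m}_{0}$ is exactly $\sod$, and its image inside $\mathbb{T}^{2p}$ is the required element.

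The main step that will require care is the existence of a harmonic summand $\mathbb{H}^{k_i}$ with $k_i\geq 1$ inside each $\mathbb{E}_i$: one must exclude the degenerate case where some $\mathbb{E}_i$ reduces, as a $\sot$-representation, to copies of the trivial summand $\mathbb{H}^{0}$ (for instance a span of $\delta_{ij}$), for then the Clebsch--Gordan range above collapses to $\{k_2\}$ (or $\{k_1\}$) and the construction fails. For any genuine physical state space of tensor order $p_i\geq 1$ this is never an issue---the top-degree harmonic $\mathbb{H}^{p_i}$, or at the very least some intermediate positive-order summand, always survives the standard index-symmetry projections---but articulating this cleanly in the general statement is the only delicate point. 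Once that is granted, the rest of the argument is routine harmonic analysis via Clebsch--Gordan together with the standard description of the isotropy of zonal spherical harmonics.
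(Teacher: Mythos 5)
Your proof is correct, and it reaches the conclusion by a different logical packaging than the paper, though the central computation is the same. The paper proceeds through two lemmas characterizing when $\mathbb{E}_1\otimes\mathbb{E}_2$ can be even-harmonic: it splits into the cases $\mathbb{E}_1\neq\mathbb{E}_2$ (with $p>q$) and $\mathbb{E}_1=\mathbb{E}_2$ non-self-adjoint, applies Clebsch--Gordan to the \emph{leading} harmonic factors, $\mathbb{H}^p\otimes\mathbb{H}^q=\bigoplus_{i=p-q}^{p+q}\mathbb{H}^i$, concludes that the EH property forces one factor to be $\mathbb{H}^0$ (hence of order $0$, excluded by hypothesis), and then invokes the non-EH branch of Theorem I, whose internal mechanism is the clips product $\mathfrak{I}(\mathbb{S}^{2n})\circledcirc[\sod]=\mathcal{C}(2n)$. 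You instead use Theorem I purely as a black-box dichotomy and rule out the $\mathbb{S}^{2p}$ branch by exhibiting one witness whose isotropy is exactly $\sod$, a class present in every $\mathfrak{I}(\mathbb{G}^{2p})$ but absent from every $\mathfrak{I}(\mathbb{S}^{2p})$; this is legitimate, since an injective $\sot$-equivariant embedding preserves stabilizers, so the image of your zonal harmonic in $\mathbb{T}^{2p}$ has stabilizer exactly $\sod$ (your parity argument $P_m(-\cos\theta)=-P_m(\cos\theta)$ for odd $m$ re-derives, self-containedly, item (g) of the Golubitsky theorem the paper cites). Your Clebsch--Gordan step applies to \emph{any} positive-order summands $\mathbb{H}^{k_i}$, $k_i\geq 1$, which buys you freedom from the paper's case split between distinct and equal state spaces, at the price of the existence claim you rightly flag as delicate. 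That claim is precisely where the paper's standing convention does the work: the paper normalizes its state spaces by $\beta_p=\gamma_q=1$, i.e.\ it identifies the order of a space with the order of its leading harmonic, so a degenerate space such as the span of $\delta_{ij}$ counts as order $0$ and is excluded by the hypothesis "orders $\geq 1$"; under that reading the existence of some $k_i\geq 1$ is immediate (take $k_i=p_i$) and your argument is complete. One small caveat: you invoke the dichotomy for all $p\geq 1$, which matches the unrestricted statement of Theorem I in Section 2, while the restated version proved in the last section is phrased for $n\geq 3$; the paper's own proof of Theorem II carries the same reliance, so this is not a defect of your route relative to the paper's.
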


$\bullet$ For proper tensors:
\begin{thmIII}\label{th:SymPro}
Let consider $\mathbb{T}^{2p}$,  the space of tensors of a proper physics described by the tensor vector space $\mathbb{E}$. If this tensor space is of order $p\geq3$, and is solely defined in terms of its index symmetries, then $\mathfrak{I}(\mathbb{T}^{2p})=\mathfrak{I}(\mathbb{G}^{2p})$.
\end{thmIII}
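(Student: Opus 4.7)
The plan is to combine Theorem~I with an explicit realization of a symmetry class that lies in $\mathfrak{I}(\mathbb{G}^{2p})$ but not in $\mathfrak{I}(\mathbb{S}^{2p})$, so that the dichotomy of Theorem~I must resolve in favour of the generic case. Inspecting the two lists of classes, the cheapest distinguishing class is $[\sod]$: it belongs to $\mathfrak{I}(\mathbb{G}^{2p})$ for every $p\geq 1$ but never to $\mathfrak{I}(\mathbb{S}^{2p})$. It therefore suffices to exhibit a tensor in $\mathbb{T}^{2p}$ whose isotropy group is conjugate to $\sod$.

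I would first make the structure of $\mathbb{T}^{2p}$ explicit. Since the physics is proper, the constitutive tensor is self-adjoint, so under the metric identification $\mathcal{L}(\mathbb{E},\mathbb{E})\simeq \mathbb{E}\otimes\mathbb{E}$ one has $\mathbb{T}^{2p}\subseteq S^{2}(\mathbb{E})$. Fix the $\sot$-isotypic decomposition $\mathbb{E}=\bigoplus_{k}\alpha_k\,\mathbb{H}^{k}$. The hypothesis that $\mathbb{E}$ is defined \emph{solely by its index symmetries}, i.e.\ as the image in $\otimes^{p}\RR^{3}$ of a Young symmetrizer with no additional trace-free constraint, forces this sum to involve several distinct harmonic orders for $p\geq 3$: the top component $\mathbb{H}^{p}$ appears with $\alpha_{p}\geq 1$, and at least one strictly lower positive-order component, typically $\mathbb{H}^{p-2}$ with $p-2\geq 1$, is produced by the partial contractions intrinsic to any non-fully-antisymmetric Young projector in dimension~$3$.

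Next I would unfold $S^{2}(\mathbb{E})$ via the Clebsch--Gordan rule. For distinct orders $k\neq l$ with $\alpha_{k},\alpha_{l}>0$, the symmetric piece of the cross-term contains at least one copy of $\mathbb{H}^{k}\otimes\mathbb{H}^{l}=\mathbb{H}^{|k-l|}\oplus\mathbb{H}^{|k-l|+1}\oplus\cdots\oplus\mathbb{H}^{k+l}$, a range of \emph{consecutive} integers. Taking $k=p$ and $l=p-2$ yields the range $\{2,3,\dots,2p-2\}$, which contains odd integers as soon as $p\geq 3$. Consequently $\mathbb{T}^{2p}$ contains an isotypic copy of some $\mathbb{H}^{2j+1}$ of odd order. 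Inside $\mathbb{H}^{2j+1}$ the zonal harmonic, proportional in an axial frame to the Legendre polynomial $P_{2j+1}(\cos\theta)$, is invariant under every $\QQ(\kk;\theta)$ but changes sign under $\QQ(\ii;\pi)$ because $P_{2j+1}$ is an odd polynomial; its isotropy group is therefore exactly $\sod$. Since the embedding $\mathbb{H}^{2j+1}\hookrightarrow\mathbb{T}^{2p}$ is $\sot$-equivariant, it preserves isotropy groups, and $[\sod]\in\mathfrak{I}(\mathbb{T}^{2p})$. Theorem~I then forces $\mathfrak{I}(\mathbb{T}^{2p})=\mathfrak{I}(\mathbb{G}^{2p})$.

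The delicate step, and the one most likely to require the clips machinery developed earlier in the paper, is the structural assertion of the second paragraph: that the harmonic decomposition of every index-symmetry-defined $\mathbb{E}$ of order $p\geq 3$ supplies at least two distinct positive harmonic orders. This is precisely where the hypotheses ``$p\geq 3$'' and ``no trace-free constraint'' enter in an essential way, and it is exactly this step which fails for $p=2$, where proper physics reduces to classical elasticity and $\mathbb{T}^{4}=\Ela$ carries only the $\mathfrak{I}(\mathbb{S}^{4})$ list of $8$ classes.
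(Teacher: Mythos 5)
Your overall architecture is sound and is in fact essentially the paper's: you reduce to showing that $\mathbb{T}^{2p}$ contains an odd-order harmonic component, obtained from the Clebsch--Gordan cross term $\mathbb{H}^{p}\otimes\mathbb{H}^{p-2}=\bigoplus_{l=2}^{2p-2}\mathbb{H}^{l}$ inside the symmetric square, and then let the dichotomy of Theorem I do the rest. (Your detour through the zonal harmonic and $[\sod]$ is correct but redundant: the in-body version of Theorem I already states that a non even-harmonic space has the generic classes, and its proof in the paper proceeds precisely by clipping $\mathfrak{I}(\mathbb{S}^{2n})$ with the class $[\sod]$ carried by any odd harmonic summand --- so you are re-deriving that half of Theorem I in a special case.) The genuine gap is in the step you yourself flag as delicate, and your proposed justification for it would not survive scrutiny. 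You formalize ``defined solely by its index symmetries'' as ``image of a Young symmetrizer,'' but for a mixed-symmetry Young projector this makes your claim false: the image of the symmetrizer of the partition $(2,1)$ acting on $\otimes^{3}\RR^{3}$ contains no copy of $\mathbb{H}^{3}$ at all, so $\alpha_{p}\geq 1$ fails there; moreover ``partial contractions intrinsic to the projector'' is not a mechanism that produces harmonic summands --- the lower-order components of a subspace of $\otimes^{p}\RR^{3}$ arise from the trace decomposition of that subspace, not from contractions appended to it. As written, the structural claim is asserted rather than proved, and your guess that it requires the clips machinery is off the mark.

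The repair is a one-line observation, and it is exactly what the paper uses: in this context an index symmetry means invariance under a prescribed set of index permutations (as in $T_{(ij)k}$ or underlined block permutations), and every fully symmetric tensor satisfies any such invariance condition, so $\mathbb{S}^{p}\subseteq\mathbb{E}$. Since $\mathbb{S}^{p}=\mathbb{H}^{p}\oplus\mathbb{H}^{p-2}\oplus\cdots$, both $\mathbb{H}^{p}$ and $\mathbb{H}^{p-2}$ occur in $\mathbb{E}$ with $p-2\geq 1$ when $p\geq 3$, which is precisely the input your second paragraph needs; the paper packages the consequence as lemma \ref{lem:CriSym} (a self-adjoint even-harmonic square forces $\mathbb{T}^{p}=\mathbb{H}^{p}$ for $p$ odd, or $\mathbb{H}^{p}\oplus\beta_{0}\mathbb{H}^{0}$ for $p$ even), which the decomposition of $\mathbb{S}^{p}$ visibly violates. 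One further point to tighten: you write $\mathbb{T}^{2p}\subseteq S^{2}(\mathbb{E})$, but if that inclusion were proper your cross-term $\mathbb{H}^{p}\otimes\mathbb{H}^{p-2}$ need not survive into $\mathbb{T}^{2p}$; in the theorem's setting a proper physics is modeled by the full space of self-adjoint maps, i.e.\ $\mathbb{T}^{2p}=\mathbb{E}\otimes^{S}\mathbb{E}$ exactly, and you should state the equality so that the cross term is guaranteed to be present.
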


\begin{rem}
Exception occurs for:
\begin{description}
\item [$p=1$] the space of symmetric second order tensors is obtained;
\item [$p=2$] in the case of $\mathbb{T}_{(ij)}$, the space of elasticity tensors is obtained.
\end{description}
In each of the aforementioned situations the number of classes is minimal. There is no other situation where this case occurs. It should therefore be concluded that the space of elasticity tensors is exceptional.
\end{rem}

\subsection{Second strain-gradient elasticity (SSGE)}
\label{s:MSGE}
Application of the former theorems will be made on the even order tensors of SSGE. In first time the constitutive equations will be summed-up, then results will be stated. It worth noting, that obtaining the same results with the Forte-Vianello approach would have been far more complicated.

\subsubsection*{Constitutive laws}
In the second strain-gradient theory of linear elasticity \cite{Min65,FCB11}, the constitutive law gives the
symmetric Cauchy stress tensor\footnote{Exceptionally, in this subsection, tensor orders will be indicated by in-parenthesis exponents.} $\mathbf{\sigma}^{(2)}$\ and the hyperstress tensors $\mathbf{\tau}^{(3)}$ and $\mathbf{\omega}^{(4)}$ in terms of the infinitesimal strain tensor $\mathbf{\varepsilon}^{(2)}$\ and its gradients $\mathbf{\eta}^{(3)}=\mathbf{\varepsilon}^{(2)}\otimes\nabla$ and $\mathbf{\kappa}^{(4)}=\mathbf{\varepsilon}^{(2)}\otimes\nabla\otimes\nabla$ through the three linear relations:%
\begin{equation}
\begin{cases}
\mathbf{\sigma}^{(2)}=\mathbf{E}^{(4)}:\mathbf{\varepsilon}^{(2)}+\mathbf{M}^{(5)}\therefore\mathbf{\eta}^{(3)}+\mathbf{N}^{(6)}::\mathbf{\kappa}^{(4)}, \\ 
\mathbf{\tau}^{(3)}=\mathbf{M}^{T(5)}:\mathbf{\varepsilon}+\mathbf{A}^{(6)}\therefore\mathbf{\eta}^{(3)}+\mathbf{O}^{(7)}::\mathbf{\kappa}^{(4)},\\
\mathbf{\omega}^{(4)}=\mathbf{N}^{T(6)}:\mathbf{\varepsilon}^{(2)}+\mathbf{O}^{T(7)}\therefore\mathbf{\eta}^{(3)}+\mathbf{B}^{(8)}::\mathbf{\kappa}^{(4)}
\end{cases}
\label{SGE}
\end{equation}
where $:,\therefore,::$ denote, respectively, the double, third and fourth contracted product.
Above\footnote{The comma classically indicates the partial derivative with respect to spatial coordinates. The $T$ exponent denotes transposition. The transposition is defined by permuting the $p$ first indices with the $q$ last, where $p$ is the tensorial order of the image of a $q$-order tensor.}, $\sigma _{(ij)}$, $\varepsilon_{(ij)}$, $\tau _{(ij)k}$, $\eta_{(ij)k}=\varepsilon_{(ij),k}$, $\omega_{(ij)(kl)}$  and $\kappa_{(ij)(kl)}=\varepsilon_{(ij),(kl)}$ are, respectively, the matrix components of $\mathbf{\sigma}^{(2)}$, $\mathbf{\varepsilon}^{(2)}$, $\mathbf{\tau}^{(3)}$, $\mathbf{\eta}^{(3)}$, $\mathbf{\omega}^{(4)}$  and $\mathbf{\kappa}^{(4)}$ relative to an orthonormal basis $(\ii;\jj;\kk)$ of $\GR^3$.
And $E_{\underline{(ij)}\ \underline{(lm)}}$, $M_{(ij)(lm)n}$, $N_{(ij)(kl)(mn)}$, $A_{\underline{(ij)k}\ \underline{(lm)n}}$, $O_{(ij)k(lm)(no)}$ and $B_{\underline{(ij)(kl)}\ \underline{(mn)(op)}}$ are the matrix components
of the related elastic stiffness tensors.
\subsubsection*{Symmetry classes}

The symmetry classes of the elasticity tensors and of first strain-gradient elasticity tensors has been studied in \cite{FV96} and \cite{LAH+12}. Hence, here  solely considered the spaces of coupling tensors $\mathbf{N}^{(6)}$ and of second strain-gradient elasticity tensors $\mathbf{B}^{(8)}$ will be considered.\\

$\bullet$ Let define $\mathbb{C}\mathrm{es}$ to be the space of coupling tensors between classical elasticity and second strain-gradient elasticity:
\begin{equation*}
\mathbb{C}\mathrm{es}=\{\mathbf{N}^{(6)}\in\mathbb{G}^6|N_{(ij)(kl)(mn)}\}
\end{equation*}
A direct application of theorem \ref{th:SymCou} leads to the following results:
\begin{equation*}
\mathfrak{I}(\mathbb{C}\mathrm{es})=\lbrace [\id],[\CC_2],\cdots,[\CC_{6}],[\DD_2],\cdots,[\DD_{6}],[\sod],[\ode],[\TT],[\OO],[\II],[\sot]\rbrace
\end{equation*}
Therefore $\mathbb{C}\mathrm{es}$ is divided into $17$ symmetry classes.\\

$\bullet$ Let define $\mathbb{S}\mathrm{gr}$ to be the space of second strain-gradient elasticity tensors:
\begin{equation*}
\mathbb{S}\mathrm{gr}=\{\mathbf{O}^{(8)}\in\mathbb{G}^8|O_{\underline{(ij)(kl)}\ \underline{(mn)(op)}}\}
\end{equation*}
A direct application of theorem \ref{th:SymPro} leads to the following results:
\begin{equation*}
\mathfrak{I}(\mathbb{S}\mathrm{gr})=\lbrace [\id],[\CC_2],\cdots,[\CC_{8}],[\DD_2],\cdots,[\DD_{8}],[\sod],[\ode],[\TT],[\OO],[\II],[\sot]\rbrace
\end{equation*}
Therefore $\mathbb{S}\mathrm{gr}$ is divided into $21$ symmetry classes.

\section{Mathematical framework}\label{s:GenFra}

In this section the mathematical framework of symmetry analysis is introduced. In the first two subsections the notions of symmetry group and class are introduced, meanwhile the last is devoted to the introduction of irreducible spaces. The presentation is rather general, and will be specialized to tensor spaces only at the end of the section.

\subsection{Isotropy/symmetry groups}\label{ss:LinAct}

Let $\rho$ be a representation of a compact real Lie group\footnote{In the following $G$ will solely represent a \emph{compact} real Lie group, therefore this precision will mostly be omitted.} $G$ 
\begin{equation*}
    \rho: G \to \GL(\mathcal{E})
\end{equation*}
on a finite dimensional $\GR$-linear space $\mathcal{E}$. This action will be noted 
\[
\forall (g,\xx)\in G\times \mathcal{E},\quad g\cdot \xx=\rho(g)(\xx)
\]
For any element of $\mathcal{E}$, the set of operations $g$ in $G$ letting this element invariant is defined as
\[
\Sigma_{\xx}:=\lbrace g\in G \: \vert \: g\cdot \xx=\xx\rbrace
\]
This set, for physicists, is called the \emph{symmetry group} of $\xx$, and for mathematicians the \emph{stabilizer} or the \emph{isotropy subgroup} of $\xx$. It is worth noting that, owning to $G$-compactness, every isotropy subgroup is a closed subgroup of $G$. 
Conversely, a dual notion can be defined for $G$-elements. For any subgroup $K$ of $G$, the set of $K$-invariant elements in $\mathcal{E}$ is defined as
\[
\mathcal{E}^K:=\lbrace \xx\in \mathcal{E}\: \vert \: \ k\cdot \xx=\xx \;\  \forall k\in K\rbrace 
\]
Such a set is referred to as a \emph{fixed point set} and is a linear subspace of $\mathcal{E}$.  In this context we will note $\text{d}(K)=\text{dim } \mathcal{E}^K$.
It has to be observed that fixed-point sets  are group inclusion reversing, i.e. for subgroups $K_1$ and $K_2$ of $G$, we have the following property
\[
K_1\subset K_2 \Rightarrow \mathcal{E}^{K_2}\subset \mathcal{E}^{K_1}
\]
For a given isotropy group $K$, the former sets are linked by the following property:
\[
\xx \in \mathcal{E}^K \Rightarrow K\subset \Sigma_{\xx}
\] 
\subsection{Isotropy/symmetry classes}\label{ss:Red}

We aim at describing objects that have the same symmetry properties but may differ by their orientations in space. The first point is to define the set of all the positions an object can have. 
To that aim we consider the $G$-orbit of an element $\xx$ of $\mathcal{E}$: 
\[
\text{Orb}(\xx):=\lbrace g\cdot \xx \: \vert \: g\in G\rbrace \subset \mathcal{E}
\]
Due to $G$-compactness this set is a submanifold of $\mathcal{E}$. Elements of $\text{Orb}(\xx)$ will be said to be $G$-related. A fundamental observation is that $G$-related vectors have conjugate symmetry groups. More precisely\footnote{With the classical coset notation if $H$ is a subgroup of $G$ and $g\in G$ that is not in the subgroup $H$, then a left coset of $H$ in $G$ is defined
\ben
gH = \{gh : h \in H\} 
\een
and symmetricaly for a right coset.}
\beq\label{eq:SymCla}
\text{Orb}(\xx)=\text{Orb}(\yy) \Rightarrow \exists g\in G \: \vert \: \Sigma_{\xx}=g\Sigma_{\yy}g^{-1}
\eeq
Let us define the conjugacy class of a subgroup $K\subset G$ by
\beq
[K]=\{K'\subset G|\exists g\in G,K'= gK g^{-1}\}
\eeq
An isotropy class (or symmetry class) $[\Sigma]$ is defined as the conjugacy class of an isotropy subgroup $\Sigma$. This definition implies that there exists a vector $\xx\in \mathcal{E}$ such that $\Sigma=\Sigma_{\xx}$ and $\Sigma'\in [\Sigma]$; furthermore $\Sigma'=g\Sigma g^{-1}$ for some $g\in G$. 
The notion of isotropy class is the \emph{good} notion to define the symmetry property of an object modulo its orientation: symmetric group is related to a specific vector, but we deal with orbits, which are related to symmetric classes because of~\ref{eq:SymCla}.
Due to $G$-compactness there is only a finite number of isotropy classes \cite{Bre72}, and we note 
\[
\mathfrak{I}(\mathcal{E}):=\lbrace [\id];[\Sigma_1];\cdots;[\Sigma_l] \rbrace
\]
the set of all isotropy classes. In the case $G=\sot$  this result is known as the Hermann theorem \cite{Her45,Auf08}. Elements of $\mathfrak{I}(\mathcal{E})$ are conjugate to $\sot$-closed subgroups and this collection was introduced in \autoref{ss:OtrSub}.



\subsection{Irreducible spaces}

For every linear subspace $\mathcal{F}$ of $\mathcal{E}$, we note 
\[
g\cdot \mathcal{F}:=\lbrace g.\xx \: \vert \: \forall g\in G \:; \: \forall \xx\in \mathcal{F}\rbrace
\]
and we say that $\mathcal{F}$ is $G$-stable if $g\cdot \mathcal{F}\subset \mathcal{F}$ for every $g\in G$. It is clear that, for every representation, the subspaces $\{0\}$ and $\mathcal{E}$ are always $G$-stable. 
If, for a representation $\rho$ on $\mathcal{E}$, the only $G$-invariant spaces are the proper ones, the representation will be said irreducible. For a compact Lie group, the Peter-Weyl theorem \cite{Ste94} ensures that every representation can be split into a direct sum of irreducible ones. Furthermore, in the case $G=\sot$, those irreducible representations are explicitly known.\\

There is a natural action of $\sot$ on the space of $\GR^3$-\emph{harmonic polynomials}. If $p$ denotes such a harmonic polynomial, and $\xx \in \GR^3$, then, for every $g\in \sot$ we note
\[
g\cdot p (\xx)=p(g^{-1}\cdot \xx)
\]
Harmonic polynomials form a graded vector space, and to each subspace of a given degree a $\sot$-irreducible representation is associated.
$\mathcal{H}^k$ will be the vector space of harmonic polynomials of degree $k$, with $\dim\mathcal{H}^k=2k+1$.
If we take a vector space $V$ to be a $\sot$-representation, it can be decomposed into $\sot$-irreducible spaces:
\[
V=\bigoplus \mathcal{H}^{k_i}
\]
Grouping the same order irreducible spaces, the $\sot$-isotypic decomposition is obtained:
\ben\label{eq:decompgeneralP}
V=\bigoplus_{i=0}^{n}\alpha_{i}\mathcal{H}^{i}
\een
where  $\alpha_{i}$ is the multiplicity of the irreducible space $\mathcal{H}^{i}$ in the decomposition, \textbf{and $n$ the order of the higher order irreducible space of the decomposition.}\\

\subsubsection*{Application to tensor spaces}

In mechanics, $V$ is a vector subspace of $\otimes^{p}\RR^{3}$. In $\mathbb{R}^3$ there exists an isomorphism,$\phi$, between harmonic polynomial spaces and \textit{harmonic tensor spaces}\cite{Bac70,FV96}. Therefore all that have been previously said in term of harmonic polynomials can be translated in terms of harmonic tensors. A detailed discussion on this isomorphism can be found in \cite{Bac70}.
%
Therefore $\mathbb{H}^{k}=\varphi(\mathcal{H}^k)$, is the space of \textit{harmonic tensors}, that is the space of completely symmetric and traceless tensors. According to this isomorphism, any tensor space $\mathbb{T}^n$ can be decomposed into $\sot$-irreducible tensors:
\ben
\mathbb{T}^{n}=\bigoplus_{i=0}^{n}\alpha_{i}\mathbb{H}^{i}
\een
The symmetry group of $\mathbf{T}\in\mathbb{T}^n$ is the intersection of the symmetry groups of all its harmonic components\footnote{In the notation $\mathrm{H}^{i,j}$ the first exponent refers to the order of the harmonic tensor, meanwhile the second indexes the multiplicity of $\mathrm{H}^{i}$ in the decomposition.} 
\ben\label{eq:decompgeneralT}
\Sigma_{\mathbf{T}}=\bigcap_{i=0}^{n}\left(\bigcap_{j=0}^{\alpha_{i}}\Sigma_{\mathrm{H}^{i,j}}\right)
\een
In the same way, $\mathfrak{I}(\mathbb{T}^{n})$ will be obtained as a function of the symmetry classes of the irreducible representations involved in the harmonic decomposition of $\mathbb{T}^{n}$. The symmetry classes of $\sot$-irreducible representations are explicitly known \cite{GSI84,GSS88},  what is unknown is how to combine these results to determine the symmetry classes of $V$ (or $\mathbb{T}^{n}$).

\section{Clips operations}\label{ss:clips}
The aim of this section is to construct symmetry classes of a reducible representation from the irreducible ones. To that aim a new class-operator, named \emph{clips operator}, will be defined. The main result of this section is given in table \ref{global}, which contains all clips operations between $\sot$-closed subgroups. It is worth noting that this table contains more results than strictly needed for the proofs of our theorems. Nevertheless, we believe that these results are interesting in their own and may find applications for other problems. The explicit proofs of these results can be found in \autoref{calculs}.\\

In this section the intersection of solely two symmetry classes, is considered. Extensions to more general reducible representations will be treated in \autoref{lastone}.  
Let us start with the following lemma: 
\begin{lem}\label{sommeinter}
Let $\mathcal{E}$ be a representation of a compact Lie group $G$ that split into a direct sum of two $G$-stable subspaces 
\[
\mathcal{E}=\mathcal{E}_1\oplus \mathcal{E}_2 \text{ where } g\cdot \mathcal{E}_1\subset \mathcal{E}_1 \text{ and } g\cdot \mathcal{E}_2\subset \mathcal{E}_2 \: \forall g\in G 
\]
If we note by $\mathfrak{I}$ the set of all isotropy classes associated to $\mathcal{E}$, $\mathfrak{I}_i$ the set of all isotropy classes associated to $\mathcal{E}_i$ ($i=1,2$), then $[\Sigma] \in \mathfrak{I}$ if and only if there exist $[\Sigma_1]\in \mathfrak{I}_1$ and $[\Sigma_2]\in \mathfrak{I}_2$ such as $\Sigma =\Sigma_1 \cap \Sigma_2$.
\end{lem}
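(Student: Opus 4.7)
The proof should be short and pivots on a single observation: for a $G$-stable direct sum $\mathcal{E} = \mathcal{E}_1 \oplus \mathcal{E}_2$, the stabilizer of any vector splits as the intersection of the stabilizers of its components. Concretely, if $\xx = \xx_1 + \xx_2$ with $\xx_i \in \mathcal{E}_i$, I claim $\Sigma_{\xx} = \Sigma_{\xx_1} \cap \Sigma_{\xx_2}$. The inclusion $\supseteq$ is obvious by linearity of the action. For $\subseteq$, suppose $g \cdot \xx = \xx$; then $g\cdot\xx_1 + g\cdot\xx_2 = \xx_1 + \xx_2$, and since each $\mathcal{E}_i$ is $G$-stable we have $g\cdot\xx_i \in \mathcal{E}_i$, so directness of the sum forces $g\cdot\xx_i = \xx_i$. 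This is the whole engine of the lemma.

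From here both implications are essentially free. For the ``only if'' direction, take $[\Sigma] \in \mathfrak{I}$ with $\Sigma = \Sigma_{\xx}$, decompose $\xx = \xx_1+\xx_2$ and set $\Sigma_i := \Sigma_{\xx_i}$, which lies in $\mathfrak{I}_i$ by definition; the preliminary observation yields $\Sigma = \Sigma_1 \cap \Sigma_2$. For the ``if'' direction, suppose $\Sigma = \Sigma_1 \cap \Sigma_2$ with $[\Sigma_i] \in \mathfrak{I}_i$. Pick any representative $\Sigma_i$ in its conjugacy class; since each class in $\mathfrak{I}_i$ is by definition the class of a stabilizer, we may write $\Sigma_i = \Sigma_{\xx_i}$ for some $\xx_i \in \mathcal{E}_i$ (if the chosen representative is not literally a stabilizer but merely conjugate to one, apply the conjugating element of $G$ to the witness vector and replace $\xx_i$ accordingly). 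Setting $\xx = \xx_1 + \xx_2 \in \mathcal{E}$, the observation gives $\Sigma_{\xx} = \Sigma_1 \cap \Sigma_2 = \Sigma$, so $[\Sigma] \in \mathfrak{I}$.

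The only subtlety worth flagging, and the step I would be most careful about, is this representative-matching point in the ``if'' direction: the lemma is phrased as an equality of subgroups, not of conjugacy classes, so one must check that for each class $[\Sigma_i] \in \mathfrak{I}_i$ and any prescribed representative, that representative is genuinely the stabilizer of some vector in $\mathcal{E}_i$ (not merely conjugate to one). This is immediate from the definition of isotropy class together with the transport formula $\Sigma_{g\cdot\xx} = g\,\Sigma_{\xx}\,g^{-1}$ already recorded in~\eqref{eq:SymCla}: any group in the class of a stabilizer is itself a stabilizer, namely of the translated vector. Once that is acknowledged, nothing more is required, and the lemma follows without further input.
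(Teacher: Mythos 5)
Your proof is correct and is essentially the paper's own argument: the same key observation that for $\xx=\xx_1+\xx_2$ one has $\Sigma_{\xx}=\Sigma_{\xx_1}\cap\Sigma_{\xx_2}$, proved by linearity in one direction and by $G$-stability plus directness of the sum in the other, with both implications of the lemma then read off from it. The only difference is that you make explicit the representative-matching point (that any group in an isotropy class is itself a stabilizer, via $\Sigma_{g\cdot\xx}=g\,\Sigma_{\xx}\,g^{-1}$), which the paper's proof leaves implicit when it picks $\xx_i$ with $\Sigma_i=\Sigma_{\xx_i}$.
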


\begin{proof}
If we take $[\Sigma_1]\in \mathfrak{I}_1$ and $[\Sigma_2]\in \mathfrak{I}_2$, we know there exists two vectors $\xx_1\in \mathcal{E}_1$ and $\xx_2\in \mathcal{E}_2$ such that $\Sigma_i=\Sigma_{\xx_i}$ ($i=1,2)$. Then, let $\xx:=\xx_1+\xx_2$. 

For every $g\in  \Sigma_1 \cap \Sigma_2$ we have $g\cdot \xx_1+g\cdot \xx_2=\xx_1+\xx_2=\xx$; thus $\Sigma_1 \cap \Sigma_2
\subset \Sigma_{\xx}$. Conversely for every $g\in \Sigma_{\xx}$ we have 
\[
g\cdot \xx=\xx=g\cdot \xx_1+g\cdot \xx_2
\]
But, as $\mathcal{E}_i$ are $G$-stable and are in direct sum, we conclude that $g\cdot \xx_i=\xx_i$ ($i=1,2$). The reverse inclusion is proved. 

The other implication is similar: if we take $[\Sigma] \in \mathfrak{I}$ then we have $\Sigma=\Sigma_{\xx}$ for some $\xx\in \mathcal{E}$. And $\xx$  can be decomposed into $\xx_1+\xx_2$. The same proof as above shows that $\Sigma =\Sigma_{\xx_1} \cap \Sigma_{\xx_2}$. 
\end{proof}

Lemma \ref{sommeinter} shows that the isotropy classes of a direct sum are related to intersections of isotropy subgroups. But as intersection of classes is meaningless, the results cannot be directly extended. To solve this problem, one tool named \emph{clips operator} will be introduced. To that aim, let first  consider the following lemma: 

\begin{lem}
For every two $G$-classes $[\Sigma_i]$ ($i=1,2$), and for every $g_1$, $g_2$ in $G$, there exists $g=g_1^{-1}g_2$ in $G$ such that
\[
\left[ g_1\Sigma_1 g_1^{-1} \cap g_2\Sigma_2 g_2^{-1}\right]=[\Sigma_1 \cap g\Sigma_2 g^{-1}]
\]
\end{lem}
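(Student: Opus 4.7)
The plan is to reduce the two-sided conjugation to a one-sided one by conjugating the whole intersection by a single element. The crucial fact I will use is that inner automorphisms commute with intersection: for any $h \in G$ and any subgroups $A, B \subset G$,
\[
h(A \cap B)h^{-1} = hAh^{-1} \cap hBh^{-1},
\]
which is immediate since $h(\,\cdot\,)h^{-1}$ is a bijection of $G$ onto itself.

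With this in hand, I would conjugate $g_1\Sigma_1 g_1^{-1} \cap g_2\Sigma_2 g_2^{-1}$ by $g_1^{-1}$. This yields
\[
g_1^{-1}\bigl(g_1\Sigma_1 g_1^{-1} \cap g_2\Sigma_2 g_2^{-1}\bigr)g_1 = \Sigma_1 \cap (g_1^{-1}g_2)\Sigma_2 (g_1^{-1}g_2)^{-1},
\]
after collapsing $g_1^{-1}g_1 = e$ on the first factor and rewriting $g_1^{-1}g_2 g_2^{-1} g_1 = (g_1^{-1}g_2)(g_1^{-1}g_2)^{-1}$ on the second. Setting $g := g_1^{-1}g_2$, the right-hand side is precisely $\Sigma_1 \cap g\Sigma_2 g^{-1}$.

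Since two subgroups related by an inner automorphism lie in the same conjugacy class, we conclude
\[
\bigl[g_1\Sigma_1 g_1^{-1} \cap g_2\Sigma_2 g_2^{-1}\bigr] = \bigl[\Sigma_1 \cap g\Sigma_2 g^{-1}\bigr],
\]
which is the stated equality. There is no real obstacle here: the content is purely formal, and the point of the lemma is a bookkeeping one, namely that in order to describe all isotropy classes arising as $\Sigma_{\mathbf{x}_1} \cap \Sigma_{\mathbf{x}_2}$ when $\mathbf{x}_i$ ranges over the $G$-orbits of two fixed vectors, it suffices to let only one of the two subgroups vary through its conjugates. This is exactly what motivates the forthcoming definition of the clips operator $[\Sigma_1]\circledcirc[\Sigma_2]$, since it shows the operator is well defined on conjugacy classes.
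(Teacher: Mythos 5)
Your proof is correct and follows essentially the same route as the paper: the paper's element-by-element argument is exactly an inline verification that conjugation by $g_1^{-1}$ distributes over the intersection, which you simply invoke as a general fact before setting $g=g_1^{-1}g_2$. Your packaging is arguably cleaner (and sidesteps a small inverse-placement typo in the paper's computation), but the underlying idea is identical.
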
 

\begin{proof}
Let $g=g_1^{-1}g_2$ and 
\[
\Sigma=g_1\Sigma_1 g_1^{-1} \cap g_2\Sigma_2 g_2^{-1}
\]
For every $\gamma \in \Sigma$ we have $\gamma=g_1 \gamma_1 g_1^{-1}=g_2\gamma_2 g_2^{-1}$ for some $\gamma_i\in \Sigma_i$ ($i=1,2$); then
\[
g _1\gamma g_1^{-1}=\gamma_1\in \Sigma_1 \text{ and } g_1\gamma g_1^{-1}=g\gamma_2 g^{-1}\in g\Sigma_2 g^{-1}
\]
Thus we have $g_1 \Sigma g_1^{-1}\subset \Sigma_1 \cap g\Sigma_2 g^{-1}$, and conversely. As $g_1 \Sigma g_1^{-1}$ is conjugate to $\Sigma$, we have proved the lemma. 
\end{proof}

\begin{defn}[Clips Operator]\label{clips}
For every $G$-classes $[\Sigma_1]$ and $[\Sigma_2]$, we define the \emph{clips operator} of $[\Sigma_1]$ and $[\Sigma_2]$, noted $[\Sigma_1] \circledcirc [\Sigma_2]$, to be 
\[
[\Sigma_1] \circledcirc [\Sigma_2]:=\lbrace [\Sigma_1\cap g\Sigma_2 g^{-1}]\: \: \text{ for all } g\in G \rbrace 
\]
which is a subset of $G$-classes. 
\end{defn}
\noindent If we note $\id$ the identity subgroup, one can observe some immediate properties: 

\begin{prop}\label{neutreclips}
For every $G$-class $[\Sigma]$ we have
\[
[\id]\circledcirc [\Sigma]=\lbrace [\id] \rbrace \text{ and } [G]\circledcirc [\Sigma]=\lbrace [\Sigma] \rbrace
\]
\end{prop}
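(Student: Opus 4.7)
The plan is to unfold the definition of the clips operator directly and observe that, for these two extreme choices of the first argument, the set-theoretic intersection $\Sigma_1 \cap g\Sigma_2 g^{-1}$ degenerates in an obvious way that makes it independent of $g$.

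First I would treat $[\id]\circledcirc[\Sigma]$. Here $\id$ denotes the trivial subgroup $\{e\}$ (as fixed just above the statement). For any $g\in G$, the conjugate $g\Sigma g^{-1}$ is a subgroup of $G$, hence contains $e$, and thus $\id\cap g\Sigma g^{-1}=\{e\}=\id$. Every element of the set on the right of the defining formula therefore equals $[\id]$, giving $[\id]\circledcirc[\Sigma]=\{[\id]\}$.

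Next I would treat $[G]\circledcirc[\Sigma]$. For any $g\in G$, the conjugate $g\Sigma g^{-1}$ is a subgroup of $G$, so $G\cap g\Sigma g^{-1}=g\Sigma g^{-1}$. Since $g\Sigma g^{-1}$ is conjugate to $\Sigma$ by definition, it lies in the conjugacy class $[\Sigma]$; conversely every element of $[\Sigma]$ is of the form $g\Sigma g^{-1}$ for some $g\in G$. Hence $[G]\circledcirc[\Sigma]=\{[\Sigma]\}$.

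There is no real obstacle here: the proposition is a direct sanity check on the definition, essentially saying that $[\id]$ is absorbing and $[G]$ is neutral for the clips operation. The only subtlety worth flagging in the write-up is the notational convention that $\id$ in this paper stands for the trivial subgroup rather than for the identity element, so that the two assertions are genuinely symmetric statements about extremal subgroups of $G$.
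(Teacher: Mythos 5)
Your proof is correct and is precisely the direct verification the paper has in mind: the paper states this proposition without proof, as an immediate consequence of the definition of the clips operator, and your unfolding of $\id\cap g\Sigma g^{-1}=\id$ and $G\cap g\Sigma g^{-1}=g\Sigma g^{-1}$ is exactly that observation. Your remark on the convention that $\id$ denotes the trivial subgroup matches the paper's own notational note preceding the statement.
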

\noindent Then, if we have two $G$-representations $\mathcal{E}_1$ and $\mathcal{E}_2$, and if we note $\mathfrak{I}_i$ the set of all isotropy classes of $\mathcal{E}_i$, the clips operator can be extended to these sets and noted 
\[
 \mathfrak{I}_1 \circledcirc \mathfrak{I}_2:=\bigcup_{\Sigma_1\in \mathfrak{I}_{1},\Sigma_2\in \mathfrak{I}_2} [\Sigma_1] \circledcirc [\Sigma_2]
\]
Then, by lemma \ref{sommeinter}, we obtain the corollary: 

\begin{cor}\label{correduction}
For every two $G$-representations $\mathcal{E}_1$ and $\mathcal{E}_2$, if $\mathfrak{I}_1$ denotes the isotropy classes of $\mathcal{E}_1$ and $\mathfrak{I}_2$ the isotropy classes of $\mathcal{E}_2$, then $\mathfrak{I}_1 \circledcirc \mathfrak{I}_2$ are all the isotropy classes of $\mathcal{E}_1\oplus \mathcal{E}_2$. 
\end{cor}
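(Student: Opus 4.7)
The plan is to show the two inclusions separately, combining the sommeinter lemma (Lemma \ref{sommeinter}) with the unnamed lemma immediately preceding Definition \ref{clips}. The first gives the characterization of isotropy subgroups of a direct sum as intersections $\Sigma_1 \cap \Sigma_2$, while the second handles the subtlety that when we pass from isotropy subgroups to isotropy classes, we must allow all conjugates.

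For the inclusion $\mathfrak{I}(\mathcal{E}_1\oplus\mathcal{E}_2) \subset \mathfrak{I}_1\circledcirc\mathfrak{I}_2$, I would start with an arbitrary isotropy class $[\Sigma]\in\mathfrak{I}(\mathcal{E}_1\oplus\mathcal{E}_2)$ and pick a representative $\Sigma=\Sigma_{\xx}$ for some $\xx\in\mathcal{E}_1\oplus\mathcal{E}_2$. Applying Lemma \ref{sommeinter} directly yields $\Sigma = \Sigma_1\cap \Sigma_2$ with $[\Sigma_i]\in\mathfrak{I}_i$. Taking $g=e$ in Definition \ref{clips} shows $[\Sigma]=[\Sigma_1\cap\Sigma_2]\in[\Sigma_1]\circledcirc[\Sigma_2]\subset\mathfrak{I}_1\circledcirc\mathfrak{I}_2$.

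For the reverse inclusion, I would pick $[\Sigma]\in\mathfrak{I}_1\circledcirc\mathfrak{I}_2$, so by definition of the clips operator there exist classes $[\Sigma_1]\in\mathfrak{I}_1$, $[\Sigma_2]\in\mathfrak{I}_2$ and an element $g\in G$ such that $\Sigma=\Sigma_1\cap g\Sigma_2 g^{-1}$. The key observation here is that isotropy subgroups are closed under conjugation: since $[\Sigma_2]$ is an isotropy class of $\mathcal{E}_2$, the subgroup $g\Sigma_2 g^{-1}$ is again an isotropy subgroup of $\mathcal{E}_2$ (it stabilizes $g\cdot \xx_2$ if $\Sigma_2=\Sigma_{\xx_2}$, as recorded in equation \eqref{eq:SymCla}). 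Thus $\Sigma$ is an intersection of an isotropy subgroup of $\mathcal{E}_1$ with an isotropy subgroup of $\mathcal{E}_2$, and the other direction of Lemma \ref{sommeinter} guarantees that $\Sigma$ is itself an isotropy subgroup of $\mathcal{E}_1\oplus\mathcal{E}_2$, hence $[\Sigma]\in\mathfrak{I}(\mathcal{E}_1\oplus\mathcal{E}_2)$.

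I do not expect any real obstacle in this argument: both ingredients are already in hand, and the proof is essentially a bookkeeping exercise. The only point worth articulating carefully is the asymmetry between the two inclusions, namely that \textbf{every} conjugate intersection $\Sigma_1\cap g\Sigma_2 g^{-1}$ arises from a genuine element of $\mathcal{E}_1\oplus\mathcal{E}_2$ (not only $g=e$). This is where the second ingredient — that isotropy classes are closed under conjugation and that the lemma preceding Definition \ref{clips} reduces the general $g_1,g_2$ case to a single $g$ — does its work, ensuring the clips operator is exactly the right tool to package the whole family of intersections into a well-defined subset of $G$-classes.
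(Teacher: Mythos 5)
Your proof is correct and takes essentially the same route as the paper, which derives the corollary directly from Lemma \ref{sommeinter} combined with the conjugation lemma preceding Definition \ref{clips} --- exactly the two ingredients you use. The paper leaves the two-inclusion bookkeeping implicit (it simply states ``by lemma \ref{sommeinter}, we obtain the corollary''), and your write-up correctly fills in those details, including the key point that $g\Sigma_2 g^{-1}$ is again an isotropy subgroup of $\mathcal{E}_2$.
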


\begin{thm}\label{thm:clipssot}
For every two $\sot$-closed subgroups $\Sigma_1$ and $\Sigma_2$, we have $\id \in [\Sigma_1]\circledcirc [\Sigma_2]$. 
The other classes of the clips operation $[\Sigma_1]\circledcirc [\Sigma_2]$ are given in the 
table~\ref{global}
\end{thm}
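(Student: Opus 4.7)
The plan is to prove the theorem in two stages, as it really contains two distinct assertions: first, that the trivial class $[\id]$ always belongs to the clips $[\Sigma_1]\circledcirc[\Sigma_2]$ (for proper closed subgroups of $\sot$; the case $\Sigma_1=\Sigma_2=\sot$ is vacuous and supplied by the table), and second, that the remaining classes match the explicit table. The overall strategy combines a genericity argument for the first claim with a systematic case-by-case analysis for the second, organized by the rotation-axis data of each closed subgroup.

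For the first claim, I would attach to each proper closed subgroup $\Sigma\subsetneq\sot$ its \emph{axis set} $A(\Sigma)\subset\mathbb{R}P^2$, defined as the set of rotation axes of the non-identity elements of $\Sigma$. Inspecting the list in Lemma \ref{sotrois}, one checks that $A(\Sigma)$ is always a proper closed semi-algebraic subset: finite for the groups $\CC_n,\DD_n,\TT,\OO,\II$; a single point for $\sod$; a point plus a great circle for $\ode$. In all cases $\dim A(\Sigma)\le 1<2=\dim\mathbb{R}P^2$. Since $\sot$ acts transitively on $\mathbb{R}P^2$, a generic rotation $g\in\sot$ satisfies $A(\Sigma_1)\cap g\cdot A(\Sigma_2)=\emptyset$. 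Any nontrivial $h\in\Sigma_1\cap g\Sigma_2 g^{-1}$ would have its axis simultaneously in $A(\Sigma_1)$ and $g\cdot A(\Sigma_2)$, which is impossible; hence $\Sigma_1\cap g\Sigma_2 g^{-1}=\{\id\}$, so $[\id]\in[\Sigma_1]\circledcirc[\Sigma_2]$.

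For the second claim, the plan is to proceed by pairs, exploiting the fact that, once a representative $\Sigma_1$ is fixed, conjugating $\Sigma_2$ amounts to choosing the position of its distinguished axes (principal axis, $2$-fold flip axes, $3/4/5$-fold axes) relative to those of $\Sigma_1$. The intersection $\Sigma_1\cap g\Sigma_2 g^{-1}$ jumps only when two such axes coincide, so the problem reduces to a finite combinatorial enumeration. The elementary entries are handled directly: $[\CC_n]\circledcirc[\CC_m]$ contributes $[\id]$ (axes in general position) and $[\CC_{\gcd(n,m)}]$ (coaxial case); $[\CC_n]\circledcirc[\DD_m]$ produces additional $[\CC_2]$ from aligning the $\CC_n$-axis with a flip axis of $\DD_m$; $[\DD_n]\circledcirc[\DD_m]$ requires treating the coincidence of principal axes, of one principal with one flip, and of two flip axes, yielding $\CC_{\gcd(n,m)}$, $\DD_{\gcd(n,m)}$ and $\CC_2$ factors depending on the alignment. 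The continuous cases with $\sod$, $\ode$ and $\sot$ are handled by the same axis-alignment principle, together with Proposition \ref{neutreclips} for the trivial boundary cases.

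The main obstacle will be the bookkeeping for the entries involving the exceptional groups $\TT$, $\OO$, $\II$. Here I would rely on their classical subgroup lattices: the stabilizer inside $\TT$, $\OO$ or $\II$ of a point of $\mathbb{R}P^2$ is either trivial or one of the pointwise stabilizers of a vertex, edge-midpoint or face-center of the corresponding regular polyhedron, hence a known cyclic subgroup; the stabilizer of an oriented axis plus a perpendicular direction is the corresponding dihedral subgroup. By enumerating how many axes of each order the groups $\TT,\OO,\II$ contain and how these can be fitted against the principal or flip axes of $\Sigma_2$, each clips $[\Sigma_1]\circledcirc[\Sigma_2]$ with $\Sigma_1$ exceptional reduces to a finite check producing precisely the classes listed in Table~\ref{global}. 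The same enumeration, applied to pairs of exceptional groups themselves, uses the known finite list of subgroups common to two such polyhedral groups (e.g.\ $\TT$ as the unique index-$2$ subgroup of $\OO$, the $\DD_3,\DD_2$ subgroups shared by $\OO$ and $\II$, etc.) to close out the last entries. The detailed computations, being routine once the axis configuration is fixed, are then deferred to the appendix \autoref{calculs}, exactly as the authors announce.
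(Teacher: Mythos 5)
Your first-stage genericity argument contains a genuine gap, and it sits at exactly one pair: $\Sigma_1=\Sigma_2=\ode$. When both axis sets contain a great circle, the dimension count $\dim A(\Sigma_i)\le 1<2$ does not yield generic disjointness: two $1$-dimensional sets in the $2$-dimensional $\mathbb{R}P^2$ generically intersect in a $0$-dimensional set, not in the empty set, and for two distinct projective lines (great circles) the intersection is \emph{always} nonempty. Concretely, for any $g\in\sot$ with $g\kk\neq\pm\kk$, the equatorial plane of $\ode$ and that of $g\,\ode\, g^{-1}$ meet along a line $a$, and the flip $\QQ(a;\pi)$ belongs to both groups; hence $\ode\cap g\,\ode\, g^{-1}\supseteq \CC_2$ for every $g$, so $[\id]\notin[\ode]\circledcirc[\ode]$. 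This is consistent with Table~\ref{global}, whose $(\ode,\ode)$ entry lists only $[\CC_2]$ and $[\ode]$ — so the blanket claim ``$\id\in[\Sigma_1]\circledcirc[\Sigma_2]$'' needs this pair (as well as pairs involving $\sot$, which you did exclude) as an exception, and your proof of it silently establishes a false statement in that case. For all other pairs your argument is sound, since at least one of the two axis sets is then finite (or a single point), and for each of its points the set of conjugations $g$ placing it on the other axis set is a null subset of $\sot$.

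Apart from this, your second stage coincides with the paper's actual proof, which lives entirely in the appendix: parameterize each closed subgroup by its distinguished axes, decompose $\DD_n$, $\TT$, $\OO$, $\II$ as direct unions of cyclic subgroups along primary/secondary and vertex/edge/face axes, and enumerate axis coincidences, using the subgroup poset for the exceptional pairs. The one genuinely new ingredient you offer is the uniform genericity argument for $[\id]$, which the paper never isolates (that class simply appears entry-by-entry in the appendix lemmas); with the $(\ode,\ode)$ caveat repaired it would be a clean addition. Be aware, however, that by deferring ``the detailed computations'' you have deferred essentially all of the content of the second claim: for instance your sketch of $[\DD_n]\circledcirc[\DD_m]$ omits the classes $[\CC_{\gcd(n,2)}]$, $[\CC_{\gcd(m,2)}]$ and $[\CC_{dz}]$ (with $dz=2$ precisely when $\gcd(n,m)=1$) that the coincidence analysis produces, and it is in such bookkeeping that the theorem's work actually resides.
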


%

\newpage

\begin{table}[h]
\caption{Clips operations on $\sot$-subgroups}\label{global}
\renewcommand{\arraystretch}{1.5}
\begin{center}

\begin{tabular}{|c|c|c|c|c|c|c|c|}
\hline
$\circledcirc$ & $\left[\CC_n\right]$ & $\left[\DD_n\right]$ & $\left[\TT\right]$ & $\left[\OO\right]$ & $\left[\II\right]$ & $\left[\sod\right]$ & $\left[\ode\right]$ \\
\hline
$\left[\CC_m\right]$ & $\left[\CC_d\right]$ & & & & & & \\
\cline{1-3}
$\left[\DD_m\right]$ & \begin{tabular}{l} $\left[ \CC_{d_2}\right]$ \\ $\left[ \CC_d\right]$ \end{tabular}& \begin{tabular}{l} $\left[\CC_{d_2}\right]$ \\ $\left[\CC_{d_2'}\right],\left[\CC_{dz}\right]$ \\ $\left[\CC_{d}\right],\left[\DD_{d}\right]$ \end{tabular} & & & & & \\
\cline{1-4}
$\left[\TT\right]$ & \begin{tabular}{l} $\left[\CC_{d_2}\right]$ \\ $\left[\CC_{d_3}\right]$ \end{tabular} & \begin{tabular}{l} $ \left[ \CC_{2}\right]$ \\ $\left[ \CC_{d_3}\right],\left[\DD_{d_2}\right]$ \end{tabular} & \begin{tabular}{l} $\left[ \CC_2\right]$ \\ $\left[ \CC_3\right]$\\ $\left[ \TT \right]$ \end{tabular} & & & & \\
\cline{1-5}
$\left[\OO\right]$ & \begin{tabular}{l} $\left[\CC_{d_2}\right]$ \\ $\left[\CC_{d_3}\right]$ \\ $\left[\CC_{d_4}\right]$ \end{tabular} & \begin{tabular}{l} $\left[\CC_{2}\right]$ \\ $\left[\CC_{d_3}\right],\left[\CC_{d_4}\right]$ \\    $\left[\DD_{d_2}\right],\left[\DD_{d_3}\right]$ \\ $\left[\DD_{d_4}\right]$ \end{tabular} & \begin{tabular}{l} $\left[ \CC_2\right]$ \\ $\left[ \CC_3\right]$ \\ $\left[ \TT \right]$ \end{tabular} & \begin{tabular}{l} $\left[ \CC_2\right]$ \\ $\left[ \DD_2\right],\left[ \CC_3\right]$ \\ $\left[ \DD_3\right],\left[ \CC_4\right]$ \\ $\left[ \DD_4 \right],\left[ \OO \right]$ \end{tabular} & & & \\
\cline{1-6}
$\left[\II\right]$ & \begin{tabular}{l} $\left[ \CC_{d_2}\right]$ \\ $\left[ \CC_{d_3}\right]$ \\ $\left[ \CC_{d_5}\right]$  \end{tabular} & \begin{tabular}{l} $\left[ \CC_{2}\right]$ \\ $\left[ \CC_{d_3}\right],\left[ \CC_{d_5}\right]$ \\ $\left[ \DD_{d_2}\right]$ \\ $\left[ \DD_{d_3}\right],\left[ \DD_{d_5}\right]$ \end{tabular} & \begin{tabular}{l} $\left[ \CC_2\right]$ \\ $\left[ \CC_3\right]$ \\ $\left[ \TT \right]$ \end{tabular} & \begin{tabular}{l} $\left[ \CC_{2}\right]$ \\ $ \left[ \CC_3\right],\left[ \DD_3\right]$ \\ $\left[ \TT\right]$\end{tabular} & \begin{tabular}{l} $\left[ \CC_{2}\right]$  \\ $\left[ \CC_3\right],\left[ \DD_3\right]$ \\ $\left[ \CC_5\right],\left[ \DD_5\right]$ \\ $\left[ \II\right]$ \end{tabular} & & \\
\cline{1-7}
$\left[\sod\right]$ & $\left[ \CC_n\right]$ & \begin{tabular}{l} $\left[ \CC_{2}\right]$ \\ $\left[ \CC_n\right]$ \end{tabular} & \begin{tabular}{l} $\left[\CC_{2}\right]$ \\ $\left[\CC_{3}\right]$ \end{tabular} &  \begin{tabular}{l} $\left[ \CC_{2}\right]$ \\ $\left[ \CC_{3}\right],\left[ \CC_{4}\right]$  \end{tabular} & \begin{tabular}{l} $\left[ \CC_{2}\right]$  \\ $\left[ \CC_3\right],\left[ \CC_5\right]$ \end{tabular} & $\left[\sod\right]$ & \\
\cline{1-8}
$\left[\ode\right]$ & \begin{tabular}{l} $\left[ \CC_{d_2}\right]$ \\ $\left[ \CC_n\right]$ \end{tabular} &\begin{tabular}{l} $\left[ \CC_{2}\right]$ \\ $\left[ \DD_n\right]$ \end{tabular} &\begin{tabular}{l} $\left[ \DD_{2}\right]$ \\ $\left[ \CC_3\right]$ \end{tabular} &\begin{tabular}{l} $\left[ \DD_{2}\right]$ \\ $\left[ \DD_3\right],\left[ \DD_4\right]$ \end{tabular} &\begin{tabular}{l} $\left[ \DD_{2}\right]$ \\ $\left[ \DD_3\right],\left[ \DD_5\right]$ \end{tabular} &\begin{tabular}{l} $\left[ \CC_{2}\right]$ \\ $\left[ \sod\right]$ \end{tabular} & \begin{tabular}{l} $\left[ \CC_{2}\right]$ \\ $\left[ \ode\right]$ \end{tabular}\\
\hline
\end{tabular}
\end{center}
\begin{center}
\begin{tabular}{llll}
\multicolumn{3}{c}{Notations} \\
$\CC_1:=\DD_1:=\id$ & $d_2:=gcd(n,2)$ & $d_3:=gcd(n,3)$ & $d_5:=gcd(n,5)$ \\
$d'_2:=gcd(m,2)$ & $dz:=2$ if $d=1$, $dz=1$ otherwise \\
$d_4:=\begin{cases} 4 \text{ if } 4\mid n \\ 1 \text{ otherwise } \end{cases}$ 
\end{tabular}
\end{center}

\end{table}

\section{Isotropy classes of harmonic tensors}\label{lastone}

In this section, the construction of the symmetry classes of a reducible representation from its irreducible components will be studied.
To that aim, in \autoref{ss:IsoClaIrr} the main results concerning the symmetry classes of irreducible representations are summed-up. In \autoref{ss:IsoClaRed}, and using the results of the previous section, basic properties of reducible representations are obtained. These results will be used in \autoref{IsoConst} to prove the theorems stated in \autoref{s:MaiRes}. From now on, all the results will be expressed in terms of tensor spaces.

\subsection{Isotropy classes of irreducibles}\label{ss:IsoClaIrr}
The following result was obtained by Golubitsky and al. \cite{GSI84,GSS88}: 
\begin{thm}\label{isotropyirreducible}
Let $\sot$ acts on $\mathbb{H}^k$. The following groups are symmetry classes of $\mathbb{H}^k$: 
\begin{itemize}
\item[(a)] $\id$ for $k\geq 3$; 
\item[(b)] $\CC_n$ ($n\geq 2)$ for $\displaystyle{\begin{cases}  n\leq k \text{ when } k \text{ is odd} \\ n\leq \dfrac{k}{2} \text{ when } k \text{ is even}\end{cases} }$
\item[(c)] $\DD_n$ ($n\geq 2$) for $n\leq k$; 
\item[(d)] $\TT$ for $k=3,6$, $7$ or $k\geq 9$; 
\item[(e)] $\OO$ for $k\neq 1,2,3,5,7,11$; 
\item[(f)] $\II$ for $k=6,10,12,15,18$ or $k\geq 20$ and $k\neq 23,29$; 
\item[(g)] $\sod$ for $k$ odd; 
\item[(h)] $\ode$ for $k$ even;
\item[(i)] $\sot$ for any $k$. 
\end{itemize}
\end{thm}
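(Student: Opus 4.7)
The plan is to apply the standard isotropy-subgroup criterion from equivariant bifurcation theory (cf.\ Golubitsky--Stewart--Schaeffer~\cite{GSI84,GSS88}): a closed subgroup $K \subset \sot$ is an isotropy subgroup for the action on $\mathbb{H}^k$ if and only if the fixed-point subspace $(\mathbb{H}^k)^K$ is not exhausted by $\bigcup_{K' \supsetneq K}(\mathbb{H}^k)^{K'}$, taken over closed subgroups $K'$ strictly containing $K$. By Lemma~\ref{sotrois} only finitely many conjugacy classes of such $K'$ occur, so a union-of-proper-subspaces argument reduces the criterion to the strict dimensional inequality $\dim(\mathbb{H}^k)^K > \dim(\mathbb{H}^k)^{K'}$ for each closed $K' \supsetneq K$. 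The entire proof is then driven by the table of fixed-point dimensions $d_k(K):=\dim(\mathbb{H}^k)^K$.

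First I would compute $d_k(K)$ for the planar family. In the standard spherical-harmonic basis $\lbrace Y_k^m\rbrace_{-k\leq m\leq k}$ (with $\kk$ as the symmetry axis), the generator $\QQ(\kk;2\pi/n)$ acts diagonally by $Y_k^m\mapsto e^{2\pi i m/n}Y_k^m$, while the flip $\QQ(\ii;\pi)$ swaps $Y_k^m$ with $Y_k^{-m}$ up to sign. A direct count gives
\begin{equation*}
d_k(\CC_n)=2\lfloor k/n\rfloor+1,\qquad d_k(\DD_n)=\lfloor k/n\rfloor+1,\qquad d_k(\sod)=1,\qquad d_k(\ode)=\tfrac{1+(-1)^k}{2}.
\end{equation*}
Items (g), (h), (i) then follow at once, and (a)--(c) come from comparing these values against all strictly larger planar subgroups: the arithmetic conditions $n\leq k$ for $\DD_n$ and $n\leq k$ or $n\leq k/2$ according to the parity of $k$ for $\CC_n$ drop out of the formulas above.

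Second I would compute $d_k(K)$ for the exceptional finite subgroups $\TT$, $\OO$, $\II$. Since $\sot$ is the only closed subgroup properly containing any of these, the isotropy criterion collapses to \emph{nonvanishing} of $d_k(K)$ for $k\geq 1$. The dimensions $d_k(K)$ are read off from the classical polynomial invariant ring of $K$ on $\mathbb{R}^3$ (generated in small degrees for $\TT$ and $\OO$, and in degrees $2,6,10$ plus a degree-$15$ pseudo-invariant for $\II$) combined with the decomposition $P_k = \mathbb{H}^k\oplus r^2 P_{k-2}$, so that
\begin{equation*}
d_k(K)=\dim P_k^K-\dim P_{k-2}^K.
\end{equation*}
Parts (d)--(f) then amount to identifying the degrees $k$ for which the resulting generating series has a strictly positive coefficient; the stated exceptional lists (e.g.\ $k\in\lbrace 1,2,3,5,7,11\rbrace$ for $\OO$) are precisely the zero coefficients.

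The main obstacle will be the finite-exception analysis for $\II$: one must certify $d_k(\II)\geq 1$ for every $k\in\lbrace 6,10,12,15,18\rbrace$ and for every $k\geq 20$ with $k\notin\lbrace 23,29\rbrace$, which amounts to a careful analysis of which degrees are representable by $\lbrace 2,6,10,15\rbrace$ modulo the relation imposed by the $r^2$-correction. The analogous verifications for $\TT$ and $\OO$ are elementary case analyses. Once the full dimension table is assembled and the fixed-point-set strict-dominance comparisons are made along all covering relations in the lattice of closed subgroups of $\sot$, the nine items (a)--(i) are obtained by routine bookkeeping.
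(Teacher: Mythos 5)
Your plan is the right family of ideas --- the paper itself gives no proof of Theorem~\ref{isotropyirreducible} (it is quoted from \cite{GSI84,GSS88}), and the cited classification is indeed obtained by fixed-point dimension counting --- but your reduction of the isotropy criterion to pairwise strict dimension inequalities is invalid, and it fails precisely at the delicate clauses of the statement. The criterion ``$(\mathbb{H}^k)^K$ is not exhausted by $\bigcup_{K'\supsetneq K}(\mathbb{H}^k)^{K'}$'' is correct, but the union runs over \emph{continuous families} of conjugate overgroups (every $\DD_n$ sharing the primary axis of $\CC_n$, with arbitrary secondary axis, contains $\CC_n$; every conjugate of $\CC_2$ contains $\id$), so this is not a finite union of proper subspaces, and each class of overgroups must be weighted by the dimension of its family. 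Concretely: for $k$ even and $k/2<n\leq k$ one has $d_k(\CC_n)=2\lfloor k/n\rfloor+1=3>2=d_k(\DD_n)$, yet $(\mathbb{H}^k)^{\CC_n}$ is swept out by the one-parameter family of spaces $(\mathbb{H}^k)^{\DD_n^{a,b}}$ with varying secondary axis $b$, so $\CC_n$ is \emph{not} an isotropy class there; your criterion would output ``$n\leq k$'' for both parities, erasing the distinction in item (b). Likewise at $k=2$: $d_2(\id)=5$ strictly dominates every overgroup's fixed dimension, yet every tensor in $\mathbb{H}^2$ has isotropy at least $\DD_2$, so your argument cannot produce the bound $k\geq 3$ in item (a); that bound comes exactly from the corrected count $d_k(\CC_2)+2=2\lfloor k/2\rfloor+3<2k+1$, valid only for $k\geq3$. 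Separately, your dihedral formula is wrong for odd $k$: since the flip acts by $Y_k^m\mapsto(-1)^kY_k^{-m}$, the $m=0$ line survives only for even $k$, so $d_k(\DD_n)=\lfloor k/n\rfloor+1$ for $k$ even but $\lfloor k/n\rfloor$ for $k$ odd (e.g.\ $d_3(\DD_3)=1$, not $2$); this error propagates through all your planar comparisons.

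The exceptional case contains a second genuine error: it is false that $\sot$ is the only closed subgroup properly containing $\TT$; as the paper's own poset (figure~\ref{lattice}) shows, $\TT\subset\OO$ and $\TT\subset\II$. Hence nonvanishing of $d_k(\TT)$ does not suffice: at $k=4$ one has $d_4(\TT)=1\neq 0$ but $(\mathbb{H}^4)^{\TT}=(\mathbb{H}^4)^{\OO}$ (both one-dimensional, spanned by the cubic harmonic), so the generic point of that fixed space has isotropy $\OO$ and $\TT$ is not an isotropy class; the same happens at $k=8$. Your collapsed criterion would therefore wrongly insert $k=4,8$ into item (d); one must compare $d_k(\TT)$ against $d_k(\OO)$ and $d_k(\II)$ over the finitely many conjugates of $\OO,\II$ containing a given $\TT$. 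For $\OO$ and $\II$ the collapse to nonvanishing is legitimate (their only proper closed overgroup is $\sot$, and $d_k(\sot)=0$ for $k\geq1$), and your Molien-type identity $d_k(K)=\dim P_k^K-\dim P_{k-2}^K$ is the right tool there --- but note that carrying it out faithfully for $\II$ (invariant generators in degrees $2,6,10,15$) gives $d_{16}(\II)=1$ via $6+10=16$, so your own method certifies $k=16$, which the list you echo omits; the classification in the cited sources does include $16$, so your computation and the stated list would not match. In summary: the skeleton is essentially the cited Ihrig--Golubitsky argument, but to close the gaps you must (i) correct the dihedral dimension formula, (ii) replace the naive subspace-union step by a count including the dimension of the family of conjugate overgroups (or by the explicit generic-isotropy arguments of \cite{GSI84}), and (iii) test $\TT$ against $\OO$ and $\II$, not only against $\sot$.
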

For forthcoming purposes, let us introduce the following notations. For each integer $k$, we note:
\[
\Gamma_{\TT(k)}:=\begin{cases} \TT \text{ if } \TT\in \mathfrak{I}^k \\ \emptyset \text{ otherwise} \end{cases}\:; \: \Gamma_{\OO(k)}:=\begin{cases} \OO \text{ if } \OO\in \mathfrak{I}^k \\ \emptyset \text{ otherwise} \end{cases} \text{ and } \Gamma_{\II(k)}:=\begin{cases} \II \text{ if } \II\in \mathfrak{I}^k \\ \emptyset \text{ otherwise} \end{cases}
\]
\[
\Sigma(k):=\begin{cases} \sod \text{ if } \sod \in \mathfrak{I}^k \\ \emptyset \text{ otherwise} \end{cases}\:; \: \Omega(k):=\begin{cases} \ode \text{ if } \ode \in \mathfrak{I}^k \\ \emptyset \text{ otherwise} \end{cases}
\]
where $\mathfrak{I}^k$ is the set of symmetry classes of $\mathbb{H}^k$.

\subsection{Isotropy classes of direct sum}\label{ss:IsoClaRed}
We have this obvious lemma, directly deduced from theorem~\ref{isotropyirreducible} : 
\begin{lem}\label{isotropexce}
We have $\ \Gamma_{\TT(k)}\neq \emptyset \Rightarrow \lbrace [\DD_2],[\DD_3] \rbrace 
\subset \mathfrak{I}^k) \quad;\quad 
\Gamma_{\OO(k)}\neq \emptyset \Rightarrow \lbrace [\DD_2],[\DD_3],[\DD_4] \rbrace 
\subset \mathfrak{I}^k \quad;\quad 
\Gamma_{\II(k)}\neq \emptyset \Rightarrow \lbrace [\DD_2],\cdots,[\DD_5] \rbrace 
\subset \mathfrak{I}^k.
$
\end{lem}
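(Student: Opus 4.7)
The plan is to simply unpack the definitions against Theorem~\ref{isotropyirreducible}, which already enumerates the values of $k$ for which $\TT$, $\OO$, $\II$ and each $\DD_n$ appear as isotropy classes of $\mathbb{H}^k$. Since the conclusion only asserts the presence of certain dihedral classes, the argument reduces to checking a lower bound on $k$ in each of the three implications.

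First I would treat the $\TT$ case. By definition $\Gamma_{\TT(k)}\neq\emptyset$ means $\TT\in\mathfrak{I}^k$, which by part~(d) of Theorem~\ref{isotropyirreducible} forces $k\in\{3,6,7\}$ or $k\geq 9$; in all cases $k\geq 3$. Part~(c) of the same theorem states that $\DD_n\in\mathfrak{I}^k$ as soon as $2\leq n\leq k$, so both $[\DD_2]$ and $[\DD_3]$ belong to $\mathfrak{I}^k$, as required.

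Next I would handle the $\OO$ case in the same fashion: $\Gamma_{\OO(k)}\neq\emptyset$ means $\OO\in\mathfrak{I}^k$, which by part~(e) requires $k\notin\{1,2,3,5,7,11\}$, and in particular $k\geq 4$. Invoking part~(c) for $n=2,3,4$ gives $\{[\DD_2],[\DD_3],[\DD_4]\}\subset\mathfrak{I}^k$. The icosahedral case is identical: $\Gamma_{\II(k)}\neq\emptyset$ forces, via part~(f), $k\in\{6,10,12,15,18\}$ or $k\geq 20$ with $k\notin\{23,29\}$, so in every case $k\geq 6$; part~(c) applied to $n=2,3,4,5$ then yields $\{[\DD_2],\ldots,[\DD_5]\}\subset\mathfrak{I}^k$.

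There is no real obstacle here: the lemma is a bookkeeping consequence of Theorem~\ref{isotropyirreducible} and is essentially the observation that whenever an exceptional subgroup $\TT$, $\OO$ or $\II$ of $\sot$ appears as an isotropy class of $\mathbb{H}^k$, the degree $k$ is automatically large enough to accommodate the dihedral subgroups that one would geometrically expect from the inclusions $\DD_2\subset\TT$, $\DD_2,\DD_3,\DD_4\subset\OO$, and $\DD_2,\DD_3,\DD_5\subset\II$ (the appearance of $[\DD_3]$ in the $\OO$ case and $[\DD_4]$ only in the $\OO$ case follows directly from the $n\leq k$ bound rather than from these inclusions).
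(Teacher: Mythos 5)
Your proof is correct and takes essentially the same route as the paper, which offers no written argument beyond calling the lemma ``obvious, directly deduced from theorem~\ref{isotropyirreducible}'': your bookkeeping --- reading off the lower bounds $k\geq 3$, $k\geq 4$, $k\geq 6$ from parts (d), (e), (f) and then invoking part (c) to get $[\DD_n]\in\mathfrak{I}^k$ for all $2\leq n\leq k$ --- is exactly that intended deduction. Your closing aside about the inclusions $\DD_n\subset\TT,\OO,\II$ is not load-bearing (and slightly misstates which dihedral groups are actual subgroups, e.g.\ $\DD_3\not\subset\TT$ and $\DD_4\not\subset\II$), but since the argument rests solely on the $n\leq k$ bound, this does not affect its validity.
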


We note $\mathfrak{I}(k,n)$, the $(n-1)$ self clips operations of $\mathfrak{I}^k$,which is the set of isotropy classes of a $n$-uple of $k$-th harmonic tensors\cite{AKP12}, i.e. $n \mathbb{H}^k$.  The basic operations are, for every integers $k\geq 1$ and $n\geq 2$ 
\[
\mathfrak{I}(k,n):=\mathfrak{I}^k\circledcirc \mathfrak{I}(k,n-1) \text{ and } \mathfrak{I}(k,1):=\mathfrak{I}^k
\]
On the simple example of  $\mathbb{H}^2$, the following fact can be observed
\[
\mathfrak{I}(2,n):=\mathfrak{I}^2\circledcirc \mathfrak{I}(2,n-1)=\mathfrak{I}^2\circledcirc \mathfrak{I}^2=\left\lbrace  \left[ \id \right],\left[ \CC_2 \right],\left[ \DD_2 \right],\left[ \ode \right],\left[ \sot \right]\right\rbrace
\]
This result can be generalized: 
\begin{cor}\label{multiirred}
For every integers $n\geq 2$ and $k\geq 2$, the isotropy classes of $n \mathbb{H}^k$ are
\begin{equation*}
\mathfrak{I}(k,n)=\mathfrak{I}^k\circledcirc \mathfrak{I}^k=\lbrace [\id],[\CC_2],\cdots,[\CC_k],[\DD_2],\cdots,[\DD_k],[\Gamma_{\TT(k)}],[\Gamma_{\OO(k)}],[\Gamma_{\II(k)}],[\Sigma(k)],[\Omega(k)],[\sot]\rbrace
\end{equation*}
\end{cor}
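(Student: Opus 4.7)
The plan is to argue by induction on $n$, with the real content being the identification of $\mathfrak{I}^k \circledcirc \mathfrak{I}^k$ with the stated list, together with a stability check. The base case $n=2$ is the definition $\mathfrak{I}(k,2) = \mathfrak{I}^k \circledcirc \mathfrak{I}^k$. For the inductive step, denoting the candidate set by $\mathcal{S}_k$, it suffices to prove $\mathfrak{I}^k \circledcirc \mathcal{S}_k = \mathcal{S}_k$. The inclusion $\mathcal{S}_k \subseteq \mathfrak{I}^k \circledcirc \mathcal{S}_k$ is immediate from $[\sot]\in \mathfrak{I}^k$ together with Proposition \ref{neutreclips}, which gives $[\sot]\circledcirc [\Sigma]=\{[\Sigma]\}$. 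The reverse inclusion will follow from the closure property verified below.

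To show that $\mathfrak{I}^k \circledcirc \mathfrak{I}^k$ contains each class in the stated list, I would again use $[\sot]\circledcirc [\Sigma]=\{[\Sigma]\}$, so every class in $\mathfrak{I}^k$ already lies in $\mathfrak{I}^k \circledcirc \mathfrak{I}^k$. Theorem \ref{isotropyirreducible} then directly delivers $[\sot]$, $[\Sigma(k)]$, $[\Omega(k)]$, $[\Gamma_{\TT(k)}]$, $[\Gamma_{\OO(k)}]$, $[\Gamma_{\II(k)}]$, all $[\DD_n]$ with $2\leq n \leq k$, and $[\CC_n]$ in the range permitted by the theorem ($n\leq k$ if $k$ odd, $n\leq k/2$ if $k$ even). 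The trivial class $[\id]$ always arises in a clip by Theorem \ref{thm:clipssot}. The only classes still needing a separate argument are the cyclic $[\CC_n]$ with $k/2 < n \leq k$ for $k$ even: for those, the self-clip entry $[\DD_n]\circledcirc [\DD_n]$ in Table \ref{global} (taking $m=n$, hence $d=n$) yields $[\CC_n]$, and $[\DD_n]\in \mathfrak{I}^k$ since $n\leq k$.

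For the converse closure, I would inspect Table \ref{global} cell by cell. Clips between cyclic classes yield cyclic classes with orders dividing the inputs via $\gcd$; clips between dihedral classes yield cyclic and dihedral classes of orders no larger than $\max(m,n)\leq k$; clips involving $[\sod]$ or $[\ode]$ remain within cyclic, dihedral, or the continuous classes themselves; clips involving the exceptional groups $[\TT]$, $[\OO]$, $[\II]$ produce only cyclic and dihedral classes of indices at most $5$, plus the exceptional groups themselves. The delicate point is that whenever an exceptional class lies in $\mathcal{S}_k$, the cyclic and dihedral classes appearing in $[\TT]\circledcirc[\TT]$, $[\OO]\circledcirc[\OO]$, $[\II]\circledcirc[\II]$ must already belong to $\mathcal{S}_k$; this is precisely Lemma \ref{isotropexce}, combined with the self-clip trick used above to produce the missing cyclic classes from the dihedral ones. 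Mixed entries such as $[\OO]\circledcirc[\DD_n]$ involve indices no larger than $\max(4,n)\leq k$ (using $k\geq 4$ forced by $[\OO]\in \mathfrak{I}^k$), so they too remain in $\mathcal{S}_k$.

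The main obstacle is purely combinatorial: running through the relevant cells of Table \ref{global} and checking that every output class satisfies the parameter constraints characterizing $\mathcal{S}_k$, with due care for the values of $k$ making $[\TT]$, $[\OO]$, $[\II]$ eligible. Once this case analysis is complete, the induction closes automatically: $\mathcal{S}_k$ is stable under clipping by any class in $\mathfrak{I}^k\subseteq \mathcal{S}_k$, and therefore $\mathfrak{I}(k,n)=\mathfrak{I}^k\circledcirc \mathfrak{I}^k=\mathcal{S}_k$ for every $n\geq 2$.
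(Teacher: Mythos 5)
Your proposal is correct and takes essentially the same route as the paper's own proof: persistence of classes under clipping via $[\sot]\circledcirc[\Sigma]=\lbrace[\Sigma]\rbrace$ (Proposition \ref{neutreclips}), generation of the missing cyclic classes through the self-clips $[\DD_n]\circledcirc[\DD_n]$, and closure by inspection of Table \ref{global} with Lemma \ref{isotropexce} controlling the exceptional classes. Your write-up is merely more explicit than the paper's about the induction structure and the even-$k$ range $k/2<n\leq k$ where the cyclic classes must be manufactured.
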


\begin{proof}
From theorem~\ref{isotropyirreducible} we know that $[\DD_l]\in \mathfrak{I}^k$ for $2\leq l \leq k$; furthermore $[\sot]\in \mathfrak{I}^k$ then from lemma~\ref{neutreclips} we know that, for all integers $2\leq l \leq k$ we will have (by induction), for all $n\geq 2$ $[\DD_l]\in \mathfrak{I}(k,n)$.
Then, when we compute $\mathfrak{I}(k,n)\circledcirc \mathfrak{I}^k$ we will have $[\DD_l]\circledcirc [\DD_l]=\lbrace [\id],[\CC_l],[\DD_l]\rbrace$
Neither $[\ode]$ and $[\sod]$, with cyclic or dihedral conjugacy classes, generate other cases. The same occurs for clips operation of cyclic groups. Now, because  of the lemma~\ref{isotropexce} we also see that no exceptional conjugacy class generates other cases. We have then prove our corollary. 
\end{proof}


\begin{cor}\label{deuxisospairs}
For every integers $2\leq 2p<2q$, we have 
\begin{eqnarray*}
\mathfrak{I}(2p,2q):=\mathfrak{I}^{2p}\circledcirc \mathfrak{I}^{2q}&=&\lbrace [\id],[\CC_2],\cdots,[\CC_{\text{max}(q;2p)}],[\DD_2],\cdots,[\DD_{2q}], \\
& &[\Gamma_T(2p)\cup \Gamma_T(2q)],[\Gamma_O(2p)\cup \Gamma_O(2q)],[\Gamma_I(2p)\cup \Gamma_I(2q)], \\
& &[\ode],[\sot]\rbrace
\end{eqnarray*}
\end{cor}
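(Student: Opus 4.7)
The plan is to combine the description of $\mathfrak{I}^{2k}$ from Theorem~\ref{isotropyirreducible} with the clips table (Theorem~\ref{thm:clipssot}, Table~\ref{global}). Since $2p$ and $2q$ are both even, Theorem~\ref{isotropyirreducible} yields
\[
\mathfrak{I}^{2k}=\{[\id],[\CC_2],\ldots,[\CC_k],[\DD_2],\ldots,[\DD_{2k}],[\Gamma_{\TT(2k)}],[\Gamma_{\OO(2k)}],[\Gamma_{\II(2k)}],[\ode],[\sot]\},
\]
and in particular $[\sod]\notin\mathfrak{I}^{2k}$ (it appears only for odd $k$). I will then split the verification of the identity $\mathfrak{I}^{2p}\circledcirc\mathfrak{I}^{2q}=\bigcup_{\Sigma_1,\Sigma_2}[\Sigma_1]\circledcirc[\Sigma_2]$ into two inclusions.

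For the inclusion $\supseteq$, I would first invoke Proposition~\ref{neutreclips}: since $[\sot]\in\mathfrak{I}^{2p}\cap\mathfrak{I}^{2q}$, the clips $[\sot]\circledcirc[\Sigma]=\{[\Sigma]\}$ already captures the whole of $\mathfrak{I}^{2p}\cup\mathfrak{I}^{2q}$. This immediately provides all $[\DD_n]$ with $n\leq 2q$, all $[\CC_n]$ with $n\leq q$, the class $[\ode]$, and every exceptional class in $\Gamma_{\TT(2p)}\cup\Gamma_{\TT(2q)}$ and similarly for $\OO$ and $\II$. To obtain the remaining cyclic classes $[\CC_n]$ for $q<n\leq 2p$ (relevant only when $2p>q$), I would use the clips $[\DD_n]\circledcirc[\DD_n]$: for each such $n$ both copies lie respectively in $\mathfrak{I}^{2p}$ and in $\mathfrak{I}^{2q}$ (since $n\leq 2p\leq 2q$), and the table entry with $d=\gcd(n,n)=n$ contains $[\CC_n]$. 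This covers all $[\CC_n]$ with $n\leq\max(q,2p)$.

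For the inclusion $\subseteq$, I would sweep each relevant entry of Table~\ref{global} for pairs $(\Sigma_1,\Sigma_2)\in\mathfrak{I}^{2p}\times\mathfrak{I}^{2q}$ and check that no new class appears. The key observations are: cyclic$\circledcirc$cyclic and cyclic$\circledcirc$dihedral produce only cyclic classes bounded by the gcd of their indices, hence by $\min(p,q)=p$; dihedral$\circledcirc$dihedral remains within cyclic and dihedral classes with gcd-bound $\gcd(m,n)\leq 2p$ and $\min(m,n)\leq 2p$; $[\ode]\circledcirc[\ode]=\{[\CC_2],[\ode]\}$, and $[\ode]$ against any cyclic or dihedral yields only $[\CC_2]$ or the same cyclic/dihedral; clips involving $\TT,\OO,\II$ return only small-index cyclic or dihedral classes (all already listed) or the same exceptional group (already captured by the unions $\Gamma_{\TT(2p)}\cup\Gamma_{\TT(2q)}$, etc.). No clips produces $[\sod]$, consistently with its absence from both irreducible lists.

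The main obstacle is the exhaustive bookkeeping in the last step: one must check, cell by cell of Table~\ref{global}, that the auxiliary parameters $d,d_2,d_2',dz,d_3,d_4,d_5$ never yield a cyclic class of index exceeding $\max(q,2p)$, a dihedral class of index exceeding $2q$, or an exceptional class outside $\Gamma_{\TT(2p)}\cup\Gamma_{\TT(2q)}$, $\Gamma_{\OO(2p)}\cup\Gamma_{\OO(2q)}$ and $\Gamma_{\II(2p)}\cup\Gamma_{\II(2q)}$. This verification is tedious but entirely mechanical given the explicit table.
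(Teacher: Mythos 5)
Your proposal is correct and takes essentially the same route as the paper's own proof: the inclusion $\supseteq$ via clipping with $[\sot]$ (Proposition~\ref{neutreclips}) together with $[\CC_n]\in[\DD_n]\circledcirc[\DD_n]$ for $n\leq 2p$ to supply the cyclic classes beyond $[\CC_q]$, and the inclusion $\subseteq$ via a sweep of Table~\ref{global} with Lemma~\ref{isotropexce} controlling the exceptional subgroups. One correction to your sweep: your claim that exceptional clips return ``only small-index cyclic or dihedral classes or the same exceptional group'' fails for the cell $[\OO]\circledcirc[\II]=\lbrace[\id],[\CC_2],[\CC_3],[\DD_3],[\TT]\rbrace$, where $[\TT]$ is neither operand, so to keep this output inside $[\Gamma_{\TT(2p)}\cup\Gamma_{\TT(2q)}]$ you must add the arithmetic observation from Theorem~\ref{isotropyirreducible} that for even order $k$ the membership $\II\in\mathfrak{I}^k$ (possible only for $k\in\lbrace 6,10,12,18\rbrace$ or even $k\geq 20$) always forces $\TT\in\mathfrak{I}^k$ --- a check your final paragraph anticipates but your blanket statement glosses over.
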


\begin{proof}
First of all, it is clear that because $[\sot]\in \mathfrak{I}^{k_i}$ ($i=1,2$) we will have all $[\DD_l]$ for $2\leq l \leq 2q$. Now we will have all $[\CC_i]\circledcirc [\sot]$ for all $1\leq i \leq p$. We also have $[\CC_j]\in [\DD_j]\circledcirc [\DD_j]$ for all $1\leq j \leq 2q$; this show that 
\[
\lbrace [\id],[\CC_2],\cdots,[\CC_{\text{max}(q;2p)}] \rbrace \subset \mathfrak{I}[2p,2q]
\]
Now, we can observe that clips operation of dihedral groups and $[\ode]$ does not generate cyclic groups; and lemma~\ref{isotropexce} shows that no other cases can be generated with exceptional subgroups. 
\end{proof}

%
\section{Isotropy classes of constitutive tensors}\label{IsoConst}

\subsection{The symmetry classes of even-order tensor space}

Let consider the constitutive tensor space $\mathbb{T}^{2n}$, it is known that this space can be decomposed orthogonally into a full symmetric space and a complementary one which is isomorphic to a tensor space of order $2n-1$ \cite{JCB78}, i.e. :
\begin{equation*}
\mathbb{T}^{2n}=\mathbb{S}^{2n}\oplus \mathbb{C}^{2n-1}
\end{equation*}
Let consider the $\sot$-isotypic decomposition of $\mathbb{T}^{2n}$
\begin{equation*}
\mathbb{T}^{2n}=\bigoplus_{k=0}^{2n}\alpha_{k}\mathbb{H}^{k},\quad\text{with}\ \alpha_{2n}=1
\end{equation*}
The part related to $\mathbb{S}^{2n}$ solely contains even order harmonic tensors with multiplicity one \cite{JCB78}, i.e.
\begin{equation*}
\mathbb{S}^{2n}=\bigoplus_{k=0}^{n}\mathbb{H}^{2k}\quad \mathrm{and}\quad \mathbb{C}^{2n-1}=\bigoplus_{k=0}^{2n-1}\alpha'_{k}\mathbb{H}^{k}\quad \text{with}\ 
\alpha'_{k}=
\begin{cases}
\alpha_{k}\ \text{for $k$ odd}\\
\alpha_{k}-1\ \text{for $k$ even}\\
\end{cases}
\end{equation*}
Using the clips operator, the symmetry classes of $\mathbb{T}^{2n}$ can be expressed:
\ben
\mathfrak{I}(\mathbb{T}^{2n}):=\mathfrak{I}(\mathbb{S}^{2n})\circledcirc\mathfrak{I}(\mathbb{C}^{2n-1})
\een
Let us first determined the symmetry classes of $\mathbb{S}^{2n}$. Using the results of the previous section, we have:
\begin{lem}\label{symTens}
\ban
n=1,\  \mathfrak{I}(\mathbb{S}^{2})&=&\lbrace [\DD_2],[\ode],[\sot]\rbrace\\
n=2,\  \mathfrak{I}(\mathbb{S}^{4})&=&\lbrace [\id],[\CC_2],[\DD_2],[\DD_3],[\DD_{4}],[\ode],[\OO],[\sot]\rbrace\\
\forall\ n\geq 3,\ 
\mathfrak{I}(\mathbb{S}^{2n})&=&\lbrace [\id],[\CC_2],\cdots,[\CC_{2(n-1)}],[\DD_2],\cdots,[\DD_{2n}],[\ode],[\TT],[\OO],[\II],[\sot]\rbrace
\ean
With the following cardinal properties: 
\ben
\#\mathfrak{I}(\mathbb{S}^{2})=3\quad;\quad \#\mathfrak{I}(\mathbb{S}^{4})=8\quad;\quad \#\mathfrak{I}(\mathbb{S}^{2n})=2(2n+1)
\een
\end{lem}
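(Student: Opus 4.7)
The plan is to apply the iterated clips formula arising from the $\sot$-isotypic decomposition $\mathbb{S}^{2n}=\bigoplus_{k=0}^{n}\mathbb{H}^{2k}$ (each summand of multiplicity one, as recalled at the start of the section). Iterating Corollary~\ref{correduction} and discarding the neutral $\mathfrak{I}^0=\{[\sot]\}$ via Proposition~\ref{neutreclips} yields
\[
\mathfrak{I}(\mathbb{S}^{2n})=\mathfrak{I}^2\circledcirc \mathfrak{I}^4\circledcirc \cdots \circledcirc \mathfrak{I}^{2n}.
\]
For $n=1$ only $\mathfrak{I}^2=\{[\DD_2],[\ode],[\sot]\}$ survives. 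For $n=2$ I would compute the pair $\mathfrak{I}^2\circledcirc \mathfrak{I}^4$ using Corollary~\ref{deuxisospairs} or directly from table~\ref{global}: the inclusion $\mathfrak{I}^4\subseteq \mathfrak{I}^2\circledcirc \mathfrak{I}^4$ is obtained by taking $[\sot]\in\mathfrak{I}^2$, and a short scan of the table shows that every $[\DD_2]\circledcirc [\Sigma]$ and $[\ode]\circledcirc [\Sigma]$ with $[\Sigma]\in \mathfrak{I}^4$ lands back in $\mathfrak{I}^4$.

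For $n\geq 3$ I would prove both inclusions of the claimed set
\[
\{[\id],[\CC_2],\ldots,[\CC_{2n-2}],[\DD_2],\ldots,[\DD_{2n}],[\ode],[\TT],[\OO],[\II],[\sot]\}.
\]
The direction $\supseteq$ uses the fact that every class in a single $\mathfrak{I}^{2k}$ with $k\leq n$ automatically belongs to $\mathfrak{I}(\mathbb{S}^{2n})$, obtained by placing $[\sot]$ in the other summands. Theorem~\ref{isotropyirreducible} at $k=n$ supplies $[\id]$, all $[\CC_l]$ with $2\leq l\leq n$, all $[\DD_l]$ with $2\leq l\leq 2n$, $[\ode]$, $[\OO]$ and $[\sot]$; applied at $k=3$ (available since $n\geq 3$) it supplies $[\TT]$ and $[\II]$. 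The cyclic classes $[\CC_l]$ with $n<l\leq 2n-2$ are the only ones requiring a genuine pairing: since $[\DD_l]\in \mathfrak{I}^{2k}$ iff $2k\geq l$, the class $[\DD_l]$ occurs in both $\mathfrak{I}^{2(n-1)}$ and $\mathfrak{I}^{2n}$, and table~\ref{global} gives $[\CC_l]\in [\DD_l]\circledcirc [\DD_l]$.

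The hard direction is $\subseteq$. By Lemma~\ref{sotrois} any class of $\mathfrak{I}(\mathbb{S}^{2n})$ is a conjugacy class of one of the listed $\sot$-closed subgroups, so it remains to exclude $[\sod]$, $[\CC_l]$ with $l>2n-2$, and $[\DD_l]$ with $l>2n$. For an intersection $\Sigma=\bigcap_k g_k \Sigma_k g_k^{-1}$ to equal a target $H$, each $\Sigma_k$ must contain a conjugate of $H$; the list of $\sot$-subgroups containing $H$ is short, and intersecting with $\mathfrak{I}^{2k}$ for $k\leq n$ leaves only $[\ode]$, $[\sot]$, and (if any) a single large dihedral class in $\mathfrak{I}^{2n}$. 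Table~\ref{global} then rules out $H$ in each case: $[\ode]\circledcirc [\ode]=\{[\CC_2],[\ode]\}$ kills $[\sod]$ and every $[\DD_l]$ with $l>2n$; for $H=[\CC_l]$ with $l\in\{2n-1,2n\}$ the relevant clips $[\DD_l]\circledcirc [\ode]=\{[\CC_2],[\DD_l]\}$ and $[\DD_l]\circledcirc [\sot]=\{[\DD_l]\}$ never yield $[\CC_l]$. The cardinality $2(2n+1)$ follows by counting the classes in the statement. The main obstacle is this last case analysis: one has to carefully exploit the fact that each large dihedral group appears in only one $\mathfrak{I}^{2k}$ (namely $k=n$) to prevent the creation of forbidden cyclic classes.
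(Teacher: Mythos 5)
Your proof is correct and takes essentially the same route as the paper's: the multiplicity-one decomposition $\mathbb{S}^{2n}=\bigoplus_{k=0}^{n}\mathbb{H}^{2k}$, Corollary~\ref{deuxisospairs} applied to the two leading summands $\mathbb{H}^{2(n-1)}$ and $\mathbb{H}^{2n}$, the summand $\mathbb{H}^{6}$ supplying $[\TT]$ and $[\II]$ for $n\geq 3$, and the exclusion of $[\sod]$, $[\CC_{2n-1}]$, $[\CC_{2n}]$ and the larger dihedral classes --- your explicit ``each factor must contain a conjugate of the target'' argument merely fleshes out the paper's one-line remark that clipping with $\mathfrak{I}^{2(n-2)}$ ``can never complete the sequence''. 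One small slip in your enumeration: for $n=3$ and $H=\CC_5$ the factor from $\mathfrak{I}^{6}$ may also be $[\II]$ (not only $[\ode]$, $[\sot]$ and the dihedral $[\DD_5]$), but Table~\ref{global} gives $[\ode]\circledcirc[\II]=\lbrace [\id],[\DD_2],[\DD_3],[\DD_5]\rbrace$, which still excludes $[\CC_5]$, so your conclusion is unaffected.
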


\begin{proof}
The case $n=1$ is obtained as a direct application of the theorem \ref{isotropyirreducible} and the proposition \ref{neutreclips}.
For $n\geq2$, let consider the corollary \ref{deuxisospairs} in the case $k_{1}=2(n-1)$ and $k_{2}=2n$. 
\begin{eqnarray*}
\mathfrak{I}(2(n-1),2n)&:=&\lbrace [\id],[\CC_2],\cdots,[\CC_{2(n-1)}],[\DD_2],\cdots,[\DD_{2n}],[\Gamma_T(2(n-1))\cup \Gamma_T(2n)], \\
 &&[\Gamma_O(2(n-1))\cup \Gamma_O(2n)],[\Gamma_I(2(n-1))\cup \Gamma_I(2n)],[\ode],[\sot]\rbrace 
\end{eqnarray*}
In the collection of planar isotropy classes, $[\CC_{2n-1}]$ and $[\CC_{2n}]$ are missing. It should be observed that the clips operation $\mathfrak{I}[2(n-1),2n]\circledcirc\mathfrak{I}^{2(n-2)}$ can never complete the sequence.
%
%

For exceptional groups it can be observed that for any $n\geq3$ the $\sot$-irreducible decomposition will contain $\mathbb{H}^{6}$. As $\{[\TT],[\OO],[\II]\}$ are isotropy classes for $\mathbb{H}^{6}$, it would be the same for any space that contains $\mathbb{H}^{6}$. 

$\bullet$ Therefore, for $n\geq3$ 
\begin{equation*}
\mathfrak{I}(\mathbb{S}^{2n})=\lbrace [\id],[\CC_2],\cdots,[\CC_{2(n-1)}],[\DD_2],\cdots,[\DD_{2n}],[\ode],[\TT],[\OO],[\II],[\sot]\rbrace
\end{equation*}
and, $\#\mathfrak{I}(\mathbb{T}_{s}^{2n})=2(2n-1)$.

$\bullet$ For the case $n=2$, we obtain the same result but without the class $\TT$ and $\II$ and, in such a case, $\#\mathfrak{I}(\mathbb{T}_{s}^{4})=8$.
\end{proof}


\begin{defn}
For a given $\sot$ representation on the tensor space $\mathbb{T}^{2n}$ ($n\geq 3$), we define 
\begin{eqnarray*}
\mathcal{C}(2n)&=&\lbrace [\id],[\CC_2],\cdots,[\CC_{2n}],[\DD_2],\cdots,[\DD_{2n}],[\sod],[\ode],[\TT],[\OO],[\II],[\sot]\rbrace
\end{eqnarray*}
We also define
\begin{eqnarray*}
n=1,\ \mathcal{C}(2)&=&\lbrace [\id],[\CC_2],[\DD_2],[\sod],[\ode],[\sot]\rbrace\\
n=2,\ \mathcal{C}(4)&=&\lbrace [\id],[\CC_2],\cdots,[\CC_{4}],[\DD_2],\cdots,[\DD_{4}],[\sod],[\ode],[\TT],[\OO],[\sot]\rbrace\\
\end{eqnarray*} 
\end{defn}

One can observe that these sets are in fact all the isotropy classes allowed by the Hermann theorem, and we clearly have the following cardinal properties: 
\ben
\# \mathcal{C}(2)=6\quad;\quad \# \mathcal{C}(4)=12\quad;\quad \# \mathcal{C}(2n,n\geq3)=4n+5
\een


\begin{defn}
Let $\mathbb{T}^{2n}$ be a tensor space which $\sot$-irreducible decomposition is  $\mathbb{T}^{2n}\simeq\bigoplus_{k=0}^{2n}\alpha_{k}\mathbb{H}^{k}$. $\mathbb{T}^{2n}$ is said to be \textit{even-harmonic} (EH) if $\alpha_{2p+1}=0$ for each $0\leq p\leq(n-1)$.  \\
\end{defn}

\begin{lem}
Let $\mathbb{G}^{2n}$ be the vector space of $2n$-th order tensors with no index symmetries, $\mathbb{G}^{2n}$ is not EH.
\end{lem}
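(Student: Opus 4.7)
The plan is to prove $\mathbb{G}^{2n}$ is not EH by exhibiting an $\sot$-equivariant embedding $\mathbb{H}^{1}\hookrightarrow \mathbb{G}^{2n}$, which forces the multiplicity $\alpha_{1}$ of the odd-order harmonic $\mathbb{H}^{1}$ in the isotypic decomposition to be strictly positive, directly contradicting the EH definition.

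For the base case $n=1$, I would invoke the classical Clebsch--Gordan decomposition
\[
\mathbb{G}^{2}=\mathbb{R}^{3}\otimes \mathbb{R}^{3}\simeq \mathbb{H}^{0}\oplus \mathbb{H}^{1}\oplus \mathbb{H}^{2},
\]
where the trace, antisymmetric, and symmetric-traceless parts realize the three summands. The $\mathbb{H}^{1}$-component is concretely the antisymmetric part of a rank-two tensor, identified with a vector via the Levi-Civita tensor $\varepsilon_{ijk}$. Hence $\alpha_{1}=1$ and $\mathbb{G}^{2}$ is not EH.

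For $n\geq 2$, I would use the $\sot$-equivariant factorization $\mathbb{G}^{2n}\simeq \mathbb{G}^{2}\otimes \mathbb{G}^{2(n-1)}$. Since $2(n-1)\geq 2$ is a positive even integer, the tensor
\[
\delta_{i_{1}i_{2}}\delta_{i_{3}i_{4}}\cdots \delta_{i_{2n-3}i_{2n-2}}\in \mathbb{G}^{2(n-1)}
\]
is nonzero and $\sot$-invariant, so the trivial representation $\mathbb{H}^{0}$ embeds in $\mathbb{G}^{2(n-1)}$. Tensoring the embedded copy of $\mathbb{H}^{1}$ from $\mathbb{G}^{2}$ with this invariant line yields
\[
\mathbb{H}^{1}\simeq \mathbb{H}^{1}\otimes \mathbb{H}^{0}\hookrightarrow \mathbb{G}^{2}\otimes \mathbb{G}^{2(n-1)}\simeq \mathbb{G}^{2n},
\]
so $\alpha_{1}\geq 1$. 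Therefore the odd-order harmonic $\mathbb{H}^{1}$ appears nontrivially in the $\sot$-decomposition of $\mathbb{G}^{2n}$, and $\mathbb{G}^{2n}$ is not EH.

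The main obstacle is essentially nonexistent: the only inputs beyond the definitions are the standard $\sot$ Clebsch--Gordan identity $\mathbb{H}^{1}\otimes \mathbb{H}^{1}=\mathbb{H}^{0}\oplus \mathbb{H}^{1}\oplus \mathbb{H}^{2}$ and the elementary remark that products of Kronecker deltas span nonzero $\sot$-invariants in every positive even-order tensor space. A secondary check worth recording is that the embedding constructed is indeed $\sot$-equivariant and injective, which is immediate from the fact that it is the tensor product of an equivariant embedding with a nonzero invariant vector.
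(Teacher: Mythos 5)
Your proof is correct and takes essentially the same route as the paper: the paper's one-line argument likewise rests on $\mathbb{G}^{2n}=\otimes^{2n}\mathbb{R}^3$ being a tensor power of the vectorial representation and simply asserts that this forces odd-order harmonics in the decomposition, while you supply the explicit witness, namely $\mathbb{H}^{1}$ realized as the antisymmetric part of $\mathbb{R}^{3}\otimes\mathbb{R}^{3}$ tensored with the invariant product of Kronecker deltas in $\mathbb{G}^{2(n-1)}$. Nothing is missing; your version merely fills in, correctly, the details the paper leaves implicit.
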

\begin{proof}
For $n\geq1$, the induced reducible $\sot$-representation on $\mathbb{G}^{2n}=\otimes^{2n}\mathbb{R}^3$ is construct by tensorial products of the vectorial one. Such a construction implies odd-order tensors in the harmonic decomposition of $\mathbb{G}^{2n}$. 
\end{proof}

\noindent Now, the following proposition can be \textbf{proved}
\begin{thmI}
Let $\mathbb{T}^{2n}$ be a tensor space, for any $n\geq 3$, if $\mathbb{T}^{2n}$ is EH then $\mathfrak{I}(\mathbb{T}^{2n})=\mathfrak{I}(\mathbb{S}^{2n})$, otherwise $\mathfrak{I}(\mathbb{T}^{2n})=\mathfrak{I}(\mathbb{G}^{2n})$.
\end{thmI}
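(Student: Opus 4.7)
The plan is to combine the orthogonal decomposition $\mathbb{T}^{2n}=\mathbb{S}^{2n}\oplus\mathbb{C}^{2n-1}$ with Corollary~\ref{correduction} to write
\[
\mathfrak{I}(\mathbb{T}^{2n})=\mathfrak{I}(\mathbb{S}^{2n})\circledcirc \mathfrak{I}(\mathbb{C}^{2n-1}),
\]
and then observe that, for $n\geq 3$, the reference lists $\mathfrak{I}(\mathbb{G}^{2n})=\mathcal{C}(2n)$ and $\mathfrak{I}(\mathbb{S}^{2n})$ differ in exactly three classes: $[\sod]$, $[\CC_{2n-1}]$ and $[\CC_{2n}]$. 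The inclusion $\mathfrak{I}(\mathbb{S}^{2n})\subseteq\mathfrak{I}(\mathbb{T}^{2n})$ is automatic from $[\sot]\in\mathfrak{I}(\mathbb{C}^{2n-1})$ together with Proposition~\ref{neutreclips}, while $\mathfrak{I}(\mathbb{T}^{2n})\subseteq\mathcal{C}(2n)$ follows from Hermann's theorem. Everything therefore reduces to deciding whether these three extra classes appear on the right-hand side of the clips identity.

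First I would handle the EH case. Here $\mathbb{C}^{2n-1}=\bigoplus_{k=0}^{n-1}(\alpha_{2k}-1)\mathbb{H}^{2k}$ contains only even-order harmonics of degree at most $2n-2$. Iterating Corollaries~\ref{multiirred} and~\ref{deuxisospairs}, together with Theorem~\ref{isotropyirreducible} for the building blocks $\mathfrak{I}^{2k}$, yields
\[
\mathfrak{I}(\mathbb{C}^{2n-1})\subseteq\{[\id],[\CC_2],\ldots,[\CC_{2n-2}],[\DD_2],\ldots,[\DD_{2n-2}],[\ode],[\TT],[\OO],[\II],[\sot]\}\subseteq\mathfrak{I}(\mathbb{S}^{2n}),
\]
so in particular $[\sod]$, $[\DD_{2n-1}]$ and $[\DD_{2n}]$ never appear in $\mathfrak{I}(\mathbb{C}^{2n-1})$. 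A row-by-row scan of Table~\ref{global} then confirms that every pairing $[\Sigma_1]\circledcirc[\Sigma_2]$ with $[\Sigma_1]\in\mathfrak{I}(\mathbb{S}^{2n})$ and $[\Sigma_2]\in\mathfrak{I}(\mathbb{C}^{2n-1})$ produces only cyclic or dihedral outputs of order $\gcd(a,b)\leq 2n-2$ (so neither $[\CC_{2n-1}]$ nor $[\CC_{2n}]$ can arise), and that $[\sod]$ cannot arise either since its appearance as an output requires $\sod$ in at least one operand. Hence $\mathfrak{I}(\mathbb{T}^{2n})=\mathfrak{I}(\mathbb{S}^{2n})$.

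Next I would treat the non-EH case, in which some $\mathbb{H}^{2m+1}$ sits inside $\mathbb{C}^{2n-1}$. Theorem~\ref{isotropyirreducible}(g) then gives $[\sod]\in\mathfrak{I}^{2m+1}\subseteq\mathfrak{I}(\mathbb{C}^{2n-1})$, while Lemma~\ref{symTens} provides $[\DD_{2n-1}],[\DD_{2n}]\in\mathfrak{I}(\mathbb{S}^{2n})$. Reading off the row $[\sod]$ of Table~\ref{global} yields
\[
[\DD_{2n}]\circledcirc[\sod]=\{[\CC_2],[\CC_{2n}]\},\quad [\DD_{2n-1}]\circledcirc[\sod]=\{[\CC_2],[\CC_{2n-1}]\},\quad [\sot]\circledcirc[\sod]=\{[\sod]\},
\]
so the three missing classes land in $\mathfrak{I}(\mathbb{T}^{2n})$. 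Since $\mathfrak{I}(\mathbb{S}^{2n})\subseteq\mathfrak{I}(\mathbb{T}^{2n})$ already carries $[\TT],[\OO],[\II]$ for $n\geq 3$, this gives $\mathfrak{I}(\mathbb{G}^{2n})\subseteq\mathfrak{I}(\mathbb{T}^{2n})$; Hermann's upper bound closes the equality.

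The main obstacle lies on the EH side: one must check, by an exhaustive use of Table~\ref{global}, that no cyclic class of order exceeding $2n-2$ and no class containing $\sod$ can be produced from clips in which every operand has dihedral order at most $2n-2$ and no operand contains $\sod$. The structural ingredient that makes this work is the multiplicity-one condition $\alpha_{2n}=1$, since it is precisely what forbids the dangerous self-clip $[\DD_{2n}]\circledcirc[\DD_{2n}]\ni[\CC_{2n}]$ from being accessible in the EH regime.
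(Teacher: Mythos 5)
Your proposal is correct and follows essentially the same route as the paper: the decomposition $\mathbb{T}^{2n}=\mathbb{S}^{2n}\oplus\mathbb{C}^{2n-1}$ combined with Corollary~\ref{correduction} and Lemma~\ref{symTens}, plus the observation that a nonzero odd harmonic component supplies $[\sod]$, whose clips with $[\DD_{2n-1}]$, $[\DD_{2n}]$ and $[\sot]$ generate exactly the classes separating $\mathfrak{I}(\mathbb{S}^{2n})$ from $\mathcal{C}(2n)=\mathfrak{I}(\mathbb{G}^{2n})$, while in the EH case no operand contains $[\sod]$ or a cyclic/dihedral class of order exceeding $2n-2$, so the missing classes cannot be produced (the paper phrases this last step as collapsing to $\mathfrak{I}(\mathbb{S}^{2n})\circledcirc\mathfrak{I}^{2(n-1)}$, which does not contain $[\sod]$). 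Your only loose phrase is that every pairing yields outputs of order $\gcd(a,b)\leq 2n-2$: clips with $[\ode]$ or $[\sot]$ reproduce $[\DD_{2n-1}]$ and $[\DD_{2n}]$, but as these already lie in $\mathfrak{I}(\mathbb{S}^{2n})$ the conclusion is unaffected.
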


\begin{proof}
Let consider the $\sot$-irreducible decomposition of $\mathbb{T}^{2n}$, this decomposition can be written $\mathbb{T}^{2n}\simeq\mathbb{S}^{2n}\oplus\mathbb{C}^{2n-1}$.
The following inclusion is always true 
\begin{equation*}
\mathfrak{I}(\mathbb{S}^{2n})\subseteq\mathfrak{I}(\mathbb{T}^{2n})\subseteq\mathfrak{I}(\mathbb{G}^{2n})\subseteq\mathcal{C}(2n)
\end{equation*}
If $\mathbb{T}^{2n}$ is not EH, there exists \textbf{at least one} $k\in\mathbb{N}$ such as $\alpha_{2k+1}\neq0$
\begin{equation*}
\mathfrak{I}(\mathbb{S}^{2n})\circledcirc[\sod]\subseteq\mathfrak{I}(\mathbb{S}^{2n})\circledcirc\mathfrak{I}(\mathbb{C}^{2n-1})=\mathfrak{I}(\mathbb{T}^{2n})
\end{equation*}
since, from theorem \ref{isotropyirreducible}, any odd-order harmonic tensor contains $[\sod]$ in its symmetry classes.
As, from lemma \ref{symTens}, dihedral groups are contained up to $2n$ in $\mathfrak{I}(\mathbb{S}^{2n})$, the missing cyclic groups of $\mathfrak{I}(\mathbb{S}^{2n})$ are obtained by clips products with $[\sod]$.
Therefore 
\begin{equation*}
\mathfrak{I}(\mathbb{S}^{2n})\circledcirc\sod=\mathcal{C}(2n) \quad\text{hence}\quad \mathfrak{I}(\mathbb{T}^{2n})=\mathcal{C}(2n)
\end{equation*}
As $\mathbb{G}^{2n}$ is not EH, $\mathfrak{I}(\mathbb{G}^{2n})=\mathcal{C}(2n)$. Therefore we reach the sought conclusion, that if $\mathbb{T}^{2n}$ is not EH
$\mathfrak{I}(\mathbb{T}^{2n})=\mathfrak{I}(\mathbb{G}^{2n})$.

Conversely, if $\mathbb{T}^{2n}$ is  EH, $\mathbb{C}^{2n-1}$ solely contains even order irreducible spaces and its leading harmonic spaces are, at most, of order $2(n-1)$.
If the leading harmonic spaces order are strictly less than $2(n-1)$, the same analysis as for lemma.\ref{symTens} lead to the same conclusion.
Suppose now, that $\alpha'_{2(n-1)}\geq1$, using all the previous results,we have
\ban
\mathfrak{I}(\mathbb{T}^{2n})&=&\mathfrak{I}(\mathbb{S}^{2n})\circledcirc\mathfrak{I}(\mathbb{C}^{2n-1})\\
	&=&(\mathfrak{I}^{2n}\circledcirc\mathfrak{I}^{2(n-1)})\circledcirc(\mathfrak{I}(2(n-1),\alpha'_{2(n-1)})\circledcirc\mathfrak{I}(2(n-2),\alpha'_{2(n-2)}))\\
	&=&(\mathfrak{I}^{2n}\circledcirc\mathfrak{I}(2(n-1),2))=\mathfrak{I}(\mathbb{S}^{2n})\circledcirc\mathfrak{I}^{2(n-1)}
\ean
As $\mathfrak{I}^{2(n-1)}$ does not contain $[\sod]$ the missing classes cannot be generated, therefore $\mathfrak{I}(\mathbb{T}^{2n})=\mathfrak{I}(\mathbb{S}^{2n})$.
\end{proof}
Therefore it should be concluded that for any $2n$-order tensor space, symmetry classes are either the same as $\mathbb{S}^{2n}$ or the same as $\mathbb{G}^{2n}$.
Let us now investigate, in the next subsection, under which conditions a constitutive tensor space falls into one of the identified two possibilities.

\subsection{Construction of a constitutive tensor space}

\textbf{This last subsection will be devoted to the proof of our main result.} The space of constitutive tensors is a subspace of linear applications from $\mathbb{E}_{1}$ to $\mathbb{E}_{2}$. As seen in \autoref{s:MaiRes} 
\begin{equation*}
\mathcal{L}(\mathbb{E}_{1},\mathbb{E}_{2})\simeq \mathbb{E}_{1}\otimes \mathbb{E}_{2}\subset \mathbb{T}^{p}\otimes\mathbb{T}^{q}\simeq\mathbb{T}^{2n=p+q}
\end{equation*}
These vector spaces described the physical quantities involved in the problem understudy. We know, from the previous section, that any CTS has as many symmetry classes as either the complete symmetric tensor spaces, or the generic tensor space. 
Here we are interested in obtaining the conditions both on $\mathbb{E}_{1}$ to $\mathbb{E}_{2}$ and on the tensor product (symmetric or not) under which $\mathbb{T}^{2n}$ is even order, and therefore has a minimal number of symmetry classes. To that aim, distinction will be made between coupling and proper tensor spaces, in the sense previously defined in \autoref{s:MaiRes}.

\subsubsection*{Coupling tensor spaces}
We consider here two STS given by their $\sot$ isotypic decomposition:
\[
\mathbb{E}_{1}=\mathbb{T}^p=\bigoplus_{i=0}^p \beta_i\mathbb{H}^i \text{ and } \mathbb{E}_{2}=\mathbb{T}^q=\bigoplus_{i=0}^q \gamma_j\mathbb{H}^j \text{ with } \beta_p=\gamma_q=1
\]
Now results are given by the two following lemmas:
\begin{lem}
If $\mathbb{E}_{1}\neq \mathbb{E}_{2}$, $p>q$ and if $\mathbb{T}^p\otimes \mathbb{T}^q$ is EH, then $\mathbb{T}^{p}$ is EH and $\mathbb{T}^{q}=\mathbb{H}^{0}$;
\end{lem}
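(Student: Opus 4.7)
The plan is to use the Clebsch--Gordan rule for $\sot$ together with a parity argument on the resulting harmonic decomposition. First, I would expand
\[
\mathbb{T}^{p}\otimes\mathbb{T}^{q}\simeq\bigoplus_{i=0}^{p}\bigoplus_{j=0}^{q}\beta_{i}\gamma_{j}\,(\mathbb{H}^{i}\otimes\mathbb{H}^{j}),
\]
and invoke the Clebsch--Gordan decomposition $\mathbb{H}^{i}\otimes\mathbb{H}^{j}\simeq\bigoplus_{k=|i-j|}^{i+j}\mathbb{H}^{k}$, in which every integer $k$ in the interval $[|i-j|,i+j]$ appears with multiplicity one. (This is where the hypothesis $\mathbb{E}_{1}\neq\mathbb{E}_{2}$ enters: the CTS is built from the full tensor product, not the symmetric one, so no Clebsch--Gordan summand is suppressed.)

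The decisive observation is a parity argument. If both $i\geq 1$ and $j\geq 1$, then the two consecutive summands $\mathbb{H}^{i+j}$ and $\mathbb{H}^{i+j-1}$ both appear in $\mathbb{H}^{i}\otimes\mathbb{H}^{j}$, so both parities are represented. If $\mathbb{T}^{p}\otimes\mathbb{T}^{q}$ is EH, no pair $(i,j)$ with $\beta_{i}\gamma_{j}>0$ can satisfy $i\geq 1$ and $j\geq 1$; hence at least one of the indices must vanish for every contributing term.

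I would then specialize this constraint to the leading index $i=p$. Since $\beta_{p}=1$ and the assumption $p>q\geq 0$ forces $p\geq 1$, the alternative $i=0$ fails, so necessarily $j=0$ for every $j$ with $\gamma_{j}>0$. Combined with $\gamma_{q}=1$, this forces $q=0$ and $\mathbb{T}^{q}=\mathbb{H}^{0}$. In particular $\mathbb{T}^{q}$ is the trivial $\sot$-representation, so $\mathbb{T}^{p}\otimes\mathbb{T}^{q}\simeq\mathbb{T}^{p}$, and the EH property transfers immediately to $\mathbb{T}^{p}$.

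There is no real obstacle: the argument is a direct parity reading of Clebsch--Gordan. The only point requiring minor care is the boundary case, namely checking that the inequality $p>q$ together with $q\geq 0$ genuinely rules out $p=0$, so that the leading harmonic $\mathbb{H}^{p}$ is not itself a scalar and the constraint really bites on the indices $j$ of $\mathbb{T}^{q}$.
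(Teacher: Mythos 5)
Your proposal is correct and takes essentially the same route as the paper: both rest on the Clebsch--Gordan rule $\mathbb{H}^{i}\otimes\mathbb{H}^{j}\simeq\bigoplus_{k=|i-j|}^{i+j}\mathbb{H}^{k}$ and on the observation that whenever both factors have order at least $1$ the product contains two consecutive harmonic orders, hence an odd one, which forces $q=0$ (so $\mathbb{T}^{q}=\mathbb{H}^{0}$ by $\gamma_{q}=1$) and lets the EH property pass to $\mathbb{T}^{p}\simeq\mathbb{T}^{p}\otimes\mathbb{H}^{0}$. The only cosmetic difference is that the paper applies the argument just to the leading pair $\mathbb{H}^{p}\otimes\mathbb{H}^{q}$, whereas you run the same parity check over all contributing pairs $(i,j)$ before specializing to $(p,j)$.
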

\begin{proof}
It is sufficient to consider the tensor product of the two leading irreducible spaces, and to use the Clebsch-Gordan product for $\sot$ \cite{JCB78,Auf08}. We obtain
\ben
\mathbb{H}^{p}\otimes\mathbb{H}^{q}=\bigoplus_{i=p-q}^{p+q}\mathbb{H}^{i}
\een
Therefore $p$ must be even, and $q=0$, therefore $\mathbb{T}^{q}=\gamma_{0}\mathbb{H}^{0}$, and by hypothesis $\gamma_{0}=1$. Thus $\mathbb{T}^{p}$ has to be EH.
\end{proof}
\begin{lem}
If $\mathbb{E}_{1}=\mathbb{E}_{2}$ (and then $p=q$) and if $\mathbb{T}^p\otimes \mathbb{T}^p$ is EH then $\beta_{i}=\gamma_{i}$. Furthermore, if $\mathcal{L}(E_1)$ is not self-adjoint then $\mathbb{T}^{p}=\mathbb{H}^{0}$;
\end{lem}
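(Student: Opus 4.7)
The first assertion $\beta_i=\gamma_i$ is a tautology once $\mathbb{E}_1=\mathbb{E}_2$, since the $\sot$-isotypic decomposition of a representation is unique: identifying the two spaces automatically matches multiplicities, and comparing the leading nonzero components (with $\beta_p=\gamma_q=1$) forces $p=q$. So no real argument is needed for this half of the statement; it is recorded only for bookkeeping and contrast with the preceding $\mathbb{E}_1\neq\mathbb{E}_2$ lemma.

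For the second claim, I would proceed exactly as in the preceding lemma, by inspecting the top harmonic summand of the tensor square. Since $\mathcal{L}(\mathbb{E}_1)$ is not self-adjoint, the constitutive tensor space is isomorphic to the \emph{full} tensor product $\mathbb{T}^p\otimes\mathbb{T}^p$ and not to the symmetrized square $\mathbb{T}^p\otimes^S\mathbb{T}^p$. The Clebsch--Gordan rule for $\sot$ gives
\[
\mathbb{H}^p\otimes\mathbb{H}^p=\bigoplus_{k=0}^{2p}\mathbb{H}^k,
\]
and because $\beta_p=1$ for both factors this summand genuinely appears in the harmonic decomposition of $\mathbb{T}^p\otimes\mathbb{T}^p$. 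For any $p\geq 1$ the right-hand side contains $\mathbb{H}^{2j+1}$ for every $0\leq j\leq p-1$, in particular $\mathbb{H}^1$, which contradicts the EH hypothesis. Hence $p=0$, and the normalization $\beta_p=1$ then yields $\mathbb{T}^p=\mathbb{H}^0$.

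The only subtle point is conceptual rather than technical: one must see why the non--self-adjoint hypothesis is essential. If one instead forms $S^2(\mathbb{H}^p)$, the Clebsch--Gordan sum gets projected onto even-order harmonics only and the odd-parity obstruction used above disappears; this is precisely the mechanism that leaves room for nontrivial EH squares of the elasticity type (the exceptional cases highlighted in \textbf{Theorems II} and \textbf{III}). In the non--self-adjoint setting this escape route is unavailable, which is exactly what forces the degeneracy $\mathbb{T}^p=\mathbb{H}^0$.
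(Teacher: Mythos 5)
Your proposal is correct and follows essentially the same route as the paper: the paper's proof is literally ``the demonstration is the same as the preceding proof,'' i.e.\ apply the Clebsch--Gordan rule to the leading summand $\mathbb{H}^p\otimes\mathbb{H}^p=\bigoplus_{k=0}^{2p}\mathbb{H}^k$, observe that odd harmonics (e.g.\ $\mathbb{H}^{2p-1}$) appear whenever $p\geq 1$, and conclude $p=0$ and $\mathbb{T}^p=\mathbb{H}^0$ from $\beta_p=1$. Your added remarks --- that $\beta_i=\gamma_i$ is immediate from uniqueness of the isotypic decomposition once $\mathbb{E}_1=\mathbb{E}_2$, and that the symmetric square $\mathbb{H}^p\otimes^S\mathbb{H}^p=\bigoplus_{i=0}^{p}\mathbb{H}^{2i}$ is what removes the odd-parity obstruction in the self-adjoint case --- are accurate and consistent with the paper's Lemma \ref{lem:CriSym}.
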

\begin{proof}
The demonstration is the same as the preceding proof.
\end{proof}
These results can be summed-up in the following theorem:
\begin{thmII}
Let consider $\mathbb{T}^{2p}$  the space of coupling tensors between two physics described respectively by two tensor vector spaces $\mathbb{E}_{1}$ and $\mathbb{E}_{2}$. If these tensor spaces are of orders greater or equal to $1$, then $\mathfrak{I}(\mathbb{T}^{2n})=\mathfrak{I}(\mathbb{G}^{2n})$
\end{thmII}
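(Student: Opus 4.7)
The plan is to deduce Theorem II directly from Theorem I together with the two lemmas that have just been established. Theorem I dichotomizes every even-order tensor space $\mathbb{T}^{2p}$ according to whether it is even-harmonic (EH): if EH, its isotropy classes coincide with those of $\mathbb{S}^{2p}$; otherwise they coincide with those of $\mathbb{G}^{2p}$. So to prove Theorem II it suffices to show that, under the coupling hypothesis and with both constituent spaces of order at least one, the tensor product $\mathbb{E}_{1}\otimes\mathbb{E}_{2}$ is \emph{not} EH. I would argue this by contradiction.

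Assume $\mathbb{T}^{2p}=\mathbb{E}_{1}\otimes\mathbb{E}_{2}$ is EH and split on whether the two state spaces coincide. In the first case, $\mathbb{E}_{1}\neq\mathbb{E}_{2}$, say of respective orders $p$ and $q$ with (without loss of generality) $p>q\geq 1$; the first preceding lemma then forces $\mathbb{E}_{2}=\mathbb{H}^{0}$, i.e.\ $q=0$, contradicting $q\geq 1$. In the second case, $\mathbb{E}_{1}=\mathbb{E}_{2}$; because we are in the coupling regime, the associated linear map is not self-adjoint, so the second preceding lemma forces $\mathbb{E}_{1}=\mathbb{H}^{0}$, again contradicting the order hypothesis. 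In every coupling situation one therefore reaches a contradiction, so $\mathbb{T}^{2p}$ is not EH, and Theorem I yields $\mathfrak{I}(\mathbb{T}^{2p})=\mathfrak{I}(\mathbb{G}^{2p})$.

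The genuine mathematical content has already been absorbed into the two preceding lemmas, which rest on the Clebsch--Gordan rule
\[
\mathbb{H}^{p}\otimes\mathbb{H}^{q}=\bigoplus_{i=|p-q|}^{p+q}\mathbb{H}^{i},
\]
whose range $|p-q|,\ldots,p+q$ necessarily contains at least one odd integer whenever both $p$ and $q$ are positive. Thus an odd-order harmonic component is unavoidable in $\mathbb{E}_{1}\otimes\mathbb{E}_{2}$ unless at least one factor is a scalar, which is excluded by the order-$\geq 1$ hypothesis. The main obstacle in setting up this argument cleanly is the bookkeeping needed to exclude the self-adjoint (proper) regime from the coupling regime; once that partition, already introduced in Section~\ref{s:MaiRes}, is respected, the reduction above is essentially a packaging step and no further computation is required.
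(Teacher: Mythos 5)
Your proposal is correct and follows essentially the same route as the paper: the paper's own proof of Theorem II is literally ``a direct application of the two former lemmas,'' and you have simply spelled out that application --- contraposing the two lemmas (the case $\mathbb{E}_1\neq\mathbb{E}_2$ via the Clebsch--Gordan rule $\mathbb{H}^{p}\otimes\mathbb{H}^{q}=\bigoplus_{i=p-q}^{p+q}\mathbb{H}^{i}$, and the case $\mathbb{E}_1=\mathbb{E}_2$ with a non-self-adjoint coupling) to conclude that $\mathbb{T}^{2p}$ cannot be even-harmonic, then invoking Theorem~I. The minor imprecisions you inherit (the tacit reduction to $p>q$ when $\mathbb{E}_1\neq\mathbb{E}_2$, and the $n\geq 3$ hypothesis in the proved form of Theorem~I) are present in the paper's own treatment as well.
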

\begin{proof}
This is a direct application of the two former lemmas.
\end{proof}

\subsubsection*{Proper tensor spaces}
In this case we have the following lemma:
\begin{lem}\label{lem:CriSym}
If $\mathbb{E}=\mathbb{E}_{1}=\mathbb{E}_{2}$, $\mathbb{T}^p\otimes^{S}\mathbb{T}^p$ is EH and if $\mathcal{L}(\mathbb{E})$ is self-adjoint then
\begin{itemize}
\item if $p=2m+1$ then $\mathbb{T}^{2m+1}=\mathbb{H}^{2m+1}$;
\item if $p=2m$ then $\mathbb{T}^{2m}=\mathbb{H}^{2m}\oplus\beta_{0}\mathbb{H}^{0}$;
\end{itemize}
\end{lem}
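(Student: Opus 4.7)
The plan is to impose the EH hypothesis on the explicit isotypic decomposition of $\mathbb{T}^p\otimes^S\mathbb{T}^p$ and read off what it forces on the $\beta_i$. Write $\mathbb{T}^p=\bigoplus_{i=0}^p\beta_i\mathbb{H}^i$ with $\beta_p=1$ and expand
\begin{equation*}
\mathbb{T}^p\otimes^S\mathbb{T}^p=\bigoplus_{i=0}^p S^2(\beta_i\mathbb{H}^i)\ \oplus\ \bigoplus_{0\le i<j\le p}(\beta_i\mathbb{H}^i)\otimes(\beta_j\mathbb{H}^j).
\end{equation*}
The Clebsch--Gordan rule for $\sot$ gives $\mathbb{H}^i\otimes\mathbb{H}^j=\bigoplus_{k=|i-j|}^{i+j}\mathbb{H}^k$, so each cross-term contributes \emph{every} integer order between $|i-j|$ and $i+j$, and in particular contributes harmonics of both parities as soon as $\min(i,j)\ge 1$.

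First I would exploit the leading component $\beta_p=1$. For each $1\le i\le p-1$ the cross-term $\mathbb{H}^p\otimes\mathbb{H}^i$ produces all harmonic orders from $p-i$ to $p+i$, hence odd-order harmonics appear. The EH hypothesis therefore forces $\beta_i=0$ for all $1\le i\le p-1$. What remains to decide is $\beta_0$. The cross-term with the top piece gives $(\beta_0\mathbb{H}^0)\otimes\mathbb{H}^p=\beta_0\mathbb{H}^p$, whose parity equals that of $p$. Hence when $p=2m+1$ the condition EH forces $\beta_0=0$, giving $\mathbb{T}^{2m+1}=\mathbb{H}^{2m+1}$, whereas when $p=2m$ no further constraint arises and $\mathbb{T}^{2m}=\mathbb{H}^{2m}\oplus\beta_0\mathbb{H}^0$.

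Finally I would check that the diagonal terms $S^2(\beta_i\mathbb{H}^i)$ surviving under these constraints do not create odd-order harmonics. For $i=p$, $S^2\mathbb{H}^p=\bigoplus_{k=0}^p\mathbb{H}^{2k}$, which is purely even. For $i=0$, $S^2(\beta_0\mathbb{H}^0)$ contributes only copies of $\mathbb{H}^0$ (the antisymmetric part $\Lambda^2\mathbb{H}^0$ vanishes, so the multiplicity subtlety causes no issue). This confirms sufficiency of the stated shape.

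The only subtle step is handling the symmetric square of an isotypic component $\beta_i\mathbb{H}^i$ of multiplicity greater than one: one must remember the splitting $S^2(\mathbb{C}^{\beta_i}\otimes\mathbb{H}^i)=S^2\mathbb{C}^{\beta_i}\otimes S^2\mathbb{H}^i\ \oplus\ \Lambda^2\mathbb{C}^{\beta_i}\otimes\Lambda^2\mathbb{H}^i$, so that both $S^2\mathbb{H}^i$ and $\Lambda^2\mathbb{H}^i$ can occur as soon as $\beta_i\ge 2$. I expect this to be the main point requiring care; but since after the argument above only $i=p$ (with $\beta_p=1$, no $\Lambda^2$ part) and $i=0$ (with $\Lambda^2\mathbb{H}^0=0$) survive, the obstruction is harmless and the claim follows.
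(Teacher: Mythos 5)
Your proof is correct and takes essentially the same route as the paper's: expand $\mathbb{T}^p\otimes^S\mathbb{T}^p$ through the isotypic decomposition, use the Clebsch--Gordan rules to see that any cross-term $\mathbb{H}^i\otimes\mathbb{H}^p$ with $1\le i\le p-1$ (forced by $\beta_p=1$) produces odd-order harmonics, and then settle $\beta_0$ by the parity of $p$ via $\mathbb{H}^0\otimes\mathbb{H}^p=\mathbb{H}^p$. Your explicit treatment of $S^2(\mathbb{C}^{\beta_i}\otimes\mathbb{H}^i)$ via the $S^2/\Lambda^2$ splitting is in fact slightly more careful than the paper's (which writes the diagonal terms loosely as $\beta_i^2\,\mathbb{H}^i\otimes^S\mathbb{H}^i$), but this refinement does not alter the argument since only the components $i=0$ and $i=p$ survive.
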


\begin{proof}
Because $\mathcal{L}(\mathbb{E})$ is self-adjoint the tensor product is replaced by the symmetric tensor product, and if  $\mathbb{T}^p=
\bigoplus_{i=0}^{p}\beta_{i}\mathbb{H}^{i}$, the symmetric tensor product $\mathbb{T}^p\otimes^S \mathbb{T}^p$ can be decomposed into a direct sum of 
\[
\beta_i^2 \mathbb{H}^i\otimes^S \mathbb{H}^i \text{ and } \beta_i\beta_j\mathbb{H}^i\otimes \mathbb{H}^j \text{ with } i< j \in \lbrace 0,\cdots,p\rbrace
\]
with the following Clebsch-Gordan rule for the symmetric product :
\ben
\mathbb{H}^{k}\otimes^{S}\mathbb{H}^{k}=\bigoplus_{i=0}^{k}\mathbb{H}^{2i}
\een
Therefore, we cannot have the tensor product $\mathbb{H}^i\otimes \mathbb{H}^j$ for $1\leq i\leq p-1$ and $i\neq j$; thus we deduce that 
\[
\mathbb{T}^p=\beta_0\mathbb{H}^0\oplus \mathbb{H}^p \text{ and } \mathbb{T}^p\otimes^S \mathbb{T}^p=\beta_0 \mathbb{H}^0\oplus (\mathbb{H}^p\otimes^S \mathbb{H}^p)\oplus 2\beta_0 \mathbb{H}^p
\] 
Then : either $p$ is odd, and in such case $\beta_0=0$; either $p$ is even and then $\mathbb{T}^p=\beta_0\mathbb{H}^0\oplus \mathbb{H}^p$. 
\end{proof}

We therefore obtain the following result
\begin{thmIII}
Let consider $\mathbb{T}^{2p}$,  the space of tensors of a proper physics described by the tensor vector space $\mathbb{E}$. If this tensor space is of order $p\geq3$, and is solely defined in terms of its index symmetries, then $\mathfrak{I}(\mathbb{T}^{2n})=\mathfrak{I}(\mathbb{G}^{2n})$.
\end{thmIII}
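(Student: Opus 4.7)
The plan is to derive Theorem III by contraposition from Theorem I via Lemma~\ref{lem:CriSym}. Theorem I yields the dichotomy $\mathfrak{I}(\mathbb{T}^{2p})\in\{\mathfrak{I}(\mathbb{S}^{2p}),\mathfrak{I}(\mathbb{G}^{2p})\}$, with the first alternative occurring precisely when $\mathbb{T}^{2p}$ is EH. In the proper setting one has $\mathbb{E}_1=\mathbb{E}_2=\mathbb{E}$ with $\mathcal{L}(\mathbb{E})$ self-adjoint, so $\mathbb{T}^{2p}\simeq \mathbb{T}^p\otimes^S \mathbb{T}^p$; hence Lemma~\ref{lem:CriSym} tells us that if $\mathbb{T}^{2p}$ is EH then $\mathbb{T}^p$ is necessarily of one of the two very restrictive forms $\mathbb{H}^p$ (when $p$ is odd) or $\mathbb{H}^p\oplus\beta_0\mathbb{H}^0$ (when $p$ is even). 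The proof therefore reduces to showing that, as soon as $p\geq 3$, no such $\mathbb{T}^p$ can arise from index symmetries alone.

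For the key step I would exploit the $S_p$-equivariance built into index-symmetric spaces. Any tensor space $\mathbb{T}^p\subset\otimes^p\RR^3$ specified solely by permutation-type relations on its indices is a $\GL(\RR^3)$-stable subspace obtained as the fixed-point set of some subgroup $H\subseteq S_p$. Writing the Schur--Weyl decomposition
\[
\otimes^p\RR^3=\bigoplus_{\lambda\vdash p,\ \ell(\lambda)\leq 3}\mathbb{S}_\lambda(\RR^3)\otimes W_\lambda,
\]
one obtains $\mathbb{T}^p=\bigoplus_\lambda \mathbb{S}_\lambda(\RR^3)\otimes W_\lambda^H$. The crucial observation is that $\mathbb{H}^p$ appears \emph{only} in the component $\mathbb{S}_{(p)}(\RR^3)=S^p\RR^3$, since the highest $\sot$-weight present in $\mathbb{S}_\lambda(\RR^3)$ is $\lambda_1-\lambda_3\leq p$, with equality only for $\lambda=(p)$. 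Because $W_{(p)}$ is the trivial $S_p$-representation, $W_{(p)}^H\neq 0$ for every subgroup $H$, so any $\mathbb{T}^p$ containing $\mathbb{H}^p$ must contain all of $S^p\RR^3$.

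Now $S^p\RR^3=\bigoplus_{0\leq 2k\leq p}\mathbb{H}^{p-2k}$. For $p\geq 3$ this direct sum contains $\mathbb{H}^{p-2}$ as a nonzero summand distinct from $\mathbb{H}^p$ and from $\mathbb{H}^0$ (the latter because $p-2\geq 1$). Comparing with Lemma~\ref{lem:CriSym}: when $p$ is odd, $\mathbb{H}^{p-2}$ is a nonzero odd-order harmonic, ruling out $\mathbb{T}^p=\mathbb{H}^p$; when $p\geq 4$ is even, $\mathbb{H}^{p-2}$ is an even harmonic of order $\geq 2$, ruling out $\mathbb{T}^p=\mathbb{H}^p\oplus\beta_0\mathbb{H}^0$. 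In both cases the EH alternative is precluded, so Theorem I forces $\mathfrak{I}(\mathbb{T}^{2p})=\mathfrak{I}(\mathbb{G}^{2p})$. This is also consistent with the stated exception at $p=2$, since there $S^2\RR^3=\mathbb{H}^2\oplus\mathbb{H}^0$ already matches the shape allowed by Lemma~\ref{lem:CriSym}.

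The main obstacle I anticipate is the first step: translating the informal phrase \emph{``solely defined in terms of its index symmetries''} into a clean mathematical statement (fixed-point set of a subgroup of $S_p$, or image of a Young-type idempotent in $\mathbb{C}[S_p]$) in a way that covers every STS arising in mechanics and excludes ad hoc trace-type conditions. Once this dictionary is fixed, the Schur--Weyl reduction and the highest-weight computation are routine, and the extra harmonic $\mathbb{H}^{p-2}$ inside $S^p\RR^3$ delivers the contradiction.
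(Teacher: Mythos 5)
Your proposal is correct and follows essentially the same route as the paper's own proof: both reduce Theorem III to Lemma~\ref{lem:CriSym} and Theorem I by noting that any space defined solely by index symmetries contains the fully symmetric space $\mathbb{S}^p$, whose harmonic decomposition contains $\mathbb{H}^{p-2}$ with $p-2\geq 1$, which excludes the restrictive EH forms of the lemma for $p\geq 3$ (odd and even cases alike). The only addition is that you justify the containment $\mathbb{S}^p\subseteq\mathbb{T}^p$ rigorously via Schur--Weyl duality (the trivial $S_p$-isotypic component $W_{(p)}^H$ is nonzero for every subgroup $H\subseteq S_p$, and $\mathbb{H}^p$ lives only in the $\lambda=(p)$ block), a step the paper merely asserts without proof.
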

\begin{proof}
Any tensor subspace defined in terms of its index symmetries contains, as a subspace, the space of full symmetric tensors. Since $p=3$, the harmonic decomposition of $\mathbb{S}^3$ does not satisfied the condition of lemma \ref{lem:CriSym}. Direct application of this lemma leads to the conclusion.
\end{proof}

\begin{rem}
It can be observed, that CTS having a minimum number of classes can nevertheless be constructed. They consists in spaces of autoadjoint linear applications between harmonic spaces, which are defined both from complete symmetry under index permutations and traceless property.
\end{rem}


\section{Conclusion}
\label{sec:conclusion}

In this paper the symmetry classes determination of even order tensors has been studied. Based on a new geometric approach, a complete and general answer to this recurrent problem in continuum mechanics has been given. Application of our results are direct and solve directly problems that would have been difficult to manage with the Forte-Vianello method.
As an example, and for the first time, the symmetry classes of the even-order tensors involved in Mindlin second strain-gradient elasticity  were given. 
To reach this goal a geometric tools, named the clips operator, has been introduced. Its main properties and all the products for $\sot$-closed-subgroups were also provided. We believe that these results may find applications in others context.
Using the geometrical framework introduced in this paper, some extensions of the current method can be considered:
\begin{itemize}
  \item To extend this approach to odd-order tensors;
  \item To take into account the coexistence of different symmetry properties for the physical properties of architectured  multimaterials;
\end{itemize}
These extensions will be the objects of forthcoming papers.

\appendix

\section{Clips operation on $\sot$-subgroups}\label{calculs}
In this section results concerning the clips operations on $\sot$-subgroups will be established.
The geometric idea to study the intersection of symmetry classes relies on the symmetry determination of composite figures which symmetry groups is the intersection of two elementary figures. 
As an example let consider the rotation $\rr=\QQ\left(\kk;\cfrac{\pi}{3}\right)$ , determining  $\DD_4 \cap \rr\DD_4 \rr^t$
is amount to establish the set of transformations letting the composite figure~\ref{cubimbrique} invariant.
\begin{figure}
	\centering
		\includegraphics{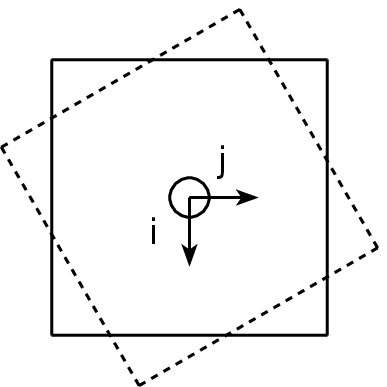}
\caption{Composite figure associated to $\DD_4 \cap \rr\DD_4 \rr^t$ where $\rr=\QQ\left(\kk;\cfrac{\pi}{3}\right)$}\label{cubimbrique}
\end{figure}


\subsection{Parameterization of subgroups}

We will define geometric elements for each $\sot$ closed subgroup: 
\begin{listerd}
\item The cyclic group $\CC_n$ is characterized by the $Oz$ axis; it will be noted $\CC_n^0:=\CC_n$;
\item The same convention is retained for the dihedral group $\DD_n$, i.e. $\DD_n^0:=\DD_n$;
\item For the cube $\mathcal{C}_0$ (c.f fig.\ref{cube0}) we defined its vertex collection $\{A_i\}_{i=1\cdots 8}={(\pm1;\pm1\pm1)}$, $\mathcal{C}_0$ is $\OO^0$-invariant; 
\item For the tetrahedron we consider the figure~\ref{cube0} and define $\mathcal{T}_0$ to be the tetrahedron $A_1A_3A_7A_5$, $\mathcal{T}_0$ is $\TT^0$-invariant. 
\item For the dodecahedron (c.f. fig.\ref{dode2}), we note $\mathcal{D}_0$ the figure which vertices are\footnote{With $\phi$ the golden ratio.}

\begin{figure}
    \begin{minipage}[t]{6cm}
        \centering
        \includegraphics[scale=0.8]{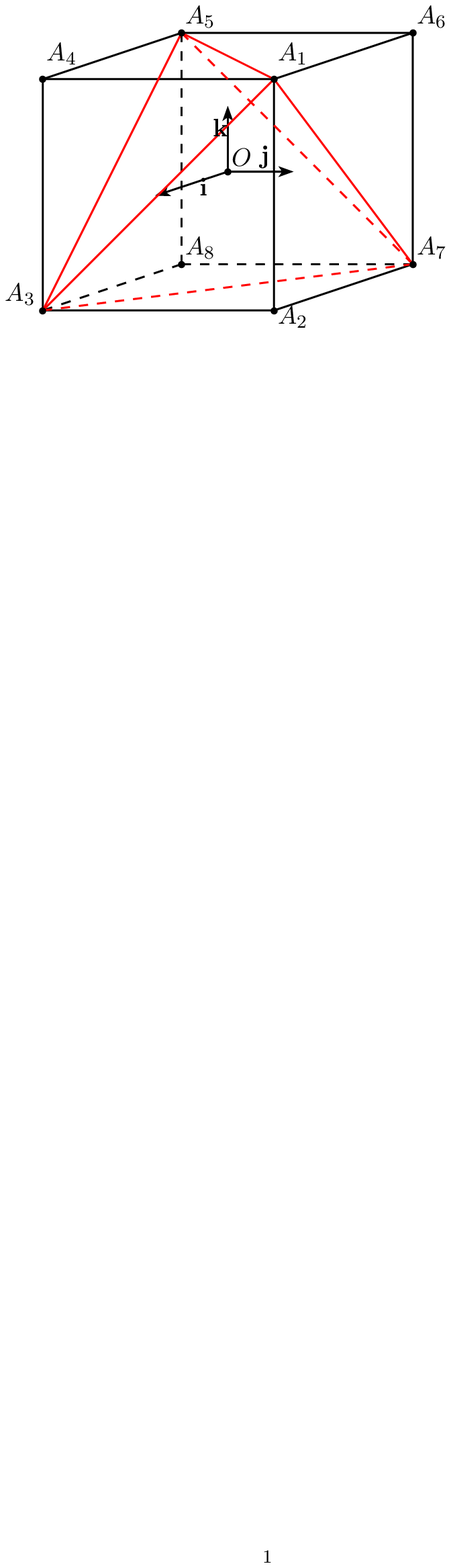}
        \caption{Cube $\mathcal{C}_0$}
        \label{cube0}
    \end{minipage}
    \begin{minipage}[t]{7cm}
        \centering
        \includegraphics[scale=0.8]{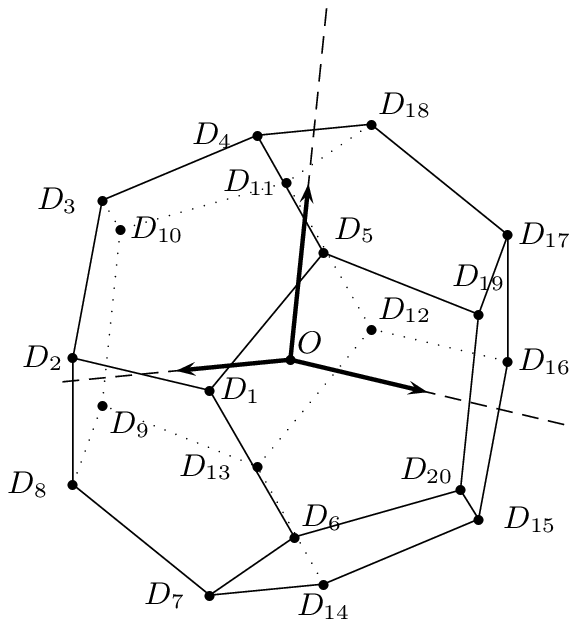}
        \caption{Dodecahedron $\mathcal{D}_0$}
        \label{dode2}
    \end{minipage}
\end{figure}


\begin{itemize}
\item Twelve vertices of type $\left( \pm \frac{a}{2},\pm \phi^2 \frac{a}{2},0\right)$ circularly permuted; 
\item Eight vertices of a cube with coordinates $\left( \pm \phi\frac{a}{2},\pm \phi\frac{a}{2},\pm \phi\frac{a}{2}\right)$
\end{itemize}
\end{listerd} 


\subsection{Axes and subgroup classes}\label{axessub}

For every $\sot$-subgroups, we defined its $g$ conjugate in the following way $K^g=gK^0 g^{t}$ where the $g$ exponent indicates the transformation, and $0$ the initial configuration. To proceed toward our analysis we need to introduce the following group decomposition \cite{GSI84,GSS88}
\begin{defn}\label{disjoint}
Let $K_1$, $K_2$, $\cdots$, $K_s$ be subgroups of $\Sigma$. Then $\Sigma$ is the \emph{direct union} of the $K_i$'s if 
\begin{equation*}
\text{a)}\ \Sigma=\bigcup_{i=1}^s K_i \quad ;\quad  \text{b)}\ K_i\cap K_j=\lbrace e \rbrace\  \forall i\neq j. 
\end{equation*}
\end{defn}
This decomposition is noted $K=\biguplus_{i=1}^s K_i$ when we have a \emph{direct union} of subgroups. 
We give some important details about geometric structure of $\sot$-subgroup. Indeed:
\begin{listerd}
\item  $\CC_n^0$ is characterized by the $Oz$ axis, generated by $\kk$. For every rotation $g\in \sot$, we note $a$ the axis generated by $g\kk$ and note $\CC_n^a=\CC_n^g$
to indicate the rotation axis.  
\item  $\DD_n^0$ is characterized by its primary axis $Oz$ and several secondary axis $b_l$. Therefore
\begin{equation}\label{diedre}
\DD_n^0=\CC_n^0 \biguplus_{l=0}^{n-1} \CC_2^{b_l}
\end{equation}
Each  $b_l$ is perpendicular to $Oz$ and are related by the $\CC_n^0$ generator.  $\DD_n^0$ is chosen such as one $b_l$ is generated by $\ii$. For every rotation $g\in \sot$ we define $a$ - generated by $g\kk$ - to be the primary axis and $b$ - generated by $g\ii$ - the secondary one, this is noted
\[
\DD_n^{a,b}=\DD_n^g
\]
\item The subgroup $\TT^0$ can be split into a direct union of cyclic subgroups \cite{GSI84}. 
\begin{equation}\label{dtetra}
\TT^0= \biguplus_{i=1}^4 \CC_3^{vt_i} \biguplus_{j=1}^3 \CC_2^{et_j}
\end{equation}
where the vertex axes of the tetrahedron are noted $vt_i$ and his edge axes are noted $et_j$, the details of these axis appear on figure~\ref{cube0}.
Each conjugate subgroup $\TT^g$ will be characterized by the set of its axes $(gvt_i,get_i)$, $g\in \sot$.
\item The octahedral subgroup $\OO^0$  split into 
\begin{equation}\label{deccube}
\OO^0=\biguplus_{i=1}^3 \CC_4^{fc_i} \biguplus_{j=1}^4 \CC_3^{vc_j} \biguplus_{l=1}^6 \CC_2^{ec_l}
\end{equation}
where vertex, edge and face axes are noted, respectively: $vc_i$, $ec_j$ and $fc_j$. Detail can be found on figure~\ref{cubeaxe}.
\begin{figure}[h]
\begin{center}
\includegraphics[scale=0.8]{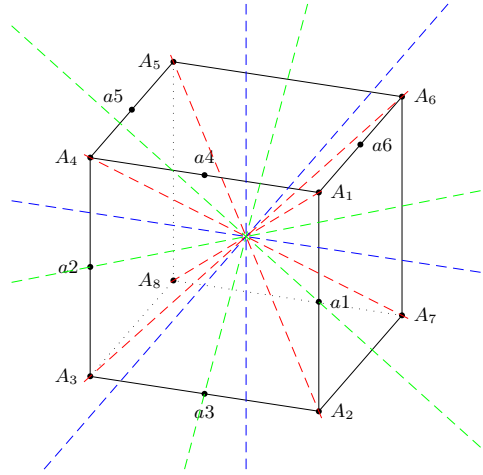}
\end{center}
\caption{Symmetry axis of $\mathcal{C}_0$}\label{cubeaxe}
\end{figure}
For every rotation $g\in \sot$, $\OO^g$ is characterized by its set of transformed axes ($gfc_i,gec_j,gvc_l$).
\item The icosahedral group $\II^0$ split into 
\begin{equation}\label{dico}
\II^0=\biguplus_{i=1}^6\CC_5^{fd_i}\biguplus_{j=1}^{10}\CC_3^{vd_j}\biguplus_{l=1}^{15}\CC_2^{ed_l}
\end{equation}
where vertex, edge and face axes are noted, respectively: $vd_i$, $ed_j$ and $fd_j$, the details can be found on figure \ref{dode2}. 
The vertex axes $vd_j$ are characterized by the vertices $D_j$ for $j=1\cdots 10$.
\end{listerd}

\subsection{Planar subgroups}

\subsubsection*{Cyclic subgroups}

First of all, we begin by this following lemma

\begin{lem}\label{intercy}
For every two integers $m$ and $n$ greater than $2$, and for every two axes $a$ and $b$: 
\begin{listerd}
\item If $a\neq b$ then $\CC_n^a\cap \CC_m^b=\id$
\item If $a=b$ then by noting $d:=gcd(m,n)$ we will have $\CC_n^a\cap \CC_m^b=\CC_d^a$
\end{listerd}
\end{lem}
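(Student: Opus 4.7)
The plan is to treat the two cases separately, leveraging the elementary observation that every non-identity element of $\sot$ fixes a unique axis (its $1$-eigenspace).

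For the first case ($a\neq b$), I would argue by contradiction. Suppose $r\in \CC_n^a\cap\CC_m^b$ with $r\neq \id$. Since $r\in\CC_n^a$, $r$ is a nontrivial rotation about $a$, so the set of fixed vectors of $r$ is exactly the line $a$. Similarly, the fact that $r\in\CC_m^b$ forces the fixed axis of $r$ to be $b$. Uniqueness of the rotation axis of a nontrivial element of $\sot$ gives $a=b$, contradicting the hypothesis. Hence $\CC_n^a\cap\CC_m^b=\{\id\}$.

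For the second case ($a=b$), both $\CC_n^a$ and $\CC_m^a$ are finite subgroups of the circle group $\sod^a$ consisting of all rotations about the axis $a$. Identifying $\sod^a$ with $\GR/2\pi\GR$ via the angle map, $\CC_n^a$ corresponds to the subgroup $\frac{2\pi}{n}\GR/2\pi\GR\simeq \GR_n$ and $\CC_m^a$ to $\GR_m$. An element $\QQ(a;\frac{2k\pi}{n})$ lies in $\CC_m^a$ exactly when there is an integer $l$ with $\frac{2k\pi}{n}=\frac{2l\pi}{m}$, i.e. $km\equiv 0\pmod n$. Setting $d=\gcd(m,n)$, this congruence holds iff $\frac{n}{d}$ divides $k$, so the intersection consists of the $d$ rotations $\QQ(a;\frac{2j\pi}{d})$, $0\leq j<d$, which is precisely $\CC_d^a$.

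There is no real obstacle here: the first point is immediate from the uniqueness of the axis of a non-identity rotation, and the second reduces to the standard computation of the intersection of two cyclic subgroups of a common $1$-parameter subgroup. The only point requiring care is to make sure to justify that a non-identity element of $\CC_n^a$ (for $n\geq 2$) really is a nontrivial rotation whose axis is $a$; this is immediate from the definition of $\CC_n^a$ as the group generated by $\QQ(a;\frac{2\pi}{n})$, which for $n\geq 2$ gives rotations of angle $\frac{2k\pi}{n}\notin 2\pi\GR$ for $1\leq k\leq n-1$.
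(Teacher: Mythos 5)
Your proof is correct and takes essentially the same approach as the paper: the first point rests on the uniqueness of the axis of a non-identity rotation, which the paper establishes by exactly your underlying fact (two independent eigenvectors of eigenvalue $1$ plus $\det g=1$ force $g=e$), and the second point is the same angle/$\gcd$ computation inside the one-parameter group of rotations about $a$. One notational slip worth fixing: the quotient should read $\mathbb{R}/2\pi\mathbb{Z}$, with $\CC_n^a$ corresponding to $\frac{2\pi}{n}\mathbb{Z}/2\pi\mathbb{Z}\simeq \mathbb{Z}/n\mathbb{Z}$, rather than $\GR/2\pi\GR$.
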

\begin{proof}
Let $g\in \CC_n^a\cap \CC_m^b$, with $a\neq b$. $a$ and $b$ are generated by two non-collinear eigenvectors for $g$, with eigenvalue $1$. As $\det g =1$ the third eigenvalue is also one, therefore $g=e$. Thus we have the first point of the lemma. 
If now we take, for example, a common rotation of $\CC_n^0$ and $\CC_m^0$, then this rotation correspond to an angle $\theta=\frac{2l\pi}{n}=\frac{2r\pi}{m}$  with $r,l$ integers. Thus $lm=rn$ and, noting $m=dm_1$ and $n=dn_1$ we will have $lm_1=rn_1$. As $m_1$ and $n_1$ are relatively prime, we deduce that
\[
l=\alpha n_1 \text{ and then } \theta=\frac{2l\pi}{n}=\frac{2\alpha \pi}{d}\in \CC_d^0
\]
The converse inclusion is obvious, so we can conclude the lemma. 
\end{proof}

A direct application of lemma~\ref{intercy} to the intersection $\CC_n^0 \cap \CC_n^g$ leads to the result: 
\begin{lem}
For every integers $n$ and $m$, we note $d=gcd(n,m)$; we have $[\CC_n]\circledcirc [\CC_m]=\lbrace [\id],[\CC_d]\rbrace$
\end{lem}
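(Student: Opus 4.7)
The plan is to apply the clips operator definition directly and reduce everything to the preceding lemma \ref{intercy}. By definition,
\[
[\CC_n]\circledcirc [\CC_m]=\{[\CC_n^0\cap g\CC_m^0 g^{-1}]\mid g\in \sot\}.
\]
The first step is to identify what $g\CC_m^0 g^{-1}$ looks like geometrically. Since $\CC_m^0$ is the cyclic group of rotations around the axis generated by $\kk$, conjugation by $g\in \sot$ sends it to the cyclic group $\CC_m^{a}$ of rotations around the axis $a$ generated by $g\kk$ (this is precisely the parameterization recalled in \autoref{axessub}).

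Next, I would split into two cases according to whether $a$ coincides with the fixed axis of $\CC_n^0$:

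\textbf{Case 1:} $a$ is the axis of $\CC_n^0$ (i.e.\ $g\kk=\pm\kk$). Then by the second bullet of Lemma~\ref{intercy},
\[
\CC_n^0\cap \CC_m^{a}=\CC_d^0,
\]
so the resulting conjugacy class is $[\CC_d]$.

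\textbf{Case 2:} $a$ differs from the axis of $\CC_n^0$. Then by the first bullet of Lemma~\ref{intercy},
\[
\CC_n^0\cap \CC_m^{a}=\{\id\},
\]
producing the conjugacy class $[\id]$.

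Finally, I would observe that both cases are realized: the identity $g=e$ realizes Case~1 (yielding $[\CC_d]$), and any rotation $g$ with $g\kk\neq\pm\kk$ realizes Case~2 (yielding $[\id]$). Combining these gives exactly
\[
[\CC_n]\circledcirc [\CC_m]=\{[\id],[\CC_d]\},
\]
as claimed. There is no serious obstacle here: the proof is a direct application of Lemma~\ref{intercy} together with the elementary fact that conjugation in $\sot$ acts on cyclic subgroups by rotating their axes. The only point worth writing carefully is the correspondence $g\CC_m^0 g^{-1}=\CC_m^{g\kk}$, which justifies parameterizing the conjugates of $\CC_m^0$ by the $\sot$-orbit of the axis $\kk$.
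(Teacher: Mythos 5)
Your proof is correct and takes essentially the same route as the paper, which obtains the result as a direct application of Lemma~\ref{intercy} to the intersection $\CC_n^0\cap\CC_m^g$ after parameterizing conjugates of cyclic groups by their axes. Your explicit case split on whether $g\kk$ spans the $Oz$ axis, and your check that both classes $[\id]$ and $[\CC_d]$ are actually realized, simply spell out details the paper leaves implicit.
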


\subsubsection*{Dihedral subgroups}

Let consider first the intersection $\Gamma=\DD_n^0 \cap \CC_m^{a}$. As $\DD_n^0=\CC_n^0 \biguplus_{l=1}^n \CC_2^{b_l}$ the following cases has to be considered: 
\begin{listerd}
\item When  $Oz=a$, the intersection $\Gamma=\CC_n^0\cap \CC_m^{a}$ and one can apply lemma~\ref{intercy}; 
\item When, for some $l$, $a=b_{l}$, $\CC_n^0\cap \CC_m^{a}=\id$ and one has to considered $\CC_2^{b}\cap \CC_m^{a}$, which equals the identity as soon as $m$ is odd. 
\end{listerd}
\begin{lem}
For every two integers $n$ and $m$, we note $d:=gcd(n,m)$ and $d_2(m):=gcd(m,2)$; then we have
\[
[\DD_n]\circledcirc [\CC_m]=\lbrace [\id],[\CC_{d_2(m)}],[\CC_d] \rbrace
\]
\end{lem}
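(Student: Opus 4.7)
The plan is to parameterize the problem by the axis of the conjugate $g\CC_m^0 g^{-1}$. Since $\CC_m^g = \CC_m^a$ where $a$ is the axis generated by $g\kk$, we must compute $\DD_n^0 \cap \CC_m^a$ as $a$ ranges over all directions in $\mathbb{R}^3$. Using the direct-union decomposition \eqref{diedre}, namely $\DD_n^0 = \CC_n^0 \biguplus_{l=0}^{n-1} \CC_2^{b_l}$, the intersection becomes
\[
\DD_n^0 \cap \CC_m^a = \bigl(\CC_n^0 \cap \CC_m^a\bigr) \cup \bigcup_{l=0}^{n-1}\bigl(\CC_2^{b_l} \cap \CC_m^a\bigr),
\]
so everything reduces to applying the previous lemma on cyclic intersections to each term. (The fact that the union on the right is still a group, rather than just a set, will follow once we identify which terms are non-trivial.)

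Next I would split according to the geometric position of $a$ relative to the characteristic axes of $\DD_n^0$. The three cases are: (i) $a$ coincides with the primary axis $Oz$; (ii) $a$ coincides with one of the secondary axes $b_l$; (iii) $a$ is generic, i.e.\ distinct from $Oz$ and from every $b_l$. In case (i) the second cluster of intersections is trivial (each $b_l$ is distinct from $a=Oz$), and the first gives $\CC_d^0$ with $d=\gcd(n,m)$ by lemma \ref{intercy}; hence we obtain $[\CC_d]$. In case (iii), all pairwise intersections are $\id$ by the same lemma, so we obtain $[\id]$. Case (ii) is where the parameter $d_2(m)=\gcd(m,2)$ enters: the first intersection is trivial, all but one term in the second cluster is trivial, and the surviving term $\CC_2^{b_l}\cap \CC_m^a$ shares a common axis, so it equals $\CC_{d_2(m)}^a$; thus we obtain $[\CC_{d_2(m)}]$.

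To finish, I would note that each of the three cases is actually realized by some $g\in\sot$: taking $g = e$ realizes case (i), taking $g$ to send $\kk$ onto $b_0 = \ii$ realizes case (ii), and taking $g$ to send $\kk$ to an axis in generic position realizes case (iii). Collecting the three classes yields exactly $\{[\id], [\CC_{d_2(m)}], [\CC_d]\}$, with the understanding that if $d_2(m)=1$ or $d=1$ the corresponding entry collapses to $[\id]$ and the set shrinks accordingly.

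The only delicate step is the bookkeeping in case (ii): one must verify that when $a=b_l$, the intersections $\CC_2^{b_{l'}}\cap \CC_m^a$ for $l'\neq l$ really are trivial. This follows from lemma \ref{intercy} because distinct $b_{l'}$ and $b_l$ generate distinct axes, so no common non-identity rotation can belong to both cyclic subgroups. No other obstacle is hidden; the rest is a straightforward application of lemma \ref{intercy} and the geometric parameterization set up in \autoref{axessub}.
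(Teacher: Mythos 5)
Your proof is correct and takes essentially the same route as the paper: both decompose $\DD_n^0$ as the direct union $\CC_n^0 \biguplus_{l} \CC_2^{b_l}$ and case-split on whether the axis $a$ of $\CC_m^a$ coincides with the primary axis $Oz$ (yielding $[\CC_d]$), with a secondary axis $b_l$ (yielding $[\CC_{d_2(m)}]$), or is generic (yielding $[\id]$), applying Lemma~\ref{intercy} to each term. Your explicit verification that the remaining terms in the union are trivial, and that each case is realized by some $g\in\sot$, merely spells out details the paper leaves implicit.
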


Now consider the second kind of intersection
\begin{equation*}
\Gamma=\DD_n^0 \cap \DD_m^{g}=(\DD_n^0=\CC_n^0 \biguplus_{l=1}^n \CC_2^{b_l}) \cap( \DD_m^{g}=\CC_m^{a}\biguplus_{l=1}^m \CC_2^{gb_l})
\end{equation*}
The following cases have to be considered
\begin{listerd}
\item When  $Oz=a$, and  $Ox=gb_l0$ for some $l$ : if $d=1$ then $\Gamma=\CC_2^{b_0}$ and $\DD_m^0$ otherwise; 
\item When  $Oz=a$, and  $Ox\neq gb_l$ : $\Gamma=\CC_d^0$; 
\item When $Oz=gb_l$ for some $l$ : if $n$ is even then $\Gamma=\CC_2$ and $\id$ otherwise. Results are the same when the primary axis of $\DD_m^g$ coincides to a secondary axis of $\DD_n^0$.  
\end{listerd}
\begin{lem}
For every integers $n$ and $m$, we note $d:=gcd(n,m)$ and
\[
d_2(m):=gcd(m,2)\:; \: d_2(n):=gcd(n,2)\:; \: dz:=\begin{cases} 2 \text{ if } d=1 \\ 1 \text{ otherwise} \end{cases}
\]
Then we have $[\DD_n]\circledcirc [\DD_m]=\lbrace [\id],[\CC_{d_2(n)}],[\CC_{d_2(m)}],[\CC_{dz}] ,[\CC_d],[\DD_d]$
\end{lem}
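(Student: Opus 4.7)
The plan is to compute $\Gamma := \DD_n^0 \cap \DD_m^g$ for every $g \in \sot$ using the direct union decomposition \eqref{diedre}. Writing
\[
\DD_n^0 = \CC_n^0 \biguplus_{l=0}^{n-1} \CC_2^{b_l}, \qquad \DD_m^g = \CC_m^a \biguplus_{l'=0}^{m-1} \CC_2^{gb'_{l'}},
\]
every element of $\Gamma$ lies in exactly one cyclic piece on each side, so $\Gamma$ is controlled by the four families of pairwise cyclic intersections $\CC_n^0\cap\CC_m^a$, $\CC_n^0\cap\CC_2^{gb'_{l'}}$, $\CC_2^{b_l}\cap\CC_m^a$, and $\CC_2^{b_l}\cap\CC_2^{gb'_{l'}}$, each of which is given by Lemma~\ref{intercy} (non-trivial only when the two cyclic axes coincide). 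The task thus reduces to a purely geometric case analysis indexed by which of the axes $\{Oz,b_0,\dots,b_{n-1}\}$ coincide with which of $\{a,gb'_0,\dots,gb'_{m-1}\}$.

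First I would treat \emph{Configuration~I}, where the primary axes coincide, $a=Oz$. Here $\CC_n^0\cap\CC_m^a=\CC_d$ by Lemma~\ref{intercy}, and both families of secondary axes live in the plane perpendicular to $Oz$; whether they coincide is fixed by the rotation angle of $g$ about $Oz$. Solving the congruence $l\pi/n\equiv l'\pi/m\pmod{\pi}$ shows that if a single coincidence occurs it propagates to exactly $d$ coinciding secondaries, yielding $\Gamma=\DD_d$ when $d\ge 2$, or $\Gamma=\CC_2$ when $d=1$ (which is exactly the $[\CC_{dz}]$ contribution); if no secondary axes coincide, $\Gamma=\CC_d$, giving $[\CC_d]$ or $[\id]$ according to whether $d\ge 2$ or $d=1$.

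Next I would handle \emph{Configuration~II}, where a primary axis of one group sits on a secondary axis of the other, e.g.\ $a=b_{l_0}$: then $\CC_n^0\cap\CC_m^a=\id$, while $\CC_2^{b_{l_0}}\cap\CC_m^a=\CC_{d_2(m)}$, giving the class $[\CC_{d_2(m)}]$; the symmetric sub-case $Oz=gb'_{l'_0}$ gives $[\CC_{d_2(n)}]$. Finally, \emph{Configuration~III} is the generic position where no axes coincide, so every pairwise intersection collapses to the identity and $\Gamma=\id$. Collecting the conjugacy classes across the three configurations produces the announced list.

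The main obstacle is the arithmetic step in Configuration~I, namely verifying that the secondary-axis coincidence is an all-or-nothing $d$-fold phenomenon: one must show that $b_l=gb'_{l'}$ for some pair $(l,l')$ forces coincidence for exactly $d$ values of $l$, and in that case forces the primary-primary contribution and the secondary-secondary contributions to fit together as a full $\DD_d$. The remaining steps are routine combinations of Lemma~\ref{intercy} with the direct union decomposition, in the style of the two preceding lemmas.
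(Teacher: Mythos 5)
Your route coincides with the paper's own proof: both decompose $\DD_n^0$ and $\DD_m^g$ via the direct union (\ref{diedre}), reduce everything to pairwise cyclic intersections through Lemma~\ref{intercy}, and run a case analysis on which axes coincide. Your Configuration~I is in fact \emph{more} careful than the paper's corresponding bullets: the congruence argument showing that one secondary--secondary coincidence propagates to exactly $d$ of them (the common angles form a coset of step $\pi/d$), so that $\Gamma=\DD_d$ if $d\geq 2$ and $\Gamma=\CC_2$ (the $[\CC_{dz}]$ class) if $d=1$, is correct and fills in what the paper only asserts.

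The genuine gap is in Configuration~II, and you inherit it from treating your three configurations as mutually exclusive: when the primary axis of $\DD_m^g$ sits on a secondary axis $b_{l_0}$ of $\DD_n^0$, you conclude $\Gamma=\CC_{d_2(m)}$, silently assuming no further coincidence occurs. But in that crossed position the primary axis $Oz$ of $\DD_n^0$ can simultaneously be a secondary axis of $\DD_m^g$, with the common perpendicular a secondary axis of both, and this happens whenever $n$ and $m$ are both even. Concretely, take $n=m=4$ and $g=\QQ(\jj;\pi/2)$: then $\DD_4^0\cap g\DD_4^0g^{-1}=\lbrace e,\QQ(\ii;\pi),\QQ(\jj;\pi),\QQ(\kk;\pi)\rbrace\simeq\DD_2$, since $\QQ(\kk;\pi)\in\CC_4^0$ while $\kk$ spans a secondary axis of $g\DD_4^0g^{-1}$, and $\ii$, $\jj$ play the symmetric roles; no other element is common because the remaining secondary axes of the two groups lie in the $xy$- and $yz$-planes respectively. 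Hence $[\DD_2]\in[\DD_4]\circledcirc[\DD_4]$, whereas the announced list for $n=m=4$ is $\lbrace[\id],[\CC_2],[\CC_4],[\DD_4]\rbrace$. So your proof as planned cannot close the stated equality --- and, notably, the paper's own proof makes exactly the same omission in its third bullet, so the lemma as printed is itself incomplete: for $n,m$ both even the class $[\DD_2]$ must be added (it is absorbed into $[\DD_d]$ only when $d=2$). To repair your argument, Configuration~II must be split according to whether the doubly crossed coincidence occurs, which requires $2\mid n$ and $2\mid m$ and then yields $\Gamma=\DD_2$.
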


\subsection{Clips operations on exceptional and maximum subgroups}\label{calculsbis}

Here we are concern with the subgroups $\TT^0$, $\OO^0$, $\II^0$, $\sod^0$, and $\ode^0$. For these studies, we will use results concerning their proper subgroups \cite{GSI84}. This information is summed-up in the following diagram\footnote{The arrows of figure~\ref{lattice} are to be understood in terms of partial order, that is inclusion of conjugates.} \cite{AKP12}: 

\begin{figure}[h]
\includegraphics[scale=0.75]{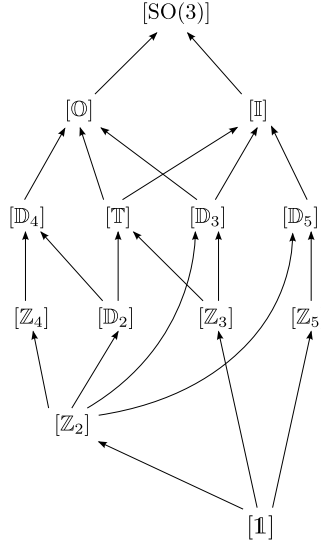}
\caption{Exceptional subgroups in the poset of closed subgroup of $\sot$}\label{lattice}
\end{figure}

\subsubsection*{Tetrahedral subgroup}

First of all we take back the decomposition~\ref{dtetra}: 
\[
\TT^0=\CC_3^{vt_1}\uplus \CC_3^{vt_2}\uplus \CC_3^{vt_3}\uplus \CC_3^{vt_4} \uplus \CC_2^{et_1}\uplus \CC_2^{et_2}\uplus \CC_2^{et_3}
\]
We begin by the study of $\TT^0 \cap \CC_n^{a}$. As a consequence of the lemma~\ref{intercy}, the primary axis of $\CC_n^{a}$ must be an edge axis or a face axis of the tetrahedron. We therefore directly obain the lemma: 
\begin{lem}
For every integer $n$, we note $d_2(n):=gcd(n,2)$ and $d_3(n):=gcd(3,n)$; then we have $
[\CC_n]\circledcirc [\TT]=\lbrace [\id],[\CC_{d_2(n)}],[\CC_{d_3(n)}] \rbrace
$
\end{lem}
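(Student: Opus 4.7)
The plan is to compute the clips operation $[\CC_n]\circledcirc[\TT]$ by fixing the representative $\TT^0$ and letting the axis $a$ of $\CC_n^a$ vary over $\sot$, then reading off the possible conjugacy classes of $\CC_n^a \cap \TT^0$ up to $\sot$-conjugation. The main tool is the direct union decomposition
\[
\TT^0=\CC_3^{vt_1}\uplus \CC_3^{vt_2}\uplus \CC_3^{vt_3}\uplus \CC_3^{vt_4} \uplus \CC_2^{et_1}\uplus \CC_2^{et_2}\uplus \CC_2^{et_3}
\]
combined with lemma \ref{intercy} on intersections of two cyclic subgroups. Indeed, an element of $\CC_n^a \cap \TT^0$ must sit in one of the seven cyclic factors above; by lemma \ref{intercy}, such an element is non-trivial only when the axis $a$ coincides with the axis of the corresponding factor.

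First I would distinguish the three geometric configurations of $a$ relative to the axes of the tetrahedron. If $a$ does not coincide with any vertex axis $vt_i$ or any edge axis $et_j$, then lemma \ref{intercy} yields $\CC_n^a\cap \CC_3^{vt_i}=\CC_n^a \cap \CC_2^{et_j}=\id$ for every $i,j$, so the intersection reduces to $\id$. If $a$ coincides with a vertex axis $vt_i$, then only the factor $\CC_3^{vt_i}$ contributes and lemma \ref{intercy} gives $\CC_n^a \cap \CC_3^{vt_i} = \CC_{\gcd(n,3)}^{a}=\CC_{d_3(n)}^{a}$, whereas the intersections with the remaining six factors are trivial. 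Symmetrically, if $a$ coincides with an edge axis $et_j$, the only non-trivial contribution is $\CC_n^a \cap \CC_2^{et_j}=\CC_{\gcd(n,2)}^{a}=\CC_{d_2(n)}^{a}$.

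Next I would observe that the four vertex axes of $\TT^0$ form a single $\TT^0$-orbit, and similarly the three edge axes form a single orbit; consequently the three cases above produce at most three distinct conjugacy classes, namely $[\id]$, $[\CC_{d_2(n)}]$ and $[\CC_{d_3(n)}]$. Conversely, each of these classes is indeed realized by choosing $a$ accordingly, and by proposition \ref{neutreclips} the class $[\id]$ always belongs to any clips operation, so nothing is missed. Putting everything together yields
\[
[\CC_n]\circledcirc [\TT] = \lbrace [\id],[\CC_{d_2(n)}],[\CC_{d_3(n)}] \rbrace,
\]
as claimed.

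There is no real obstacle here, since the entire argument is a direct case analysis; the only point requiring a little care is verifying that no rotation in $\CC_n^a$ can simultaneously belong to two different cyclic factors of the direct union for the same axis $a$, but this is immediate from lemma \ref{intercy} (different axes give trivial intersections and the vertex axes are distinct from the edge axes). The argument therefore reduces cleanly to applying lemma \ref{intercy} seven times and grouping the outcomes by the orbits of the axes of $\TT^0$.
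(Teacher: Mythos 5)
Your proposal is correct and follows essentially the same route as the paper: the paper's proof also invokes the direct union decomposition $\TT^0=\biguplus_{i=1}^4 \CC_3^{vt_i} \biguplus_{j=1}^3 \CC_2^{et_j}$ together with lemma~\ref{intercy} to conclude that a non-trivial intersection forces the axis $a$ of $\CC_n^a$ onto a vertex or edge axis, yielding $[\CC_{d_3(n)}]$ or $[\CC_{d_2(n)}]$ respectively. Your version merely makes explicit the case analysis and the orbit argument that the paper leaves implicit in its one-line "we therefore directly obtain the lemma."
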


Now let us  consider $\Gamma=\TT^0\cap \DD_n^g$. We will use primary axes and secondary axes of dihedral subgroup: 
\[
\DD_n^g=\CC_n^{a}\biguplus \CC_2^{b}
\]
We recall that the vertex axes of the tetrahedron are noted $vt_i$ and his edge axes are noted $et_j$. It is clear that $\Gamma$ is a subgroup of $\TT^0$. Furthermore: 
\begin{listerd}
\item As soon as $3\mid n$ if $a=vt_i$ then $\Gamma=\CC_3$ is maximal; 
\item When $2\mid n$ we can find $g$ such as $a=et_j$, then if $b=et_j$, $\Gamma=\DD_2$ or $\Gamma=\CC_2$ otherwise.
\item In any case, when we only have $b=et_j$, then $\Gamma=\CC_2$. 
\end{listerd}

Finally we can deduce the lemma:
\begin{lem} 
For every integer $n$ we note $d_2(n):=gcd(2,n)$ and $d_3(n):=gcd(3,n)$; then we have $
[\DD_n]\circledcirc [\TT]=\lbrace [\id],[\CC_2],[\CC_{d_3(n)}],[\DD_{d_2(n)}] \rbrace$
\end{lem}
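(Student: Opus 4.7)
My plan is to compute the intersection $\Gamma = \DD_n^0 \cap g\TT^0 g^{-1}$ directly for arbitrary $g \in \sot$, using the direct-union decompositions
\[
\DD_n^0 = \CC_n^{\kk} \biguplus_{l=0}^{n-1} \CC_2^{b_l}, \qquad \TT^g = \biguplus_{i=1}^{4} \CC_3^{g\cdot vt_i} \biguplus_{j=1}^{3} \CC_2^{g\cdot et_j},
\]
and then exhibit, for each class in the target list, an explicit $g$ realizing it. The three edge axes of the tetrahedron are pairwise orthogonal, while the secondary axes $b_l$ of $\DD_n^0$ lie in the equatorial plane at angles $l\pi/n$. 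A non-identity element of $\Gamma$ must have an axis that simultaneously appears in the $\DD_n^0$ list and in the $\TT^g$ list, so by Lemma~\ref{intercy} the whole analysis reduces to a case split on geometric alignment of axes, the rotation order along each common axis being controlled by a gcd.

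I would then run through three configurations according to how the primary axis $\kk$ sits relative to $\TT^g$. First, if $\kk$ aligns with some vertex axis $g\cdot vt_i$, the axial contribution is $\CC_{d_3(n)}$; since the remaining vertex and edge axes of $\TT^g$ are not perpendicular to $\kk$, no secondary $b_l$ can contribute, and $\Gamma = \CC_{d_3(n)}$. Second, if $\kk$ aligns with an edge axis $g\cdot et_j$, the axial contribution is $\CC_{d_2(n)}$, and the two remaining edge axes lie in the equatorial plane, producing the richest intersection depending on whether they hit any $b_l$. Third, if $\kk$ aligns with no axis of $\TT^g$, at most one secondary axis $b_l$ coincides with an edge axis (giving $\CC_2$), and a generic $g$ produces $\Gamma = \id$.

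The main obstacle I foresee lies in the second configuration: to obtain the class $[\DD_2]$ one must align simultaneously \emph{two} of the orthogonal edge axes of $\TT^g$ with two of the secondary axes of $\DD_n^0$. The key observation is that two secondaries $b_l, b_{l'}$ of $\DD_n$ are perpendicular iff $|l-l'|=n/2$, which requires $n$ even; when $n$ is even, after placing $\kk$ along $g\cdot et_1$, an additional equatorial rotation sends $b_0$ and $b_{n/2}$ onto $g\cdot et_2$ and $g\cdot et_3$ simultaneously, giving $\Gamma = \DD_2 = \DD_{d_2(n)}$. When $n$ is odd no such orthogonal pair of secondaries exists, so only a single $\CC_2$ arises; this is consistent with the announced formula, since under the convention $\CC_1 = \DD_1 = \id$ the classes $[\CC_{d_3(n)}]$ and $[\DD_{d_2(n)}]$ collapse to $[\id]$ in the appropriate residue cases. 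Collecting the cases yields exactly $\lbrace [\id],[\CC_2],[\CC_{d_3(n)}],[\DD_{d_2(n)}] \rbrace$.
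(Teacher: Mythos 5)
Your proof is correct and takes essentially the same route as the paper's: both reduce $\Gamma=\DD_n\cap\TT^g$ to an axis-alignment case analysis via the direct-union decompositions and Lemma~\ref{intercy}, splitting on whether the dihedral primary axis meets a vertex axis (giving $\CC_{d_3(n)}$), an edge axis (giving $\CC_{d_2(n)}$, enriched to $\DD_2$ when secondary axes also align), or no axis (giving at most $\CC_2$ from a single secondary); your choice to fix $\DD_n^0$ and conjugate $\TT$ rather than the reverse is immaterial, since the clips operation is symmetric under conjugation. Your explicit criterion that two secondary axes $b_l,b_{l'}$ are perpendicular iff $\left|l-l'\right|=n/2$ (forcing $n$ even for the $[\DD_{d_2(n)}]=[\DD_2]$ case) simply fills in a detail the paper states without justification.
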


Now, for the study of $\Gamma=\TT^0\cap \TT^g$ the arguments will be based on $\TT$ subgroups as well as on the axes. 
\begin{listerd}
\item First, we can find a $g$ such as all the axes are modified; in this case $\Gamma=\id$; 
\item A rotation around a face or an edge axis can be found such as only this axis is leave fixed. Then $\Gamma=\CC_2$  or $\Gamma=\CC_3$ depending on the fixed axis.
\item If we have $\Gamma\supset \DD_2$ then we can deduce that $g$ carry two edge axes onto two edge axes. After a given permutation of axes (which leave fix $\TT^0$) we can suppose that $g$ leaves fix axes $vt_1$ and $vt_2$; we then conclude that $g$ fixes also the axis $vt_3$ and then $\TT^g=\TT^0$. Thus we have here $\Gamma=\TT^0$
\end{listerd}

We deduce here the lemma:
\begin{lem} 
We have $[\TT]\circledcirc [\TT]=\lbrace [\id],[\CC_2],[\CC_3],[\TT]$ 
\end{lem}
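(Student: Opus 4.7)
The plan is to determine every conjugacy class that arises as $\TT^0\cap g\TT^0 g^{-1}$ as $g$ ranges over $\sot$, using the axis decomposition~\eqref{dtetra} of $\TT^0$ together with the lattice of subgroups of $\TT$ displayed in Figure~\ref{lattice}.

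First I would observe that $\Gamma:=\TT^0\cap g\TT^0 g^{-1}$ is necessarily a subgroup of $\TT^0$, and that the subgroups of $\TT$, up to conjugacy in $\sot$, are exactly $\id$, $\CC_2$, $\CC_3$, $\DD_2$, and $\TT$. Hence
\[
[\TT]\circledcirc[\TT]\subseteq\{[\id],[\CC_2],[\CC_3],[\DD_2],[\TT]\}.
\]

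Next I would realize the four classes $[\id]$, $[\CC_2]$, $[\CC_3]$, $[\TT]$ by explicit choices of $g$, in the same spirit as in the previous lemmas. The class $[\TT]$ is obtained trivially with $g=e$. For $[\CC_3]$, pick $g$ to be a rotation of generic angle about a vertex axis $vt_i$: then $\CC_3^{vt_i}\subseteq\Gamma$, while a generic angle moves the remaining six axes of $\TT^0$ off all seven axes of $\TT^g$, forcing $\Gamma=\CC_3^{vt_i}$. For $[\CC_2]$, do the same with a generic rotation fixing an edge axis $et_j$. For $[\id]$, take any $g$ generic enough that every axis of $\TT^0$ is moved off the axis set of $\TT^g$, so that no nontrivial rotation of $\TT^0$ survives in $\Gamma$.

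The main step, and the only real obstacle, is to show that the class $[\DD_2]$ does not occur as a proper intersection: if $\DD_2\subseteq\Gamma$ then necessarily $\TT^g=\TT^0$ and hence $\Gamma=\TT$. The argument I have in mind is as follows. The (unique, normal) copy of $\DD_2$ in $\TT^0$ is generated by the three half-turns about the mutually orthogonal edge axes $et_1, et_2, et_3$; if $\DD_2\subseteq\Gamma$, these three axes must also be edge axes of $\TT^g$, so $g$ preserves the orthogonal triple $\{et_1,et_2,et_3\}$ setwise. Equivalently, $g$ normalizes $\DD_2$ in $\sot$. But the normalizer of this $\DD_2$ in $\sot$ is precisely the rotation group $\OO$ of the cube whose faces are perpendicular to the $et_i$, and $\TT^0$ is a normal subgroup of $\OO$. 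Consequently $g\TT^0 g^{-1}=\TT^0$, so $\Gamma=\TT$, which rules out a proper $\DD_2$ intersection.

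Combining these steps yields $[\TT]\circledcirc[\TT]=\{[\id],[\CC_2],[\CC_3],[\TT]\}$. The delicate point is the last one: although $\DD_2$ is genuinely a subgroup of $\TT$, it carries enough geometric content — three mutually orthogonal axes — that preserving it already forces $g$ into the normalizer $\OO$ of $\TT^0$, and therefore collapses $\Gamma$ up to the full tetrahedral class.
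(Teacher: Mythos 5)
Your proof is correct and follows essentially the same route as the paper: bound $\Gamma=\TT^0\cap\TT^g$ by the subgroup lattice of $\TT$ (so only $[\id],[\CC_2],[\CC_3],[\DD_2],[\TT]$ can occur), realize $[\id]$, $[\CC_2]$, $[\CC_3]$, $[\TT]$ by generic choices of $g$ fixing no axis, an edge axis, or a vertex axis, and rule out a proper $[\DD_2]$ by showing that preserving the triple of mutually orthogonal edge axes forces $\TT^g=\TT^0$. Your normalizer packaging of that last step ($g$ normalizes $\DD_2$, $N_{\sot}(\DD_2)=\OO$, and $\TT\trianglelefteq\OO$) is a cleaner, more structural rendering of the paper's axis-permutation argument (which, incidentally, misprints the edge axes $et_i$ as $vt_i$), but it is the same underlying idea.
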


\subsubsection*{Octahedral subgroup}

We begin by taking back the decomposition ~\ref{deccube}
\[
\OO^0=\biguplus_{i=1}^3 \CC_4^{fc_i} \biguplus_{j=1}^4 \CC_3^{vc_j} \biguplus_{l=1}^6 \CC_2^{ec_l}
\]
And, as in the case of the tetrahedron, we directly get the lemma:
\begin{lem}
For every integer $n$, we note 
\[
d_2(n)=gcd(n,2)\:; \: d_3(n)=gcd(n,3) \:; \: d_4(n)=\begin{cases} 4 \text{ if } 4\mid n \\ 1 \text{ otherwise}  \end{cases}
\]
Then we have $[\CC_n]\circledcirc [\OO]=\lbrace [\id],[\CC_{d_2(n)}],[\CC_{d_3(n)}],[\CC_{d_4(n)}]$
\end{lem}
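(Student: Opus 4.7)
The plan follows the template already applied to $[\CC_n]\circledcirc[\TT]$ just above. I would start from the direct union decomposition~\ref{deccube},
\[
\OO^0=\biguplus_{i=1}^3 \CC_4^{fc_i} \biguplus_{j=1}^4 \CC_3^{vc_j} \biguplus_{l=1}^6 \CC_2^{ec_l},
\]
fix an arbitrary $g\in\sot$, and analyse $\Gamma:=\CC_n^{a}\cap \OO^g$, where $a$ is the image of $Oz$ by $g$. Any nontrivial rotation has a single fixed axis, so lemma~\ref{intercy} forces $a$ to coincide with the axis of one of the cyclic factors of $\OO^g$; otherwise $\Gamma=\id$. This reduces the computation to intersections of the form $\CC_n^{a}\cap \CC_k^{a}=\CC_{\gcd(n,k)}^{a}$ with $k\in\{2,3,4\}$, one for each type of symmetry axis of the cube.

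The four geometric configurations to enumerate are therefore: (i) $a$ meets no axis of $\OO^g$, giving $\Gamma=\id$; (ii) $a$ is a face axis ($k=4$), giving $\Gamma=\CC_{\gcd(n,4)}$; (iii) $a$ is a vertex axis ($k=3$), giving $\Gamma=\CC_{d_3(n)}$; (iv) $a$ is an edge axis ($k=2$), giving $\Gamma=\CC_{d_2(n)}$. Realizability of each class is immediate by an explicit choice of $g$ sending $\kk$ onto the relevant axis of $\OO^0$ (or a generic $g$ in case (i)). Since $\Gamma\subset\CC_n^a$ is cyclic, no dihedral or exceptional class can arise, so these are all the classes produced by the clips operation.

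The only genuinely notational step — and the place where minor care is needed — is to rewrite $\gcd(n,4)$ in the form used by the statement. If $4\mid n$ then $\gcd(n,4)=4=d_4(n)$, contributing the honestly new class $[\CC_4]$; if $4\nmid n$ then $\gcd(n,4)\in\{1,2\}$ coincides with $d_2(n)$ and so contributes nothing beyond $[\CC_{d_2(n)}]$, while by definition $d_4(n)=1$ makes $[\CC_{d_4(n)}]=[\id]$, which is already in the list. Both sub-cases are thus correctly packaged into the single symbol $[\CC_{d_4(n)}]$, yielding exactly
\[
[\CC_n]\circledcirc [\OO]=\{[\id],[\CC_{d_2(n)}],[\CC_{d_3(n)}],[\CC_{d_4(n)}]\}.
\]
I do not foresee any real obstacle: the decomposition~\ref{deccube} reduces everything to elementary gcd calculations, and the only pitfall is to avoid double-counting classes when $\gcd(n,4)$ collapses onto $d_2(n)$, which the definition of $d_4(n)$ is designed to handle.
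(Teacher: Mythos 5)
Your proposal is correct and is essentially the paper's own argument: the paper likewise invokes the direct union decomposition~\ref{deccube} together with lemma~\ref{intercy} and states that the lemma follows directly, exactly as in the tetrahedral case. Your explicit treatment of the collapse $\gcd(n,4)\in\{1,2\}$ onto $d_2(n)$ when $4\nmid n$, which motivates the definition of $d_4(n)$, is a detail the paper leaves implicit but is handled correctly.
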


To study the clips operation with dihedral groups, we proceed in the same way as for the tetrahedron subgroup.
The purpose is to examine axes of the cube and dihedral group. The arguments are about the same in each case. Therefore we will only detail the cases where $4\nmid n$, $3\mid n$ and $n$ is odd. 
\begin{listerd}
\item Either we have $a=vc_j$, then if  $b=fc_j$, $\Gamma=\DD_3$ or $\Gamma=\CC_3$ otherwise.
\item Or $a=fc_j=0$, then if $b=fc_j$, $\Gamma=\CC_2$ or $\Gamma=\id$ otherwise.
\item Else $a=ec_j$, then if $b=ec_j$ or $b=fc_j$, $\Gamma=\CC_2$.
\end{listerd}

All these arguments leads us to the lemma:
\begin{lem}
For every integer $n$, we note 
\[
d_2(n):=gcd(n,2)\:; \: d_3(n):=gcd(n,3) \:; \: d_4(n):=\begin{cases} 4 \text{ if } 4\mid n \\ 1 \text{ otherwise}  \end{cases}
\]
Then we have $
[\DD_n]\circledcirc [\OO]=\lbrace [\id],[\CC_2],[\CC_{d_3(n)}],[\CC_{d_4(n)}],[\DD_{d_2(n)}],[\DD_{d_3(n)}],[\DD_{d_4(n)}]\rbrace$
\end{lem}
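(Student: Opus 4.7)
The plan is to proceed exactly as the authors did for $[\TT]\circledcirc[\OO]$, namely to combine the direct-union decompositions of both subgroups and then enumerate the relative positions of the primary and secondary axes of $\DD_n$ with respect to the face, vertex, and edge axes of the cube. Concretely, I would write
\[
\DD_n^{a,b}=\CC_n^a\biguplus_{k=0}^{n-1}\CC_2^{b_k},\qquad
\OO^0=\biguplus_{i=1}^{3}\CC_4^{fc_i}\biguplus_{j=1}^{4}\CC_3^{vc_j}\biguplus_{l=1}^{6}\CC_2^{ec_l},
\]
so that $\Gamma=\DD_n^{a,b}\cap\OO^0$ is obtained by intersecting each cyclic factor of $\DD_n^{a,b}$ with each cyclic factor of $\OO^0$ and applying Lemma~\ref{intercy}: two cyclic subgroups of $\sot$ with distinct axes meet trivially, while along a common axis they meet in the cyclic group of order $\gcd$.

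The enumeration splits according to the placement of the primary axis $a$. There are four cases: (i) $a$ is generic, (ii) $a$ is a face axis $fc_i$, (iii) $a$ is a vertex axis $vc_j$, (iv) $a$ is an edge axis $ec_l$. In each case $\CC_n^a\cap\OO^0$ is computed immediately: we get $\id$, $\CC_{d_4(n)}$, $\CC_{d_3(n)}$, or $\CC_{d_2(n)}$ respectively. Within each case, I would then let the free $S^1$-parameter (rotation about $a$) vary and record the contributions of the six secondary $\CC_2^{b_k}$: each contributes either $\CC_2$ (if $b_k$ coincides with some cube axis of even order through the origin perpendicular to $a$) or $\id$. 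The secondary axes form $n$ equally spaced directions in the plane $a^\perp$, so the question becomes a combinatorial compatibility between the regular $n$-gon in $a^\perp$ and the set of cube axes in $a^\perp$. The generic rotation about $a$ kills all secondary contributions, producing the bare class coming from $\CC_n^a\cap\OO^0$; special rotations produce enlargements.

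Carrying out the four cases yields the listed classes: case (i) gives $[\id]$ generically and $[\CC_2]$ when some $b_k$ falls on an edge axis; case (ii) gives $[\CC_{d_4(n)}]$ generically and, when $n$ is even, enlarges to $[\DD_{d_4(n)}]$ by aligning the secondary axes with the four $\CC_2$'s (two faces, two edges) perpendicular to $fc_i$; case (iii) gives $[\CC_{d_3(n)}]$ generically and enlarges to $[\DD_{d_3(n)}]$ when three of the $b_k$'s are aligned with the three edge axes orthogonal to $vc_j$, which requires $d_3(n)=3$ together with $n$ even, i.e. $6\mid n$ (handled uniformly by writing $\DD_{d_3(n)}$); case (iv) gives at most $[\DD_{d_2(n)}]=[\DD_2]$, obtained when $n$ is even and one $b_k$ lies along the face or the other perpendicular edge axis. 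Collecting duplicates produces the announced set
\[
\{[\id],[\CC_2],[\CC_{d_3(n)}],[\CC_{d_4(n)}],[\DD_{d_2(n)}],[\DD_{d_3(n)}],[\DD_{d_4(n)}]\}.
\]

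The main obstacle is not the algebra but the bookkeeping: one must verify that each candidate class is actually realized (by exhibiting an explicit $g\in\sot$ making the required coincidences) and that no other class appears. Realizability is handled by checking that the constraints on the secondary axes --- a regular $n$-gon in $a^\perp$ has to contain a prescribed subset of cube axes --- are solvable, which reduces to the divisibility conditions encoded by $d_2$, $d_3$, $d_4$. Non-appearance of extra classes follows from Lagrange's theorem applied to $\Gamma\subset\OO$ combined with the constraint that $\Gamma\subset\DD_n$, so that any $\Gamma$ must be one of $\id,\CC_2,\CC_3,\CC_4,\DD_2,\DD_3,\DD_4$, with the order parameter being capped by $d_k(n)$ in each family. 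This case analysis mirrors the one the authors already performed for $[\DD_n]\circledcirc[\TT]$, and one can lean on Lemma~\ref{isotropexce} and the poset of subgroups of $\OO$ (Figure~\ref{lattice}) to cut the number of cases to inspect.
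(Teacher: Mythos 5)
Your proposal follows essentially the same route as the paper: the appendix proof uses exactly the direct-union decomposition $\OO^0=\biguplus_{i=1}^3 \CC_4^{fc_i} \biguplus_{j=1}^4 \CC_3^{vc_j} \biguplus_{l=1}^6 \CC_2^{ec_l}$ together with $\DD_n^{a,b}=\CC_n^a\biguplus_k \CC_2^{b_k}$ and Lemma~\ref{intercy}, then runs the same case analysis on whether the primary axis $a$ is generic, a face, a vertex, or an edge axis of the cube (the paper merely compresses this, detailing only the representative case $4\nmid n$, $3\mid n$, $n$ odd); your completeness argument via the subgroup poset of $\OO$ intersected with the subgroups of $\DD_n$ is likewise the implicit backbone of the paper's enumeration.

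One local slip is worth correcting: in your case (iii) you claim that realizing $[\DD_{d_3(n)}]$ with $d_3(n)=3$ requires $n$ even, i.e.\ $6\mid n$. That is false: $\DD_3\subset \DD_n$ (with primary axis a vertex axis $vc_j$ and secondary axes along the three edge axes of the cube perpendicular to it) needs only $3\mid n$, since the three perpendicular edge axes are spaced $\pi/3$ in the plane $a^\perp$ and the $n$ secondary axes of $\DD_n$ are spaced $\pi/n$. Indeed the paper's detailed case is precisely $n$ odd with $3\mid n$ (e.g.\ $n=3$), where it obtains $\Gamma=\DD_3$; under your stated condition $[\DD_3]$ would be missing from $[\DD_3]\circledcirc[\OO]$, contradicting the lemma. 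Your hedge of ``writing $\DD_{d_3(n)}$ uniformly'' gives the right final set, but the realizability justification must be $3\mid n$ alone, not $6\mid n$. (A similar minor imprecision occurs in case (ii): for $n\equiv 2 \bmod 4$ the primary intersection along a face axis is $\CC_{\gcd(n,4)}=\CC_2$, not $\CC_{d_4(n)}=\id$; this is harmless since $[\CC_2]$ is already in the list, but it should be stated correctly.)
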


Now, we take $\Gamma=\OO^0 \cap \TT^g$ and one can observe that $\Gamma$ is necessarily a common subgroup of $\OO^0$ and $\TT^g$, then its class must contain (in the sense of the partial order) $[\id]$, $[\CC_2]$, $[\CC_3]$, $[\DD_2]$ or $[\TT]$. After that: 
\begin{listerd}
\item There exists a rotation $g$ around an edge axis of $\TT^0$ (that is a common face axis of the cube) such that only this axis is fixed; and then $\Gamma=\CC_2$;
\item There exists a rotation $g$ around a vertex axis of $\TT^0$ (that is a common vertex axis of the cube) such that only this axis is fixed; and then $\Gamma=\CC_3$;
\item As soon as $\Gamma\supset \DD_2$, as in tetrahedral case, we  necessarily have $\Gamma=\TT^0$. 
\end{listerd}

We conclude here the lemma:
\begin{lem}
We have $[\TT]\circledcirc [\OO]=\lbrace [\id],[\CC_2],[\CC_3],[\TT] \rbrace$
\end{lem}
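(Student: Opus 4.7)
My strategy is to enumerate, up to conjugacy, the possible intersections $\Gamma = \TT^0 \cap \OO^g$ as $g$ varies over $\sot$. Since $\Gamma$ is always a subgroup of $\TT^0$, and the conjugacy classes of $\sot$-subgroups of $\TT$ are $\{[\id], [\CC_2], [\CC_3], [\DD_2], [\TT]\}$ (read off the lattice of $\sot$-closed subgroups in Figure~\ref{lattice}), the clips product $[\TT]\circledcirc[\OO]$ sits inside this five-element list, and it remains only to determine which classes are actually realized.

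Four of the candidates are straightforward. Taking $g = e$ gives $\Gamma = \TT^0$, since the standard tetrahedron $\mathcal{T}_0$ is inscribed in the cube $\mathcal{C}_0$, so that $[\TT]$ is realized. For $[\CC_3]$ I would use a rotation about one vertex axis of $\TT^0$ (a body diagonal of the cube) by a generic angle: this preserves the 3-fold rotation about that axis while decoupling every other pair of axes. For $[\CC_2]$ the analogous construction around one edge axis of $\TT^0$ (a face axis of $\OO^0$) leaves a single surviving $180^\circ$ rotation. And for sufficiently generic $g$ no axis alignment survives, so that $\Gamma = \id$.

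The substantive step is to rule out $[\DD_2]$. The unique $\DD_2$ subgroup of $\TT^0$ has its three 2-fold axes along the coordinate axes, which are precisely the face axes of $\OO^0$. Thus $\DD_2 \subseteq \Gamma$ forces $\OO^g$ to contain the $180^\circ$ rotations about each coordinate axis. I would then classify the mutually perpendicular triples of order-2 axes inside $\OO$: either (Type A) all three are face axes, in which case $\OO^g = \OO^0$ because the face-axis frame pins down the cube, and hence $\Gamma = \TT^0$; or (Type B) the triple consists of one face axis together with two edge axes (a non-normal $\DD_2$ of $\OO$), in which case $\OO^g$ is obtained from $\OO^0$ by a $45^\circ$ rotation about the shared face axis.

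The main obstacle I expect is closing out Type B: showing that this configuration cannot leave $\Gamma$ stuck at $\DD_2$. Concretely, I would need to verify whether any of the four 3-fold vertex axes of the rotated cube $\OO^g$ lands on a body diagonal of $\TT^0$; a positive answer pulls a 3-fold rotation into $\Gamma$ and forces it up to $\TT^0$, completing the lemma. This explicit axis comparison under the $45^\circ$ rotation is where the geometric argument is really tested, and it is the step I anticipate requiring the most care.
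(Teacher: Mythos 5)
Your reduction is set up correctly, and is in fact more careful than the paper's own proof. The paper disposes of the dihedral case in one line (``as soon as $\Gamma\supset \DD_2$, as in tetrahedral case, we necessarily have $\Gamma=\TT^0$''), which silently assumes that the three $2$-fold axes of a $\DD_2\subset\OO^g$ must be the three face axes of the rotated cube. That assumption is what makes the argument work for $[\TT]\circledcirc[\TT]$, but it fails for $\OO$, which also carries $2$-fold rotations about its six edge axes; your Type~A / Type~B dichotomy (three face axes, versus one face axis plus two perpendicular edge axes) is exactly the point the paper overlooks, and your Type~A analysis ($\OO^g=\OO^0$, hence $\Gamma=\TT^0$) is correct.

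The genuine gap is that you leave the decisive Type~B verification undone and anticipate it will rescue the lemma --- it does not: the axis check you propose comes out \emph{negative}. Take $\rr=\QQ\left(\kk;\frac{\pi}{4}\right)$ and $\OO^{\rr}=\rr\,\OO^0\,\rr^{t}$. Since $\rr^{t}\ii$ and $\rr^{t}\jj$ span the edge axes $(1,-1,0)$ and $(1,1,0)$ of $\mathcal{C}_0$, while $\kk$ is a face axis of $\OO^{\rr}$, the three half-turns $\QQ(\ii;\pi)$, $\QQ(\jj;\pi)$, $\QQ(\kk;\pi)$ all belong to $\OO^{\rr}$; they also belong to $\TT^0$, whose edge axes $et_j$ are the coordinate axes. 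On the other hand the vertex ($3$-fold) axes of $\OO^{\rr}$ are spanned by $\rr\cdot(\pm1,\pm1,1)$, i.e.\ by $(0,\pm\sqrt{2},1)$ and $(\pm\sqrt{2},0,1)$, each of which has a vanishing coordinate and so is proportional to no body diagonal $(\pm1,\pm1,\pm1)$ of $\mathcal{T}_0$; no $3$-fold rotation enters the intersection, and $\TT^0$ has no further involutions. Hence $\TT^0\cap\OO^{\rr}=\lbrace e,\QQ(\ii;\pi),\QQ(\jj;\pi),\QQ(\kk;\pi)\rbrace=\DD_2$ exactly: Type~B cannot be closed out, and your plan, carried to completion, \emph{disproves} the statement rather than proving it. The correct clips set is $[\TT]\circledcirc[\OO]=\lbrace[\id],[\CC_2],[\CC_3],[\DD_2],[\TT]\rbrace$, so the lemma and the corresponding entry of Table~\ref{global} must be amended (note that this entry is not used in the proofs of Theorems I--III, which rely only on clips with cyclic, dihedral and planar classes, so the paper's main results are unaffected).
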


For the study of $\Gamma=\OO^0\cap \OO^g$ we will also use arguments based on subgroups. Some results are nevertheless more subtle:  
\begin{listerd}
\item First, there exists a rotation $g$ that fix only one edge axis, and in that case $\Gamma=\CC_2$; 
\item Then there exists a rotation that leaves fix only one vertex axis, and in that case $\Gamma=\CC_3$; 
\item There exists also a rotation that leaves fix only one face axis, and no other axis is fixed. See for example figure~\ref{cubec4}: in that case $\Gamma=\CC_4$;
\item We can also find a rotation that leaves fix a face axis and which bring an edge axis onto a face axis. Indeed when we take $g=\QQ\left( \ii;\frac{\pi}{4}\right)$ we obtain $\Gamma=\CC_4^{\ii}\uplus \CC_2^{\kk}=\DD_4$ (c.f. figure~\ref{cubed4});
\item If we take  $g=\QQ\left( \kk;\frac{\pi}{4}\right) \circ \QQ\left( \ii;\frac{\pi}{4}\right)$ we directly obtain $\Gamma=\DD_2$. We can exactly compute that $gfc_3=ec_6$, $gec_1=fc_1$ and  $gec_2=ec_5$ and no other axes corresponds (c.f. figure~\ref{cubeklein}). 
\item If we take $g=\QQ(vc_1,\pi)$ we will find $\Gamma=\DD_3$ with $vc_1$ as primary axis and $ec5$ a secondary axis.
\item If $\Gamma\supset \TT$ then, necessarily, $g$ leaves fix the three edge axes of the tetrahedron, and then $g$ will fix the cube $\mathcal{C}_0$; thus $\Gamma=\OO^0$. 
\end{listerd}

\begin{figure}
    \centering
    \subfigure[$\OO^0\cap \OO^g=\DD_2$]{\includegraphics[scale=0.75]{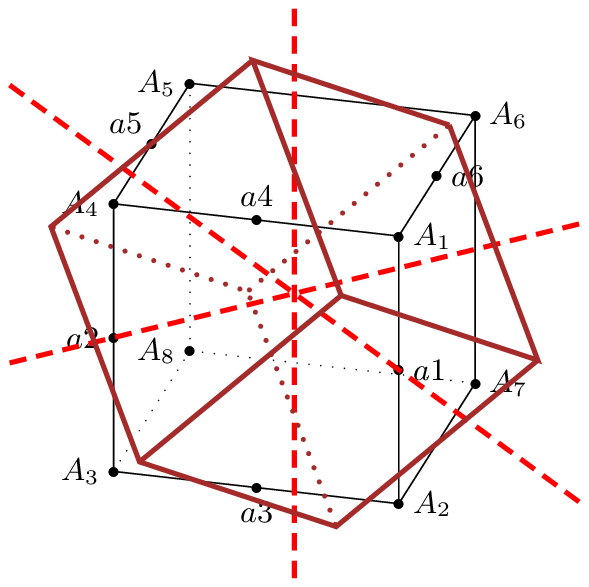}\label{cubeklein}}
    \subfigure[$\OO^0\cap \OO^g=\CC_4$]{\includegraphics[scale=0.9]{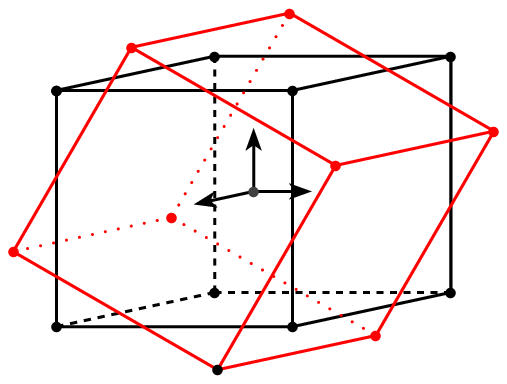}\label{cubec4}}
    \subfigure[$\OO^0\cap \OO^g=\DD_4$]{\includegraphics[scale=0.9]{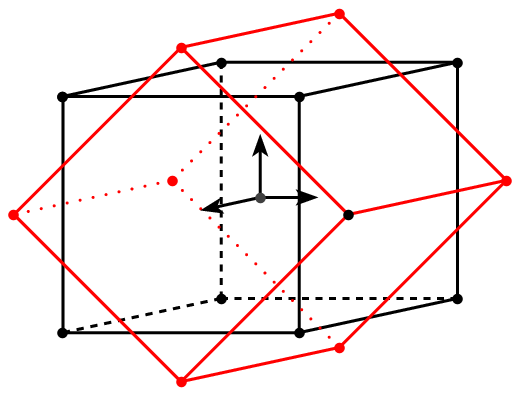}\label{cubed4}}
\end{figure}

Finally we get the lemma:
\begin{lem}
We have $[\OO]\circledcirc [\OO]=\lbrace [\id],[\CC_2],[\DD_2],[\CC_3],[\DD_3],[\CC_4],[\DD_4],[\OO] \rbrace$
\end{lem}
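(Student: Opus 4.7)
The plan is to prove both inclusions: every listed class is achieved by some conjugate $\OO^g$, and no other class from $\sot$ can occur. Since any intersection $\Gamma = \OO^0 \cap \OO^g$ is a common subgroup of two octahedral groups, its conjugacy class must be one of the $\OO$-subgroup classes, which by standard lattice information (see \autoref{lattice}) are exactly $[\id],[\CC_2],[\CC_3],[\CC_4],[\DD_2],[\DD_3],[\DD_4],[\TT],[\OO]$. So the only candidates beyond the listed eight are $[\TT]$, which I will rule out, and $[\OO]$ itself, which corresponds to $g\in\OO^0$.

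The realization part would proceed by exhibiting a specific rotation $g$ for each listed class, using the axis decomposition~\eqref{deccube}. For a generic $g$, no axis of $\OO^g$ coincides with one of $\OO^0$, yielding $\Gamma=\id$. A small rotation around one edge (resp.\ vertex, face) axis preserves only that axis, yielding $\Gamma=\CC_2$ (resp.\ $\CC_3$, $\CC_4$); the face-axis case is illustrated in figure~\ref{cubec4}. The explicit choices $g=\QQ(\ii;\pi/4)$ and $g=\QQ(\kk;\pi/4)\circ\QQ(\ii;\pi/4)$ given in the text realize $\DD_4$ and $\DD_2$ respectively (figures~\ref{cubed4}, \ref{cubeklein}), by checking which of the axes $fc_i,ec_j,vc_l$ are mapped to axes of the same or different type. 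The choice $g=\QQ(vc_1;\pi)$ fixes the vertex axis $vc_1$ and swaps two edge axes through it, so it contributes a $\DD_3$ whose primary axis is $vc_1$ and one secondary axis is $ec_5$. Finally $g=e$ trivially gives $[\OO]$.

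The main obstacle, and the only genuinely non-routine point, is the exclusion of $[\TT]$. Suppose $\TT\subset \Gamma=\OO^0\cap\OO^g$. Decomposition~\eqref{dtetra} shows that $\TT^0$ contains three mutually perpendicular order-$2$ rotations whose axes are the edge axes $et_1,et_2,et_3$ of $\TT^0$; these are exactly three of the face axes $fc_i$ of the cube $\mathcal{C}_0$, and they form an orthonormal frame. If $\TT\subset \OO^g$, then $g$ must send this orthonormal frame of $\OO^0$-axes to another orthonormal frame of axes of $\OO^0$ (permuting the $fc_i$'s); such a $g$ necessarily belongs to $\OO^0$ itself, so $\OO^g=\OO^0$ and $\Gamma=\OO^0$, contradicting $\Gamma=\TT$. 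Therefore $[\TT]$ never occurs, and combined with the previous realizations we obtain exactly the eight classes claimed.
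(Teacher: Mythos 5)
Your proposal is correct and follows essentially the same route as the paper: each of the eight classes is realized by the same explicit rotations (generic $g$, small rotations about edge/vertex/face axes, $\QQ(\ii;\pi/4)$, $\QQ(\kk;\pi/4)\circ\QQ(\ii;\pi/4)$, $\QQ(vc_1;\pi)$), and $[\TT]$ is excluded by the same observation that $\Gamma\supset\TT$ forces $g$ to permute the three edge axes of the tetrahedron, hence to fix the cube $\mathcal{C}_0$, giving $\Gamma=\OO^0$. Your only addition is to make explicit the upper bound via the subgroup lattice of $\OO$ (so that $[\TT]$ is the sole class needing exclusion), which the paper leaves implicit; this is a minor gain in rigor, not a different argument.
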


\subsubsection*{Icosahedral subgroup}

We take the decomposition~\ref{dico}
\[
\II^0=\biguplus_{i=1}^6\CC_5^{fd_i}\biguplus_{j=1}^{10}\CC_3^{vd_j}\biguplus_{l=1}^{15}\CC_2^{ed_l}
\]
And, as in the previous situations, we directly get the lemma:
\begin{lem}
For every integer $n$, we note 
\[
d_2:=gcd(n,2)\:; \: d_3:=gcd(n,3)\:; \: d_5:=gcd(n,5)
\]
Then we have $[\CC_n]\circledcirc [\II]=\lbrace [\id],[\CC_{d_2}],[\CC_{d_3}],[\CC_{d_5}] \rbrace$
\end{lem}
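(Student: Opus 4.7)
The plan is to apply Lemma~\ref{intercy} to the disjoint-union decomposition~\ref{dico} of $\II^{0}$. Since the clips operation $[\CC_n]\circledcirc[\II]$ is defined as the collection of conjugacy classes of $\CC_n^{0}\cap g\II^{0}g^{-1}$ for $g\in\sot$, a change of variables lets me fix $\II^{0}$ and vary the axis of the cyclic subgroup: writing $a=g\kk$, the intersections to enumerate are $\CC_n^{a}\cap\II^{0}$ as $a$ ranges over all lines through the origin of $\GR^{3}$.

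First I would observe that any element of $\CC_n^{a}\cap\II^{0}$ is a rotation that lies in some cyclic factor of $\II^{0}$, hence in some $\CC_{k}^{b}$ with $k\in\{2,3,5\}$ and $b\in\{ed_l,vd_j,fd_i\}$. By Lemma~\ref{intercy}, $\CC_n^{a}\cap\CC_{k}^{b}$ is trivial unless $a=b$, in which case the intersection equals $\CC_{\gcd(n,k)}^{a}$. Therefore
\[
\CC_n^{a}\cap\II^{0}=\begin{cases}\CC_{d_2}^{a} & \text{if } a=ed_l \text{ for some } l,\\ \CC_{d_3}^{a} & \text{if } a=vd_j \text{ for some } j,\\ \CC_{d_5}^{a} & \text{if } a=fd_i \text{ for some } i,\\ \id & \text{otherwise},\end{cases}
\]
with $d_2=\gcd(n,2)$, $d_3=\gcd(n,3)$, $d_5=\gcd(n,5)$.

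Next I would check that each of these situations is actually realized by some $g\in\sot$: since $\sot$ acts transitively on lines through the origin, one can always choose $g$ so that $g\kk$ coincides with any prescribed axis of $\II^{0}$, or with a generic direction missing all of them. The class $[\id]$ is therefore obtained for generic $g$ (and is also guaranteed by the first sentence of Theorem~\ref{thm:clipssot}). Collecting the four cases yields exactly
\[
[\CC_n]\circledcirc[\II]=\bigl\{[\id],[\CC_{d_2}],[\CC_{d_3}],[\CC_{d_5}]\bigr\},
\]
with the understanding that when one of $d_2,d_3,d_5$ equals $1$, the corresponding class collapses to $[\id]$ and the set just has fewer distinct elements. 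There is no real obstacle here; the only point that deserves a brief justification is that no rotation axis of $\II^{0}$ is simultaneously of two different types (edge, vertex, face), which is immediate from the geometry of the icosahedron and prevents any overlap in the case analysis.
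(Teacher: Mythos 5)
Your proof is correct and takes essentially the same route as the paper: the paper's (very terse) proof likewise invokes the direct-union decomposition~\ref{dico} of $\II^0$ into the cyclic factors $\CC_5^{fd_i}$, $\CC_3^{vd_j}$, $\CC_2^{ed_l}$ and applies Lemma~\ref{intercy}, which is what its phrase ``as in the previous situations'' refers to. You have merely spelled out what the paper leaves implicit, namely the case analysis on which icosahedral axis the cyclic axis coincides with and the realizability of each case by a suitable $g\in\sot$.
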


Now, for the study of $\II^0\cap \DD_n^g$ we again use arguments about axes:
\begin{listerd}
\item If $a=ft_j$ or $a=vt_j$, then $\Gamma\in\{\CC_{d_3}, \CC_{d_5},\DD_{d_3}\DD_{d_5}\}$ ;
\item If $a=et_j$ then $\Gamma\in\{\CC_{d_2}, \DD_{d_2}\}$.
\end{listerd}
When we argue on the secondary axis of $\DD_n^g$, we see that we can always have $\CC_2$.

Finally we get the lemma:  
\begin{lem}
For every integer $n$, we note 
\[
d_2:=gcd(n,2)\:; \: d_3:=gcd(n,3)\:; \: d_5:=gcd(n,5)
\]
Then we have $
[\DD_n]\circledcirc [\II]=\lbrace [\id],[\CC_2],[\CC_{d_3}],[\CC_{d_5}],[\DD_{d_2}],[\DD_{d_3}],[\DD_{d_5}]  \rbrace\rbrace$
\end{lem}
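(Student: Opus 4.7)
The plan is to mirror the axis-based arguments used just above for $[\DD_n]\circledcirc [\OO]$ and $[\DD_n]\circledcirc [\TT]$ by classifying the intersection $\Gamma = \II^0 \cap \DD_n^g$ according to how the axes of $\DD_n^g$ align with the face, vertex and edge axes of $\mathcal{D}_0$. Starting from the direct-union decomposition~\ref{dico} of $\II^0$ and from the decomposition $\DD_n^g = \CC_n^a \biguplus_l \CC_2^{b_l}$ with $a = g\cdot \kk$, every nontrivial rotation in $\Gamma$ fixes a common axis, so the problem reduces to enumerating which alignments of the primary axis $a$ and the secondary axes $b_l$ with the icosahedral axes $fd_i$, $vd_j$, $ed_l$ are possible.

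First I will catalog the contributions of the primary axis. Applying Lemma~\ref{intercy} to $\CC_n^a \cap \CC_p^{\star}$ for $p\in\{5,3,2\}$ gives $\CC_{d_5}^a$ when $a$ coincides with some face axis $fd_i$, $\CC_{d_3}^a$ when $a$ coincides with some vertex axis $vd_j$, and $\CC_{d_2}^a$ when $a$ coincides with some edge axis $ed_l$. If, in addition, a secondary axis $b_l$ of $\DD_n^g$ can be chosen to lie along an icosahedral $2$-fold axis in the plane orthogonal to $a$, the corresponding reflective factor promotes $\CC_{d_5}$, $\CC_{d_3}$, $\CC_{d_2}$ to $\DD_{d_5}$, $\DD_{d_3}$, $\DD_{d_2}$, respectively. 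When only a secondary axis $b_l$ aligns with some $ed_{l'}$ while $a$ is generic, the intersection reduces to a single copy of $\CC_2$; all other configurations contribute only $[\id]$, which is included by Theorem~\ref{thm:clipssot}.

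To close the argument I will verify that no other class appears. Since $\Gamma$ is simultaneously a subgroup of $\II^0$ and of $\DD_n^g$, and the closed subgroup lattice of $\II^0$ recorded in~\cite{GSI84} contains only $\id,\CC_2,\CC_3,\CC_5,\DD_2,\DD_3,\DD_5$ and $\II^0$ itself, the listed classes are exhaustive — the full icosahedral class cannot occur because $\DD_n^g$ does not contain a subgroup isomorphic to $\II^0$. The main technical difficulty will be the existence step for the dihedral cases: one must exhibit a rotation $g$ for which the primary axis and at least one secondary axis of $\DD_n^g$ simultaneously land on the prescribed pair of perpendicular icosahedral axes. This reduces to a geometric check on $\mathcal{D}_0$, namely that every $5$-fold face axis, every $3$-fold vertex axis and every $2$-fold edge axis has, in its orthogonal plane, at least one $2$-fold icosahedral axis; this follows from the explicit golden-ratio coordinates of $\mathcal{D}_0$ listed in the parameterization section.
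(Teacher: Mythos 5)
Your overall route is the paper's own: the printed proof classifies $\Gamma=\II^0\cap \DD_n^g$ by whether the primary axis $a$ lands on a face axis $fd_i$ or vertex axis $vd_j$ (giving $\CC_{d_5},\CC_{d_3},\DD_{d_5},\DD_{d_3}$) or on an edge axis $ed_l$ (giving $\CC_{d_2},\DD_{d_2}$), and then notes that arguing on the secondary axes of $\DD_n^g$ one can always obtain $\CC_2$. Your catalog of primary-axis contributions via lemma~\ref{intercy}, the promotion of $\CC_{d_k}$ to $\DD_{d_k}$ when a secondary axis can be simultaneously aligned with an icosahedral $2$-fold axis, and your closing geometric check that every $5$-, $3$- and $2$-fold axis of $\mathcal{D}_0$ has a perpendicular icosahedral $2$-fold axis (equivalently, that $\DD_5$, $\DD_3$ and $\DD_2$ sit inside $\II$ with the corresponding primary axes) reproduce, and usefully make explicit, exactly what the paper leaves implicit.

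There is, however, one factual error in your exhaustiveness step: the closed-subgroup lattice of $\II^0$ is \emph{not} $\lbrace \id,\CC_2,\CC_3,\CC_5,\DD_2,\DD_3,\DD_5,\II\rbrace$; it also contains $\TT$, as the paper's own poset in figure~\ref{lattice} shows, and as is confirmed in table~\ref{global} by $[\TT]\in[\TT]\circledcirc[\II]$ and $[\TT]\in[\OO]\circledcirc[\II]$. As written, your lattice argument therefore fails to rule out $[\TT]$ from $[\DD_n]\circledcirc[\II]$. The repair is immediate and already present in your own setup: $\Gamma$ is simultaneously a subgroup of $\DD_n^g$, and no dihedral or cyclic group contains a subgroup isomorphic to $\TT$ --- precisely the argument you invoked to exclude $[\II]$ --- so only trivial, cyclic and dihedral groups can occur, and intersecting these with the actual subgroup types of $\II$ yields the stated list. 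With that one correction your proof is complete and coincides in method with the paper's.
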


For the intersection $\II^0 \cap \TT^g$ it is clear, because of the inclusion $\TT^0 \subset \II^0$ that we can obtain all the classes of $[\TT]\circledcirc [\TT]$. If now this intersection contains a subgroup $\DD_2$, we will necessarily have 
\[
\DD_2=\CC_2^{get_1}\uplus \CC_2^{get_3}\uplus \CC_2^{get_3}
\]
where $get_i$ are the three edge axis of the tetrahedron $g\mathcal{T}_0$. These three axes will then have to correspond to three perpendicular axes of the dodecahedron. After permutation of the axes, which leaves fix the dodecahedron, we can suppose that these three axes are generated by the three vectors of the basis. But, then, the vertex axes of the tetrahedron will correspond to vertex axes of the embedded cube in the dodecahedron. We can then deduce that the intersection will be the whole $\TT$ subgroup.  

Now we have to study $\Gamma=\II^0 \cap \OO^g$. For that, we refer to the common subgroups of $[\OO]$ and $[\II]$. Such subgroups can clearly be taken from the poset~\ref{lattice}. First, it is clear that, when the cube related to $\OO^g$ is the embedded cube in the dodecahedron, we will have $\Gamma=\TT$. 

We also can find a rotation $g$ such as $\Gamma$ contains $\DD_3$: indeed, $g$ has to bring the vertex axis of the cube $vc_1$ onto the vertex axis of the dodecahedron $vd_5$ and the edge axis of the cube $ec_5$ onto the edge axis of the dodecahedron $ed_7$. With maximality argument, we can deduce that $\Gamma=\DD_3$. We now have to examine the case of $\DD_2$, $\CC_3$ and $\CC_2$: 
\begin{listerd}
\item When $\Gamma \supset \DD_2$, then, after permutation of axis of the dodecahedron, we can suppose that $g$ leaves fix the three axis of the basis vector. But these three axis are two order rotations of the dodecahedron. Thus, $g$ will fix the cube $\mathcal{C}_0$ and we can deduce that $\OO^g=\OO^0$. We then have $\Gamma=\TT$.
\item We can find a rotation $g$ around a vertex axis, for example $vd_5$ such that $\Gamma=\CC_3$; 
\item As above, we can find a rotation around an edge axis, such that $\Gamma=\CC_2$. 
\end{listerd}

We finally conclude to the formula: 
\[
[\OO]\circledcirc [\II]=\lbrace [\id],[\CC_2],[\CC_3],[\DD_3],[\TT] \rbrace
\] 
For the intersection $\Gamma=\II^0\cap \II^g$ we will have to study the case of classes $[\TT]$, $[\DD_3]$, $[\DD_5]$, $[\DD_2]$, $[\CC_3]$, $[\CC_5]$ et $[\CC_2]$: 
\begin{listerd}
\item When $\Gamma\supset \TT$ or $\Gamma\supset \DD_2$, it then contains all the three two order rotations around each base axes, which will be three edge axes of the dodecahedron. Thus we can deduce that $g$, after permutation of these axes, leaves fix three perpendicular axes, and then $g$ leaves fix $\II^0$; finally $\Gamma=\II$; 
\item There exist a rotation $g$ around an edge axis so that $\Gamma=\CC_2$. The same argument leads us to $\CC_3$ and $\CC_5$. 
\item If we take $g$ to be the two order rotation around the axis $vd_3$, we can compute that this rotation only leaves fix the axes $vd_3$, $ed_6$, $ed_8$ and $ed_{15}$, and then $\Gamma=\DD_3$. 
\item If we take $g$ to be the two order rotation around the face axis $fd_1$ we can also compute that it only leaves fix the axes $fd_1$, $ed_7$, $ed_{11}$, $ed_{12}$ and $ed_{14}$, thus $\Gamma=\DD_5$. 
\end{listerd}
%

\bibliographystyle{plain}
\bibliography{elasticity1} 

\begin{thebibliography}{10}

\bibitem{AS83}
M.~Abud and G.~Sartori.
\newblock {T}he geometry of spontaneous symmetry breaking.
\newblock {\em Ann. Physics}, 150(2):307--372, 1983.

\bibitem{Auf08}
N.~Auffray.
\newblock {D}{\'e}composition harmonique des tenseurs - {M}{\'e}thode
  spectrale.
\newblock {\em C. R. Mecanique}, 336(4):370--375, 2008.

\bibitem{Auf10}
N.~Auffray.
\newblock {A}nalytical expressions for anisotropic tensor dimension.
\newblock {\em C. R. Mecanique}, 338(5):260--265, 2010.

\bibitem{AKP12}
N.~Auffray, B.~Kolev, and M.~Petitot.
\newblock Invariant-based approach to symmetry class detection.
\newblock {\em Arch. Rational Mech. Anal.}, Submitted.

\bibitem{ALH12}
N.~Auffray, H.~Le~Quang, and Q.-C. He.
\newblock {M}atrix representations for 3d strain-gradient elasticity.
\newblock {\em Journal of the Mechanics and Physics of Solids}, submitted.

\bibitem{Bac70}
G.~Backus.
\newblock {A} geometrical picture of anisotropic elastic tensors.
\newblock {\em Rev. Geophys.}, 8(3):633--671, 1970.

\bibitem{BKO94}
J.-P. Boehler, A.~A. Kirillov, Jr., and E.~T. Onat.
\newblock On the polynomial invariants of the elasticity tensor.
\newblock {\em J. Elasticity}, 34(2):97--110, 1994.

\bibitem{BBS04}
A.~Bona, I.~Bucataru, and M.~Slawinski.
\newblock {C}haracterization of {E}lasticity-{T}ensor {S}ymmetries {U}sing
  {SU}(2).
\newblock {\em J. Elasticity}, 75(3):267--289, 2004.

\bibitem{Bre72}
G.~E. Bredon.
\newblock {\em {I}ntroduction to compact transformation groups}.
\newblock Academic Press, New York, 1972.
\newblock Pure and Applied Mathematics, Vol. 46.

\bibitem{CVC01}
P.~Chadwick, M.~Vianello, and S.~C. Cowin.
\newblock A new proof that the number of linear elastic symmetries is eight.
\newblock {\em J. Mech. Phys. Solids}, 49(11):2471--2492, 2001.

\bibitem{DSV09}
F.~dell'Isola, P.~Sciarra, and S.~Vidoli.
\newblock Generalized hooke's law for isotropic second gradient materials.
\newblock {\em Proc. R. Soc. A}, 465:2177--2196, 2009.

\bibitem{DSM12}
F.~dell'Isola, P.~Seppecher, and A.~Madeo.
\newblock How contact interactions may depend on the shape of cauchy cuts in
  n-th gradient continua: approach "à la d'alembert".
\newblock {\em Z. Angew. Mathy. Phys.}, pages 1--23, 2012.

\bibitem{For98}
S.~Forest.
\newblock Mechanics of generalized continua: Construction by homogenization.
\newblock {\em J. Phys. IV}, 8:39--48, 1998.

\bibitem{FCB11}
S.~Forest, N.~M. Cordero, and E.~P. Busso.
\newblock First vs. second gradient of strain theory for capillarity effects in
  an elastic fluid at small length scales.
\newblock {\em Comp. Mat. Sci.}, 50:1299--1304, 2011.

\bibitem{FV96}
S.~Forte and M.~Vianello.
\newblock Symmetry classes for elasticity tensors.
\newblock {\em J. Elasticity}, 43(2):81--108, 1996.

\bibitem{FV97}
S.~Forte and M.~Vianello.
\newblock {S}ymmetry classes and harmonic decomposition for photoelasticity
  tensors.
\newblock {\em Int. J. Eng. Sci.}, 35(14):1317--1326, 1997.

\bibitem{GW02}
G.~Geymonat and T.~Weller.
\newblock {S}ymmetry classes of piezoelectric solids.
\newblock {\em C. R. Mathematique}, 335(10):847--852, 2002.

\bibitem{GSS88}
M.~Golubitsky, I.~Stewart, and D.~G. Schaeffer.
\newblock {\em Singularities and groups in bifurcation theory. {V}ol. {II}},
  volume~69 of {\em Applied Mathematical Sciences}.
\newblock Springer-Verlag, New York, 1988.

\bibitem{GH11}
S.~T. Gu and Q.~T. He.
\newblock {I}nterfacial discontinuity relations for coupled multifield
  phenomena and their application to the modeling of thin interphases as
  imperfect interfaces.
\newblock {\em J. Mech. Phys. Solids}, 59(7):1413--1426, 2011.

\bibitem{Her45}
B.~Herman.
\newblock Some theorems of the theory of anisotropic media.
\newblock {\em Comptes Rendus (Doklady) Acad. Sci. URSS}, 48(2):89--92, 1945.

\bibitem{GSI84}
E.~Ihrig and M.~Golubitsky.
\newblock Pattern selection with {${\rm O}(3)$} symmetry.
\newblock {\em Phys. D}, 13(1-2):1--33, 1984.

\bibitem{JCB78}
J.~Jerphagnon, D.~Chemla, and R.~Bonneville.
\newblock The description of the physical properties of condensed matter using
  irreducible tensors.
\newblock {\em Adv. Phys.}

\bibitem{LAH+12}
H.~Le~Quang, N.~Auffray, Q.-C. He, and G.~Bonnet.
\newblock {S}ymmetry groups and symmetry classes for sixth-order tensors.
\newblock {\em Proc. R. Soc. A}, Submitted.

\bibitem{LH11}
H.~Le~Quang and Q.-C. He.
\newblock {T}he number and types of all possible rotational symmetries for
  flexoelectric tensors.
\newblock {\em Proc. R. Soc. A}, 467(2132):2369--2386, 2011.

\bibitem{LS11}
A.~Lebée and K.~Sab.
\newblock {A} bending gradient model for thick plates, part i : Theory.
\newblock {\em Int. J. Solids. Struct.}, 48:2878--2888, 2011.

\bibitem{MC90}
M.~M. Mehrabadi and Cowin~S. C.
\newblock Eigentensors of linear anisotropic elastic materials.
\newblock {\em Quarterly Jnl. of Mechanics App. Maths.}, 43:15--41, 1990.

\bibitem{Min64}
R.~D. Mindlin.
\newblock Micro-structure in linear elasticity.
\newblock {\em Arch. Ration. Mech. Analysis}, 16:51--78, 1964.

\bibitem{Min65}
R.~D. Mindlin.
\newblock {S}econd gradient of strain and surface-tension in linear elasticity.
\newblock {\em Int. J. Solids. Struct.}, 1(4):417--438, 1965.

\bibitem{OA12a}
M.~Olive and N.~Auffray.
\newblock {S}ymmetry classes for even-order tensors.
\newblock {\em Mathematics and Mechanics of Complex Systems}, page (submitted),
  2012.

\bibitem{Spe70}
A.~J.~M. Spencer.
\newblock {A} note on the decomposition of tensors into traceless symmetric
  tensors.
\newblock {\em Int. J. Eng. Sci.}, 8:475--481, 1970.

\bibitem{Ste94}
S.~Sternberg.
\newblock {\em Group theory and physics}.
\newblock Cambridge University Press, Cambridge, 1994.

\bibitem{TM06}
A.~Thionnet and Ch. Martin.
\newblock {A} new constructive method using the theory of invariants to obtain
  material behavior laws.
\newblock {\em IJSS}, 43:325--345, 2006.

\bibitem{Tin03}
T.~C.~T. Ting.
\newblock {G}eneralized {Cowin-Mehrabadi} theorems and a direct proof that the
  number of linear elastic symmetries is eight.
\newblock {\em Int. J. Solids. Struct.}, 40:7129--7142, 2003.

\bibitem{Ton75}
E.~Tonti.
\newblock {\em On the formal structure of physical theories}.
\newblock the Italian National Research Council, 1975.

\bibitem{Tou62}
R.~A. Toupin.
\newblock {E}lastic materials with couple stresses.
\newblock {\em Arch. Rat. Mech. Anal.}, 11:385--414, 1962.

\bibitem{TB96}
N.~Triantafyllidis and S.~Bardenhagen.
\newblock The influence of scale size on the stability of periodic solids and
  the role of associated higher order gradient continuum models.
\newblock {\em Journal Mech. Physics Solids}, 44:1891--1928, 1996.

\bibitem{Wel04}
T.~Weller.
\newblock {\em {E}tude des symétries et modèles de plaques en piézoélectricité
  linéarisée}.
\newblock Phd thesis, Université Montpellier II, 2004.

\bibitem{ZB94}
Q.-S. Zheng and J.-P. Boehler.
\newblock {T}he description, classification, and reality of material and
  physical symmetries.
\newblock {\em Acta Mech.}, 102(1-4):73--89, 1994.

\end{thebibliography}
\nocite{*}
\end{document}